\documentclass[11pt]{article}

\usepackage[utf8]{inputenc}
\usepackage[margin=1in]{geometry}
\usepackage{amsmath}
\usepackage{amsfonts}
\usepackage{amsthm}
\usepackage{amssymb}
\usepackage{mathtools}
\usepackage{cases}
\usepackage{subcaption}
\usepackage{relsize}
\usepackage{titling}
\usepackage{empheq}
\usepackage{slashed}
\usepackage{physics}
\usepackage{authblk}
\usepackage[style=ext-numeric, giveninits=true, doi=false,isbn=false,url=false,eprint=false,articlein=false]{biblatex}
\DeclareFieldFormat[article, inbook]{title}{#1} 
\addbibresource{bibliography.bib}
\usepackage[capitalise]{cleveref}
\usepackage[colorinlistoftodos,prependcaption,textsize=tiny]{todonotes}
\usepackage[toc,page,header]{appendix}
\makeatletter
\newtheorem*{rep@theorem}{\rep@title}
\newcommand{\newreptheorem}[2]{%
\newenvironment{rep#1}[1]{%
 \def\rep@title{#2 \ref{##1}}%
 \begin{rep@theorem}}%
 {\end{rep@theorem}}}
\makeatother
\newtheorem{lemma}{Lemma}[section]
\newtheorem{thm}{Theorem}[section]
\newtheorem{prop}{Proposition}[section]

\theoremstyle{definition}
\newtheorem{defn}{Definition}[section]
\newtheorem*{remark*}{Remark}
\newreptheorem{thm}{Theorem}
\crefname{lemma}{Lemma}{Lemmas}
\crefname{thm}{Theorem}{Theorems}
\crefname{prop}{Proposition}{Propositions}
\newcommand{\subtitle}[1]{%
  \posttitle{%
    \par\end{center}
    \begin{center}\large#1\end{center}
    \vskip0.5em}%
}
\newcommand{\s}{\tilde{s}}
\newcommand{\comment}[1]{}
\DeclareMathOperator{\artanh}{artanh}
\DeclareMathOperator{\img}{im}

\DeclareMathOperator{\supp}{supp}
\DeclareMathOperator{\realpart}{Re}
\DeclareMathOperator{\imag}{Im}
\numberwithin{equation}{section}
\setlength{\parindent}{0pt}
\title{Existence of Zero-damped Quasinormal Frequencies for Nearly Extremal Black Holes}
\author{Jason Joykutty\thanks{Department of Applied Mathematics and Theoretical Physics, University of Cambridge, Wilberforce Road, Cambridge CB3 0WA, United Kingdom. Email: \texttt{jj441@cam.ac.uk}.}}
\date{December 2021}
\setlength{\marginparwidth}{2cm}
\begin{document}
\maketitle
\begin{abstract}
    \noindent It has been observed that many spacetimes which feature a near-extremal horizon exhibit the phenomenon of zero-damped modes. This is characterised by the existence of a sequence of quasinormal frequencies which all converge to some purely imaginary number $i\alpha$ in the extremal limit and cluster in a neighbourhood of the line $\imag s=\alpha$. In this paper, we establish that this property is present for the conformal Klein-Gordon equation on a Reissner-Nordstr\"om-de Sitter background. This follows from a similar result that we prove for a class of spherically symmetric black hole spacetimes with a cosmological horizon. We also show that the phenomenon of zero-damped modes is stable to perturbations that arise through adding a potential.
\end{abstract}
\section{Introduction}
Like many physical systems, black holes emit radiation as a response to perturbations. These gravitational waves are expected to be dominated by (after a short initial burst of radiation) quasinormal modes, which are analogous to the characteristic normal modes of oscillation of idealised musical instruments. Just as a normal mode has a (purely imaginary) frequency associated with it, a quasinormal mode has a complex frequency $s$ which captures the dissipative nature of the black hole horizon\footnote{Here we use the notation of \cite{cmw} since we are considering an initial value problem, where the Laplace transform is more convenient. The Fourier transform is more commonly used in the literature with frequency $\omega=is$.} - the real part indicates exponential decay (or growth) of the oscillations, since the time dependence of these modes is of the form $e^{st}$.\\\\
Quasinormal modes were first observed by Vishveshwara \cite{Vishveshwara} from numerical simulations of gravitational waves scattering due to a Schwarzschild black hole. In this case, the scattered waveform at intermediate to late times consisted of a damped sinusoid with frequency unrelated to the parameters of the incident wavepacket. This is a phenomenon typical in black hole dynamics: detected gravitational wave signals have been dominated by a superposition of decaying sinusoids in what is called the ringdown phase \cite{PhysRevLett.116.221101, PhysRevLett.125.101102}. The expected behaviour of the signal at late times is dependent on the asymptotics of the background spacetime (see \cite{Ching_1995}) - for example, we expect this ringdown behaviour to continue if the spacetime has an asymptotically de Sitter end (for rigorous results, see for example \cite{hintz2021kds,mavrogiannis,petersenvasykds}), however we expect inverse polynomial decay in time if it is asymptotically flat (this is Price's law - for examples of rigorous results on this, see \cite{aagkerr, aagrn, Hintzprice, yakovritakerr}).\\\\
The spectrum of quasinormal frequencies of an astrophysical black hole (one whose charge is small compared to its mass) is distinct from that of other sources of gravitational radiation and is characterised completely by its parameters: mass and angular momentum. Thus the spectrum gives us predictions to compare with gravitational wave observations, serving as further tests for the existence of black holes and general relativity \cite{PhysRevD.103.122002, PhysRevD.98.084038, spec}. Given a radiation source we are confident is a black hole, we can also use these predictions as a test of the no-hair theorems \cite{spec, nohair}. For recent work which analyses observational data from LIGO-Virgo and compares this to prediction, see \cite{PhysRevD.103.122002, PhysRevD.98.084038, PhysRevD.103.124041}.\\\\
\comment{Whilst a black hole perturbation can be modelled by a metric perturbation obeying the linearised Einstein equations, it suffices to consider the wave equation with a potential for spherically symmetric background spacetimes. In summary, the perturbation $h_{ab}$ can be decomposed into scalar, vector and tensor spherical harmonics according to how different combinations of the components transform under rotations: $h_{tt}, h_{tr}, h_{rr}$ transform like scalars, $(h_{t\theta},h_{t\phi}),(h_{r\theta},h_{r\phi})$ like vectors and the remaining four components not determined by symmetry like a $2\times 2$ tensor where $t$ is the usual static time, $r$ the radial coordinate and $(\theta,\phi)$ coordinates on $S^2$. These can be further split into two parities depending on how they transform under a point reflection: axial and polar perturbations. By choosing a suitable gauge and noticing redundancies in the equations, the information from the axial perturbations can be encoded in the Regge-Wheeler equation - a one-dimensional wave equation first studied in \cite{ReggeWheeler}. Similarly (though involving more complicated calculations) the polar perturbations can be reduced to a one-dimensional wave equation with a different potential called the Zerilli equation, first studied in \cite{zerilli}. For a more detailed discussion of these cases, consult \cite{ReggeWheeler, zerilli} or the review \cite{Nollert_1999}. \\\\}
Zero-damped modes are a family of quasinormal modes characterised by frequencies which cluster on a line of constant imaginary part and converge to a pure imaginary number as a horizon approaches extremality. They have been observed when computing the spectrum of quasinormal frequencies in a variety of situations: for example with various fields in Reissner-Nordstr\"om \cite{hod6, hod5, hod7, pzimmerman}, in Kerr \cite{hod3, hod4, hod1, azimmerman} and in Kerr-Newman \cite{santos2015, hod2}. The existence of these frequencies has implications on the decay of perturbations for spacetimes with asymptotically de Sitter ends and thus is tied to the strong cosmic censorship conjecture for solutions to the Einstein equations with cosmological horizons. This relationship was explored for Reissner-Nordstr\"om-de Sitter black holes in \cite{destounis1, destounis2, destounis3, destounis5}. These examples prompt the question of what useful statements can be made about the generic behaviour of quasinormal frequencies as a horizon approaches extremality (in the sense that the surface gravity $\kappa\rightarrow 0$). In particular, one can ask if the phenomenon of zero damped modes is generic.\\\\
In this paper, we initially consider a conformally-coupled Klein-Gordon field propagating in a de Sitter background of cosmological constant $3\kappa^2$ with a potential $V$:
\begin{align}
    -\Box_g\psi+2\kappa^2\psi+V\psi=0.\label{CKG}
\end{align}
This is a simplified model of the linearised Einstein equations which avoids the tensorial structure, but the constructions we shall consider are generalisable to other test fields of interest. Furthermore, the full tensorial equations for the linearised Einstein equations can be reduced to a set of wave equations with different potentials depending on the type of perturbation (see \cite{ReggeWheeler, zerilli} or the review \cite{Nollert_1999} for the Schwarzschild case). It can be seen with relatively simple arguments (see \cref{reg}) that when $V=0$, equation \eqref{CKG}  exhibits zero-damped quasinormal frequencies $s\in -\kappa \mathbb{N}$.\\\\
The first result of this paper (\cref{mainresultthm}) is that \eqref{CKG} exhibits the phenomenon of zero-damped frequencies for $V\in C^{\infty}(\mathbb{R}^3;\mathbb{R})$ decaying sufficiently rapidly i.e. this phenomenon is stable to suitable perturbations. The condition on the potential is essentially that $V$ and its derivatives decay faster than an inverse square potential, since these potentials can be treated as ``small" in the extremal limit. More precisely we require
\begin{align*}
    |\mathbf{x}|^{|\alpha|+2}\partial^{\alpha}V(\mathbf{x})\rightarrow 0 \quad \text{as} \quad |\mathbf{x}|\rightarrow 0
\end{align*}
for all multi-indices $\alpha$. We also find that if we impose slightly better decay (at least as fast as $O(1/|\mathbf{x}|^3)$) the size of the perturbation of these frequencies from the unperturbed ones are of $O(\kappa^2)$ where $\kappa$ is the surface gravity of the horizon. For $V$ spherically symmetric, we can also obtain an expansion in powers of $\kappa$ for the perturbed frequencies (see \cref{seriesexpansion}).\\\\
The main result (\cref{mainresultthmblackhole}) is to extend the above methods to a generic spherically symmetric black hole spacetime with an asymptotically de Sitter end. We again consider the conformal Klein-Gordon equation but this time on a spacetime with a metric of the form
\begin{align}
    g=-f(r)dt^2+\frac{dr^2}{f(r)}+r^2g_{S^2}
\end{align}
where $f(r)=1+w_{\Lambda}(r)+\Lambda\alpha_{\Lambda}r/3-\Lambda r^2/3$ and $\Lambda$ is the cosmological constant of the spacetime. The equation we analyse is
\begin{align}
    -\Box_g\psi+\frac{R}{6}\psi=0\label{ConformalKleinGordon}
\end{align}
where $R$ is the Ricci scalar of the spacetime. Under suitable assumptions on the $w_\Lambda$, we have both a non-extremal event horizon and a cosmological horizon, and we consider the equation in the region bounded by them. The argument in this case is similar, as for $\Lambda$ sufficiently small, the event horizon can be thought of as ``far" from the cosmological horizon and to have a small effect on quasinormal modes supported there. The frequencies approximated by $-\sqrt{\Lambda/3}\mathbb{N}$ can be thought of as the de Sitter modes of \cite{hintz2021quasinormalsds}, where this phenomenon is discussed for the Schwarzschild-de Sitter black hole.\\\\
In \cite{hintz2021quasinormalsds}, the authors use the spherical symmetry of the spacetime to project the Klein-Gordon equation onto fixed angular momenta and proceed to obtain uniform estimates on the resulting families of ordinary differential equations in suitable weighted spaces. By further using the dual resonant states \cite{hintz2021quasinormal} (or co-modes) of the problem, they set up a Grushin problem to obtain results on the convergence of the frequencies and modes to those of de Sitter in the limit as the black hole mass goes to 0. In this limit, the high frequency modes described in \cite{sabarretozworski} leave any bounded set, so one can show that these zero-damped modes are the only ones present. The results of \cite{hintz2021quasinormalsds} were recently extended to Kerr-de Sitter spacetimes in \cite{hintz2021kds} using more powerful techniques. In this paper, we only show existence of a sequence of zero-damped modes using Gohberg-Sigal theory and the concentration of the co-modes on the cosmological horizon, however we do not require restriction to fixed angular momenta and we consider a more general class of spherically symmetric spacetimes.\\\\
The final result is an application of the above to a Reissner-Nordstr\"om-de Sitter black hole. We consider the conformal Klein-Gordon equation again and take the extremal limit for the event horizon instead of the cosmological one. We make use of a conformal transformation to swap the event and cosmological horizons and simply apply the above result to obtain the following:
\begin{repthm}{thmrnds}[Rough version]
Consider the conformal Klein-Gordon equation on a Reissner-Nordstr\"om-de Sitter black hole background:
\begin{align*}
    -\Box_g\psi+\frac{R}{6}\psi=0.
\end{align*}
In the extremal limit where the event and Cauchy horizons coalesce, this equation exhibits the phenomenon of zero-damped quasinormal frequencies.
\end{repthm}
This paper is organised as follows. In \cref{review}, we review the definition of quasinormal modes and apply this to \eqref{CKG} with $V=0$. We can compute explicitly the spectrum and modes in this case, demonstrating the existence of zero-damped modes for this equation. We then review some of the literature which discusses this phenomenon.\\\\
In \cref{potentialstability}, we consider the effect of various classes of potential as perturbations to the equation. We begin by considering the inverse square potential (where we can compute the frequencies and modes explicitly) and establish existence of zero-damped frequencies in this case. We proceed similarly for constant potentials and then use the idea of co-modes to prove existence for compactly supported ones (this is \cref{potentialcomodes}). Finally we begin to establish the result for perturbations due to a potential using Gohberg-Sigal theory. We find that for spherically symmetric potentials we can obtain an expansion for the frequencies in terms of the size of the perturbation (\cref{seriesexpansion} combined with \cref{extrlmt}) and finally that for suitably decaying potentials, we still have the phenomenon of zero-damped modes in the extremal limit (\cref{mainresultthm}).\\\\
In \cref{blackhole}, we apply the above methods to a spherically symmetric black hole spacetime with suitable decay conditions on the metric components (\cref{mainresultthmblackhole}). The analysis is very similar, however there are subtleties pertaining to comparing operators defined on different domains and the estimates take a little more work to obtain. Finally we apply this method to the conformal Klein Gordon equation on a Reissner-Nordstr\"om-de Sitter black hole background to prove \cref{thmrnds}.\\\\
Throughout, we shall use the metric sign convention $(-,+,+,+)$ and use geometric units where $G=c=1$.
\section{Quasinormal modes in de Sitter}\label{review}
Quasinormal modes are similar to normal modes in that they demonstrate that there is a period of time in which certain characteristic modes of oscillation dominate the evolution of some perturbation. They also exhibit key differences: they are exponentially damped and appear only in a limited time interval (in the asymptotically flat case) whereas normal modes can propagate from arbitrarily early to arbitrarily late times. In the case of a system with normal modes, one usually imposes boundary conditions which make a suitable operator self-adjoint on a function space of interest and exploits this to get a discrete spectrum with an orthogonal eigenbasis of the function space. The situation is more complicated when considering systems where quasinormal modes arise and there are multiple approaches to defining them. We shall outline two of these in the following sections.
\subsection{The traditional approach}
			When defining quasinormal modes, the coordinates used are traditionally with respect to a static slicing which leads to an equation involving a self-adjoint operator. We work on the static patch of de Sitter so the spacetime manifold is $[0,\infty)\times B_{1/\kappa}$, where $B_{1/\kappa}:=\{x\in\mathbb{R}^3\ |\ \norm{x}< 1/\kappa \}$ is the open ball of radius $1/\kappa$, and we use coordinates $(t,r,\theta,\phi)$ so the metric is
			\begin{align}
			    g=-F(\kappa r)dt^2+\frac{dr^2}{F(\kappa r)}+r^2g_{S^2}\label{metric1}
			\end{align}
			where $F(r)=1-r^2$ and $g_{S^2}$ is the metric on the unit sphere. We shall consider the initial value problem
		\begin{align}
		    \Box_g \psi -2\kappa^2\psi=0, \quad \psi|_{t=0}=\psi_0, \quad \partial_t\psi|_{t=0}=\psi_1,\label{CKG1}
		\end{align}
		where $\Box_g$ is the wave operator and $\kappa$ is the surface gravity of the cosmological horizon. This is related to the cosmological constant of the spacetime: $\Lambda=3\kappa^2$. We introduce the tortoise coordinate $r_*$ defined by the equation $\kappa r_*=\artanh(r/\kappa)$ which, after dividing through by $\cosh^2(r_*)$, gives the equation
\begin{align}
		    -\frac{\partial^2 \psi}{\partial t^2}+\frac{\partial^2 \psi}{\partial r_*{}^2}+\frac{4}{\sinh(2r_*)}\frac{\partial \psi}{\partial r_*}+\frac{\slashed{\Delta}\psi}{\sinh^2(r_*)}-\frac{2}{\cosh^2(r_*)}\psi
		    =0.\label{CKGstatic1}
		\end{align}
Here $\slashed{\Delta}$ is the Laplace-Beltrami operator on the unit sphere, $S^2$. We can then define $\Psi=\psi\tanh (r_*)$ and decompose into spherical harmonics to simplify the problem to a two dimensional wave equation for each angular momentum sector:
		\begin{align}
		    -\frac{\partial^2 \Psi_{lm}}{\partial t^2}+\frac{\partial^2 \Psi_{lm}}{\partial r_*{}^2}+V_{lm}(r_*)\Psi_{lm}
		    =0\label{CKGstatic2}
		\end{align}
		where $V_{lm}(r_*)=-l(l+1)/\sinh^2r_*$.
We see that this is of the form
\begin{align*}
    \left(-\frac{\partial^2}{\partial t^2}+H\right)\Psi_{lm}=0
\end{align*}
for some $H$, a self-adjoint operator between suitably chosen spaces. We then Laplace transform in time to obtain the equation
\begin{align}
    (H-s^2)\hat{\Psi}_{lm}=f,\label{laplace}
\end{align}
where $f$ is constructed from the initial data. If we can solve the above equation in some region $\realpart(s)>c_0$, then we can use the Bromwich inversion formula to solve the full initial value problem \eqref{CKG1}. To do so, we may attempt to construct a Green's function. We see from \eqref{CKGstatic2} that the potential has exponential decay, so we would expect smooth solutions which obey
\begin{align*}
    \hat{\Psi}_{lm}^{\pm} \sim e^{\pm s r_*} \quad \text{as}\quad r_*\rightarrow \infty.
\end{align*}
Assuming $\realpart s>0$ for the moment, we see one of these is decaying. Considering a Taylor series about $r_*=0$, we also expect a solution which obeys $\hat{\Psi}_{lm}^0(r_*)=O(r_*^{l+1})$ as $r_*\rightarrow 0$. Now suppose these are not linearly independent for some $s\in\mathbb{C}$. Then $\hat{\Psi}_{lm}^0\in L^2(0,\infty)$ since it decays rapidly and is well behaved at 0. So we can compute
\begin{align*}
    \realpart\int_0^\infty -(\overline{s\hat{\Psi}_{lm}^0})\left(H-s^2\right)\hat{\Psi}_{lm}^0 dr_*=\realpart(s)\int_0^{\infty}\left(|s|^2|\hat{\Psi}_{lm}^0|^2-\overline{\hat{\Psi}_{lm}^0}H\hat{\Psi}_{lm}^0\right)dr_*.
\end{align*}
Noting that $H$ is self-adjoint so the second term is real. Since $\hat{\Psi}_{lm}^0$ obeys the equation, the above quantity vanishes. However, if $\realpart(s)>\norm{H}_{H^2\rightarrow L^2}>0$ (the operator norm of $H: H^2\rightarrow L^2$), we see
\begin{align*}
    0=\realpart(s)\int_0^{\infty}\left(|s|^2|\hat{\Psi}_{lm}^0|^2-\overline{\hat{\Psi}_{lm}^0}H\hat{\Psi}_{lm}^0\right)dr_*\ge \realpart(s)\left(|s|^2-\norm{H}_{H^2\rightarrow L^2}\right)\norm{\hat{\Psi}_{lm}^0}_{L^2}>0,
\end{align*}
a contradiction. So there exists a constant $c_0$ such that for $\realpart(s)>c_0$, the solutions outlined above are linearly independent and we can construct a Green's function
    \begin{align*}
		    G(r_*,\xi; s)=
		    \begin{cases}
		    \frac{\hat{\Psi}_{lm}^0(r_*)\hat{\Psi}_{lm}^{-}(\xi)}{W(s)} & 0\le r_*<\xi<\infty\\
		    \frac{\hat{\Psi}_{lm}^0(\xi)\hat{\Psi}_{lm}^{-}(r_*)}{W(s)} & 0\le\xi<r_*<\infty
		    \end{cases},
	\end{align*}
	where $W(s):=\partial_{r_*}\hat{\Psi}_{lm}^0\hat{\Psi}_{lm}^{-}-\hat{\Psi}_{lm}^0\partial_{r_*}\hat{\Psi}_{lm}^{-}$ is the Wronskian (which is independent of $r_*$ in this case). It can be shown that the Green's function is holomorphic in $s$ for this region (this will follow from subsequent arguments in this paper) and moreover we can analytically continue to a larger domain in $\mathbb{C}$ by noting that the only obstruction to defining the above are the places where $W(s)=0$. The values of $s$ for which this holds are isolated as $W$ is also holomorphic (it is constructed from solutions to an equation which depends analytically on $s$) and we can define these to be quasinormal frequencies. We can define quasinormal modes as non-trivial solutions $\hat{\Psi}_{lm}^0$ to \eqref{laplace} with $f=0$ obeying the boundary conditions $\hat{\Psi}_{lm}^0(0)=0$ and $\hat{\Psi}_{lm}^0(r_*)\sim e^{-s r_*}$ as $r_*\rightarrow 0$. Note that for $\realpart s <0$, this asymptotic boundary condition is ambiguous but can be overcome through the method of complex scaling \cite{sabarretozworski}.\\\\
	We can construct a solution to the original equation given a quasinormal frequency $s$ and the corresponding mode $\hat{\Psi}_{lm}^0$ by setting
	\begin{align*}
	    \Psi_{lm}(t,r_*)=e^{st}\hat{\Psi}_{lm}^0(r_*)\sim e^{s(t-r_*)} \quad \text{as} \quad r_*\rightarrow \infty.
	\end{align*}
	This is asymptotic to a right-moving wave, so the boundary conditions imposed on $\hat{\Psi}_{lm}^0$ are usually called `outgoing' boundary conditions.
\subsection{Regularity quasinormal modes}\label{reg}
The discussion above can be made rigorous with the method of complex scaling (see \cite{sabarretozworski}). However, an approach based on this time foliation for which energy is conserved makes it harder to discuss dispersive decay. A null or hyperboloidal slicing of the spacetime would build the idea of dispersion into the problem since energy naturally radiates through the horizons in this case. This can be seen using the results of \cite{redshift} for asymptotically de Sitter spacetimes and \cite{rp} for ones which are asymptotically flat. The limitations of the traditional definition were first overcome through microlocal methods by Vasy in \cite{vasy}, however one can also use physical space methods to recover many of his results. This was outlined by Warnick in \cite{cmw}, which also contains a detailed comparison with the traditional approach. We shall apply these physical space methods to the specific case of the Klein-Gordon equation in de Sitter. In order to do so, we need to use coordinates regular at the horizon. The following change of co-ordinates gives a space-like foliation of the spacetime (see \cref{fig:slicing}):
\begin{align*}
		    \tau&=t+\frac{1}{2\kappa}\log(1-\kappa^2r^2),\\
		    \rho&=\kappa r.
		\end{align*}
		Using the co-ordinate $\tau$ gives a foliation where the leaves intersect the horizon and are regular there (see \cref{fig:slicing})
		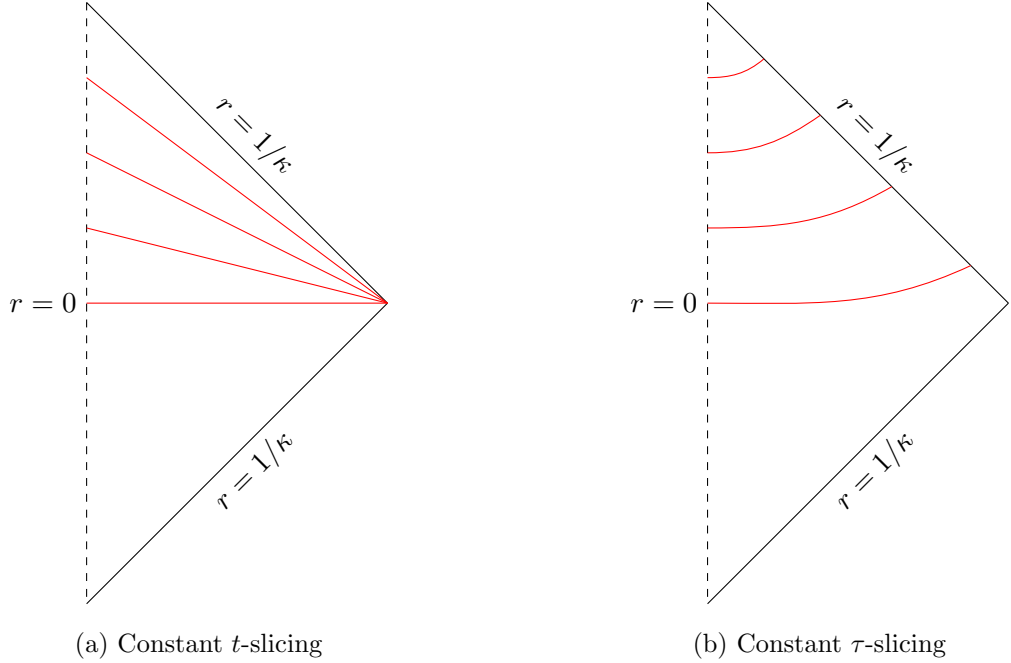
\begin{figure}
    \centering
    \begin{subfigure}[b]{0.5\textwidth}
    \centering
    \begin{tikzpicture}
\node (A) at ( 2,0){};
\node (B) at (-2,4){};
\node (C) at (-2,-4){};
\draw (A.center)--node[midway, above, sloped] {$r=1/\kappa$} (B.center);
\draw (A.center)--node[midway, below, sloped] {$r=1/\kappa$}(C.center);
\draw [dashed] (B.center)--node[midway,left] {$r=0$} (C.center);
\draw [red] (A.center)--(-2,0);
\draw [red] (A.center)--(-2,1);
\draw [red] (A.center)--(-2,2);
\draw [red] (A.center)--(-2,3);
\end{tikzpicture}
\caption{Constant $t$-slicing}
    \end{subfigure}%
    \begin{subfigure}[b]{0.5\textwidth}
    \centering
    \begin{tikzpicture}
\node (A) at ( 2,0){};
\node (B) at (-2,4){};
\node (C) at (-2,-4){};
\draw (A.center)--node[midway, above, sloped] {$r=1/\kappa$} (B.center);
\draw (A.center)--node[midway, below, sloped] {$r=1/\kappa$}(C.center);
\draw [dashed] (B.center)--node[midway,left] {$r=0$} (C.center);
\draw [red] (-2,0) to [out=0, in=205] (1.5,0.5);
\draw [red] (-2,1) to [out=0,in=210] (0.45,1.55);
\draw [red] (-2,2) to [out=0,in=215] (-0.5,2.5);
\draw [red] (-2,3) to [out=0,in=220] (-1.25,3.25);
\end{tikzpicture}
\caption{Constant $\tau$-slicing}
    \end{subfigure}
    \caption{Penrose diagrams of the static patch of de Sitter depicting (a) the static slicing and (b) the hyperboloidal slicing}
    \label{fig:slicing}
\end{figure}
		After this change of coordinates, \eqref{metric1} becomes
		\begin{align}
		    g_\kappa=-F(\rho)d\tau^2-\frac{2\rho}{\kappa}d\tau d\rho+\frac{1}{\kappa^2}\left(d\rho^2+\rho^2g_{S^2}\right)
		\end{align}
		and can be extended to the boundary of the open ball so it is defined on $[0,\infty)\times \overline{B_1}$. We further change from polar coordinates on $\overline{B_1}$ to Cartesian ones $\{x_i\}_{i=3}^3$ such that $\sum_{i=1}^3(x_i)^2=\rho^2$. We can write \eqref{CKG1} as:
		\begin{align}
		    -\kappa^2 \sum_{i=1}^3\sum_{j=1}^3a_{ij}\partial_i\partial_j\psi+ 4\kappa^2 \sum_{i=1}^3x_i\partial_i\psi+2\kappa^2\psi+2\kappa \sum_{i=1}^3x_i\partial_i\partial_\tau\psi+3\kappa\partial_\tau\psi+\partial_\tau^2\psi=0, \label{CKG2}
		\end{align}
		where $a_{ij}=\delta_{ij}-x_ix_j$. We impose initial conditions at $\tau=0$ now instead: $\psi(0,\textbf{x})=\psi_0(\textbf{x})$, $\partial_\tau \psi(0,\textbf{x})=\psi_1(\textbf{x})$. Setting $(u,v)=(\psi,\partial_\tau\psi)$, we can recast the problem to the following form:
\begin{align}
	\frac{\partial}{\partial\tau}\begin{pmatrix}u\\v\end{pmatrix}=\begin{pmatrix}0&1\\ -\kappa^2L_0 & -\kappa P\end{pmatrix}\begin{pmatrix}u\\v\end{pmatrix}, \quad \left.\begin{pmatrix}u\\v\end{pmatrix}\right|_{\tau=0}=\begin{pmatrix}\psi_0\\\psi_1\end{pmatrix},\label{hyperbolic}
\end{align}
where
\begin{align}
	L_{0}u&:=-\sum_{i=1}^3\sum_{j=1}^3a_{ij}\partial_i\partial_ju+ 4\sum_{i=1}^3x_i\partial_iu+2u,\label{defineL}\\
	Pu&:=\sum_{i=1}^3x_i\partial_iu+\frac{3}{2} u.
\end{align}
By the theory developed in \cite{cmw}, we have the following result:
\begin{prop}\label{semigroup}
	For each $\tau\ge 0$, define the operator $\mathcal{S}(\tau): H^k(B_1)\times H^{k-1}(B_1)\rightarrow H^k(B_1)\times H^{k-1}(B_1)$ such that
	\begin{align*}
		\mathcal{S}(\tau)\begin{pmatrix}\psi_0\\\psi_1\end{pmatrix}=\begin{pmatrix}u\\v\end{pmatrix}(\tau),
	\end{align*}
	where $(u,v)$ is the unique solution to \eqref{hyperbolic}. Then the family of operators $\{\mathcal{S}(\tau)\}_{\tau\ge0}$ forms a $C^0$-semigroup acting on $H^k(B_1)\times H^{k-1}(B_1)$. The infinitesimal generator of $\mathcal{S}$ is the closed, densely defined operator $\mathcal{A}: D^k(\mathcal{A})\rightarrow H^k(B_1)\times H^{k-1}(B_1)$ given by
	\begin{align*}
		\mathcal{A}=\begin{pmatrix}0&1\\ -\kappa^2L_0 & -\kappa P\end{pmatrix},
	\end{align*}
	where $D^k(\mathcal{A})=\{\psi\in H^k(B_1)\times H^{k-1}(B_1) \mid \mathcal{A}\psi \in H^k(B_1)\times H^{k-1}(B_1)\}$. The resolvent $(\mathcal{A}-s)^{-1}: H^k(B_1)\times H^{k-1}(B_1) \rightarrow H^k(B_1)\times H^{k-1}(B_1)$ is well-defined and holomorphic on $\realpart(s)>C$ for some real constant $C$.
\end{prop}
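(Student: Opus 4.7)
The plan is to directly invoke the abstract framework of Warnick \cite{cmw}, which is tailored to wave equations on spacetimes with horizons of precisely this type. What needs to be verified is that the geometric structure of \eqref{hyperbolic} fits the hypotheses. The key feature is that in the new coordinates the cosmological horizon corresponds to the boundary $\partial B_1 = \{|x| = 1\}$, and the matrix $a_{ij} = \delta_{ij} - x_ix_j$ has $x$ as a null eigenvector there. Thus the principal part of $L_0$ degenerates in the radial direction at the boundary, which is the analytic manifestation of the boundary being outflow characteristic: the leaves of the $\tau$-foliation meet the horizon transversally, while the horizon itself is null and the bicharacteristic flow exits through it, so no boundary condition should be imposed at $|x|=1$.

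The first substantive step is a higher-order energy estimate. Multiplying \eqref{CKG2} by $\partial_\tau \bar\psi$ and integrating by parts over the cylinder $[0,\tau] \times B_1$ yields, modulo lower order terms, the degenerate first-order energy $\int_{B_1}(a^{ij}\partial_i u\partial_j u + |u|^2 + |v|^2)d^3x$. The boundary contribution at $|x|=1$ is controlled by the mixed term $-2\kappa \sum_i x_i\partial_i\partial_\tau\psi$ in \eqref{CKG2}, which carries a favourable sign set by the positive surface gravity; this is the red-shift effect \cite{redshift}. Commuting spatial derivatives $\partial^\beta$ through the equation produces only lower order commutators (since the principal coefficients $a_{ij}$ are smooth on $\overline{B_1}$), and Gronwall inductively in $|\beta| \leq k$ yields $E^k[\psi](\tau) \leq e^{C\tau} E^k[\psi](0)$ with the natural $H^k \times H^{k-1}$-based energy.

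Together with a standard existence argument (Galerkin approximation, or a viscous regularisation that restores ellipticity up to the boundary), this estimate gives well-posedness of \eqref{hyperbolic} and defines the operator family $\mathcal{S}(\tau)$. Uniqueness plus the exponential bound yield the semigroup identity and strong continuity. Closedness of $\mathcal{A}$ is immediate from its definition as the maximal realisation on $H^k(B_1) \times H^{k-1}(B_1)$, and density of $D^k(\mathcal{A})$ follows since $C^\infty(\overline{B_1}) \times C^\infty(\overline{B_1})$ lies inside the domain. The infinitesimal generator is identified as $\mathcal{A}$ by differentiating at $\tau = 0$ on this dense subspace. Finally, Hille--Yosida applies: for $\realpart(s) > C$ the Laplace transform $(\mathcal{A} - s)^{-1} = -\int_0^\infty e^{-s\tau}\mathcal{S}(\tau)\, d\tau$ converges absolutely in operator norm and defines a bounded holomorphic inverse of $\mathcal{A} - s$.

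The main obstacle is the degeneration of $L_0$ at $\partial B_1$, which prevents a naive elliptic-parabolic treatment and forces the boundary terms in the energy argument to be handled via a red-shift multiplier rather than a standard non-degenerate timelike vector field. This mechanism, packaged as a black box in \cite{cmw}, is what makes the whole scheme go through; once it is invoked, the remaining steps reduce to essentially standard semigroup theory.
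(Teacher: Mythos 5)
Your proposal is correct and follows essentially the same route as the paper, which simply invokes the framework of \cite{cmw} for this proposition without further argument; your sketch accurately fills in what that citation entails (degeneration of $a_{ij}$ in the radial direction at $\partial B_1$, the red-shift multiplier controlling the boundary terms, higher-order energy estimates, and standard semigroup theory plus the Laplace-transform representation of the resolvent). No discrepancies with the paper's treatment.
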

Note $H^k(B_1)$ denotes the usual Sobolev spaces and we take the case $k=0$ to mean $L^2(B_1)$. We see that eigenvectors of $\mathcal{A}$ must have components obeying:
\begin{align*}
	su&=v,\\
	sv&=-\kappa^2 L_0u-\kappa Pv,
\end{align*}
which means they are of the form
\begin{align}
	\mathbf{u}=\begin{pmatrix}u\\ su\end{pmatrix},\label{qnmvector}
\end{align}
where $\kappa^2L_0u+2s\kappa Pu+s^2u=0$. We define the densely defined operator:
\begin{align*}
	L_s:=L_0+2sP+s^2
\end{align*}
and note that there is a one-to-one correspondence between eigenvectors of $\mathcal{A}$ and solutions to $L_{s/\kappa}u=0$ with $u\in D^k(L_{s/\kappa})$. While the dependence on $s$ of this second problem is slightly more complicated, it is an elliptic problem (away from the cosmological horizon where ellipticity degenerates) and we can apply Fredholm theory to it.\\\\
For $\realpart s>1/2-k$, the $L_s$ form a holomorphic family of Fredholm operators $D^k(L_s)\rightarrow H^{k-1}(B_1)$ where we define $D^k(L_s)$ to be the domain of $L_s$ i.e. the closure with respect to the graph norm of
\begin{align}
	\left\{u\in C^{\infty}(B_1)\,\middle\vert\, \norm{u}_{D^k}:=\norm{u}_{H^{k-1}}+\norm{L_su}_{H^{k-1}}<\infty\right\}.
\end{align}
Note that for the range of values of $s$ we are considering, this set is independent of the value of $s$ and is in fact a subset of $H^k(B_1)$. We can use the redshift effect \cite{redshift} at the horizon to establish the following result.
\begin{prop}\label{Fredholm}
	Let $f\in H^{k-1}(B_1)$ and $\real(s)>1/2-k$. If $L_su=f$,
	we have either:
	\begin{enumerate}
		\item[(i)] there exists a unique solution to $L_su=f$ where $u\in H^k(B_1)$
		\item[(ii)] there exists a finite dimensional space of solutions $v\in C^{\infty}(\overline{B_1})$ to $L_sv=0$. Moreover this can only occur at isolated values of $s$.
	\end{enumerate}
\end{prop}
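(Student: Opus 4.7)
The plan is to establish the Fredholm property of $L_s : D^k(L_s) \to H^{k-1}(B_1)$ via a quantitative estimate of the form
\begin{align*}
\|u\|_{D^k} \le C\bigl(\|L_s u\|_{H^{k-1}} + \|u\|_{L^2(K)}\bigr),
\end{align*}
where $K\Subset B_1$ is some compactly contained set, and then combine this with the resolvent bound from \cref{semigroup} and the analytic Fredholm theorem. The operator $L_s$ fails to be classically elliptic at $\rho=1$, so the estimate must be obtained in two regimes. On the interior, where $|\mathbf{x}|\le 1-\delta$ for some small $\delta>0$, standard elliptic regularity for the uniformly elliptic principal part $-a_{ij}\partial_i\partial_j$ gives the required $H^k$ control in terms of $L_s u$ and lower-order norms of $u$. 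Near the horizon $|\mathbf{x}|=1$ the characteristic direction degenerates but the lower-order terms have the correct sign: the coefficient of $\sum x_i\partial_i$ in $L_0+2sP$ is $4+2s$, which at high regularity plays the role of the red-shift and, together with commutator estimates with $k$ copies of the null generator, yields an $H^k$ estimate at the horizon provided $\realpart(4+2s) - (2k-1) > 0$, equivalently $\realpart s > 1/2-k$. This is exactly the condition quoted, and it is precisely the red-shift threshold identified by Warnick in \cite{cmw}.

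Granted this estimate, a standard argument shows that $L_s$ has finite-dimensional kernel and closed range. To obtain that the cokernel is also finite dimensional, one would derive the analogous estimate for the formal adjoint $L_s^*$ (which differs from $L_s$ only in the sign of the red-shift term and a shift of $s$, and for which the same range $\realpart s > 1/2-k$ again suffices when suitable weights are chosen). Together these give that $L_s$ is a Fredholm operator of index $0$; since $L_s$ depends polynomially, hence holomorphically, on $s$, the family $\{L_s\}$ is a holomorphic Fredholm family on the half-plane $\realpart s > 1/2-k$.

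By \cref{semigroup}, the resolvent $(\mathcal{A}-s)^{-1}$ exists and is holomorphic on some right half-plane $\realpart s > C$; translating this into the eigenvalue equation via \eqref{qnmvector}, $L_{s/\kappa}$ is an isomorphism there, so the holomorphic family $L_s$ is invertible at at least one point. The analytic Fredholm theorem then implies that $L_s^{-1}$ exists and is meromorphic on $\realpart s > 1/2-k$, with poles at an isolated set of values at which $L_s$ has a finite-dimensional kernel; this gives the dichotomy in the statement. Finally, to upgrade the kernel elements $v$ from $D^k(L_s)\subset H^k(B_1)$ to $C^\infty(\overline{B_1})$, one iteratively applies the $H^m$ estimate above for all $m\ge k$ to $v$, which satisfies $L_s v=0$; the red-shift threshold $\realpart s > 1/2-m$ is automatically more permissive as $m$ increases, so no further assumption on $s$ is needed. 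The main obstacle throughout is the delicate horizon estimate: ellipticity fails there, so one cannot rely on classical pseudodifferential theory, and control at each regularity level must be extracted from carefully weighted commutator identities exploiting the positive surface gravity, following the physical-space framework of \cite{cmw}.
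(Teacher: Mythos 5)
Your overall architecture---interior elliptic estimates plus a redshift estimate at the degenerate boundary giving a bound of the form $\|u\|_{D^k}\le C\bigl(\|L_su\|_{H^{k-1}}+\|u\|_{L^2}\bigr)$, then the analytic Fredholm theorem seeded by invertibility in a right half-plane via \cref{semigroup}, and finally bootstrapping kernel elements to $C^{\infty}(\overline{B_1})$ by iterating the estimate at higher regularity where the threshold $1/2-m$ is automatically satisfied---is exactly the route the paper takes, which is to say it defers to the framework of \cite{cmw}. Two points need repair.

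The genuine gap is the passage from semi-Fredholm to index zero. You propose to prove the analogous estimate for the formal adjoint $L_s^*$ ``in the same range'' $\realpart s>1/2-k$. This would fail: the adjoint of a redshift operator sees a blueshift at the horizon, and there is no $H^k$ estimate for $L_s^{\dagger}$ in positive-order Sobolev spaces. Indeed the paper's own Section 2.3 exhibits elements of $\ker L_s^{\dagger}$ as distributions $\tilde{D}^k\Theta_{00}\in H^{-k}$ concentrated \emph{on} the horizon---the opposite of regularity---so no choice of weights in positive-order spaces rescues the claim. The standard fix, and the one in \cite{cmw}, is to show that $L_s+\mu$ is invertible for real $\mu$ sufficiently large (Lax--Milgram, with coercivity supplied by the redshift term and interior ellipticity), and then write $L_s=(L_s+\mu)-\mu I$: since $H^k(B_1)\hookrightarrow H^{k-1}(B_1)$ is compact, $L_s$ is a compact perturbation of an invertible operator and hence Fredholm of index zero. (Alternatively, local constancy of the semi-Fredholm index along the connected holomorphic family, combined with invertibility for large $\realpart s$, would also close this step.)

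A smaller arithmetic point: the displayed threshold $\realpart(4+2s)-(2k-1)>0$ is not equivalent to $\realpart s>1/2-k$; it reads $\realpart s>k-5/2$ and becomes \emph{more} restrictive as $k$ grows, whereas commuting with additional derivatives must relax the condition. The mechanism you describe is correct---each commutation shifts the effective spectral parameter by one unit of surface gravity---but the bookkeeping should land on a condition of the form $\realpart s+k-1/2>0$.
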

The result follows from the theory established in \cite{cmw}: once estimates corresponding to the redshift effect and the time-like isometry have been established, we are left with a Fredholm problem. The upshot is that given $f\in H^{k-1}(B_1)$, we have a family of solutions $u(s)\in H^k(B_1)$ which are meromorphic in $s$ for $\realpart(s)>1/2-k$ and that the locations of the poles of $u(s)=L_s^{-1}f$ are quasinormal frequencies. It is important to note here that for any $s$ to the left of the region defined (i.e. with $\realpart(s)<1/2-k$), there exists $v\in H^k(B_1)$ such that $L_sv=0$, so we must restrict to this subset of $\mathbb{C}$.  We can now make the following definition:
\begin{defn}\label{defqnm}
	With all the notation as above, we say $s\in \{s\in\mathbb{C} \mid\realpart(s)>1/2-k\}$ is a \textit{quasinormal frequency} if it is a pole of the resolvent $(\mathcal{A}-s)^{-1}:H^k(B_1)\times H^{k-1}(B_1)\rightarrow H^k(B_1)\times H^{k-1}(B_1)$. The corresponding quasinormal modes are vectors $\mathbf{u}\in D^k(\mathcal{A})$ such that $\mathcal{A}\mathbf{u}=s\mathbf{u}$.
\end{defn}
The approach outlined above will work for any subextremal horizon with anti-de Sitter or de Sitter ends which is what we are primarily interested in. There are problems in the case that the horizon is extremal, which were outlined and addressed for a model problem in \cite{modelproblem} and for Reissner-Nordstr\"om in \cite{cmwdgqnm}. One of the key reasons this method fails for extremal horizons is that we can no longer exploit the redshift effect at the horizon to establish the necessary estimates for a Fredholm alternative. We see this from the fact that for initial data of given regularity (say $H^{k-1}(B_1)$), the region on which we can define quasinormal frequencies in the left half-plane becomes smaller as $\kappa\rightarrow 0$. To retain any sensible notion of invertibility of $L_s$ in this region, we must restrict to at least $C^{\infty}(\overline{B_1})$ functions. Even this is not enough: one can construct non-trivial smooth functions solving the homogeneous problem (see \cite{modelproblem}). The problem of which space to use is resolved in \cite{modelproblem, cmwdgqnm} with the construction of suitable $L^2$-based Gevrey spaces.\\\\
We shall now compute the quasinormal frequencies for a conformally coupled Klein-Gordon field in de Sitter. Note that to determine the corresponding modes, it suffices to determine the scalar function $u$ in \eqref{qnmvector} since the second component is just a multiple of it. To aid our computation, we use the spherical symmetry of de Sitter to decompose $u$ into spherical harmonics:
\begin{align*}
    u=\sum_{l=0}^{\infty}\sum_{m=-l}^lu_{lm}(\rho)Y_{lm}(\theta,\phi),
\end{align*}
where we have switched back to spherical polar coordinates. Using the orthogonality of the $Y_{lm}$, we can separate the equations to reduce the problem to a set of ordinary differential equations on $[0,1]$:
\begin{align*}
		    (1-\rho^2)\partial_\rho^2u_{lm}+\left(\frac{2}{\rho}-4\rho-2\s \rho\right)\partial_\rho u_{lm}-\left(\frac{l(l+1)}{\rho^2}+\s^2+3\s+2\right)u_{lm}=0,
\end{align*}
where $\s:=s/\kappa$. This is a second order Fuchsian equation with four regular singular points: $0$, $\pm 1$ and $\infty$. Considering the indicial equation at $\rho=0$, we see that the roots are $l$ and $-l-1$. The second exponent will clearly lead to solutions which are not smooth at the origin and so can be discarded. We seek a solution of the form $u_{lm}(\rho)=\rho^lv_{lm}(\rho)$ which leads to the equation
\begin{align*}
		    (1-\rho^2)\partial_\rho^2v_{lm}+\left(\frac{2l+2}{\rho}-(2\s+2l+4)\rho\right)\partial_\rho v_{lm}-(\s+l+1)(\s+l+2)v_{lm}=0.
		\end{align*}
		Changing variable to $z=\rho^2$, we have:
		\begin{align*}
		    z(1-z)\partial_z^2v_{lm}+\left(l+\frac{3}{2}-\left(\s+l+\frac{5}{2}\right)z\right)\partial_zv_{lm}-\frac{\s+l+1}{2}\cdot \frac{\s+l+2}{2}v_{lm}=0.
		\end{align*}
		This is the hypergeometric equation, which gives us a unique solution that is smooth at $\rho=0$:
		\begin{align*}
		    u_{lm}(\rho)=\rho^l{}_2F_1\left[\frac{\s+l+1}{2},\frac{\s+l+2}{2};\frac{3}{2}+l; \rho^2\right],
		\end{align*}
		where ${}_2F_1$ is the hypergeometric function as defined in \cite{hypergeometricfunction}. Thus we have found candidates for quasinormal modes: vectors constructed from functions of the form
		\begin{align*}
		    u(\rho, \theta, \phi)=\rho^lY_{lm}(\theta,\phi)\cdot {}_2F_1\left[\frac{\s+l+1}{2},\frac{\s+l+2}{2};\frac{3}{2}+l; \rho^2\right]
		\end{align*}
		satisfy the appropriate equation, we simply need to check that they are smooth at the horizon. Since $\rho^l Y_{lm}(\theta, \phi)$ is smooth in $\overline{B_1}$, it suffices to check smoothness of the hypergeometric function given above on $[0,1]$.\\\\
		From standard results for Fuchsian equations and the Taylor series about $\rho=0$ of $v_{lm}$, the solution is analytic on the unit disc, so we only need to check behaviour at $\rho=1$.	Provided $\s\notin \{-l-1, -l-2, \dots\}$, the radius of convergence of the Taylor series about $\rho=0$ for this hypergeometric function is 1 and hence there must be a singularity of the function on the unit circle in $\mathbb{C}$. This can only occur at a regular singular point of the equation, namely $0,\pm 1,\infty$. Hence we have a singularity at either $\rho=1$ or $\rho=-1$, but since $v_{lm}(\rho)$ is even, there must be one at both. Whether this singularity arises from a pole or a branch point, after sufficiently many derivatives, $\partial_\rho^ku_{lm}$ will not be continuous on $[0,1]$ and hence we cannot have quasinormal frequencies for $\s\notin \{-l-1,-l-2,\dots\}$. Now it suffices to check that elements of this set are indeed quasinormal frequencies: for these values of $\s$, the Taylor series for $v_{lm}$ about $\rho=0$ terminates. This gives us the following expressions for $u$:
		\begin{equation}
		    \begin{split}
    	    \s=-l-2n-1,\quad &\quad u=\rho^lY_{lm}(\theta,\phi)\sum_{k=0}^{n}\frac{n(n-1/2)\dots (n-k+1)(n-k+1/2)}{\left(\frac{3}{2}+l\right)_kk!}{\rho^{2k}},\\
		    \s=-l-2n-2,\quad &\quad u=\rho^lY_{lm}(\theta,\phi)\sum_{k=0}^n\frac{n(n+1/2)\dots (n-k+1)(n-k+3/2)}{\left(\frac{3}{2}+l\right)_kk!}{\rho^{2k}},
		\end{split} \label{qnm}
		\end{equation}
		with $n\in\mathbb{N}_0:=\mathbb{N}\cup \{0\}$. We have used the notation $(a)_k$ for the Pochhammer symbol i.e. $(a)_0=1$ and $(a)_k=a(a+1)\dots (a+k-1)$ for $a\in\mathbb{C}$ and $k\in\mathbb{N}$. These are polynomials and thus clearly smooth, so the quasinormal frequencies for a given angular momentum sector $l$ are $\{-l-1, -l-2, \dots\}$ with the corresponding modes determined by the above functions. Returning to Cartesian coordinates, we see that for each frequency $\s=-n$ there is a finite dimensional subspace of analytic solutions (spanned by $n^2$ polynomials in $\{x_1,x_2,x_3\}$) as expected from the results in \cite{analyticqnm}.
\subsection{Quasinormal co-modes}\label{comodesection}
From \cref{Fredholm}, we know that $L_{\s}: H^k\rightarrow H^{k-1}$ is a holomorphic family of Fredholm operators on $\Omega=\{\s\in\mathbb{C} \mid\realpart(\s)>1/2-k\}$. We know that the inverse exists at some point in $\Omega$, therefore we have a meromorphic family of operators $L^{-1}_{\s}$ on $\Omega$ from Theorem C.8 in \cite{zworski}. This means that given $\s_0\in \Omega$, we can write:
		\begin{align*}
		    L^{-1}_{\s}=A_0(\s)+\sum_{j=1}^J\frac{A_{-n_j}}{(\s-\s_0)^{n_j}},
		\end{align*}
		where the $A_{-n_j}$ are finite rank operators and $A_0(\s)$ is a holomorphic family of Fredholm operators. In light of Theorem C.10 from \cite{zworski} applied to $2(P+s)L_s^{-1}$ and considerations of eigenvalues and eigenvectors in this case, we see that $L_s^{-1}$ only has simple poles (this is discussed in more detail in \cref{polesdiscussion}). We can then use the fact that
		\begin{align*}
		    (s-s_0)L_{s}^{-1}L_{s}u=(s-s_0)L_{s}L_{s}^{-1}u=(s-s_0)u\rightarrow 0
		\end{align*}
		to deduce that
		\begin{align*}
		    \img A_{-1} = \ker L_{s_0}, \quad \quad \img L_{s_0}= \ker A_{-1}.
		\end{align*}
		So the residue projects onto the space of solutions to $L_su=0$, which we know is finite dimensional. Hence if the solutions to $L_su=0$ are spanned by $\{w_j\}_{j=1}^N$, we can write
		\begin{align}
		    A_{-1}=\sum_{j=1}^N w_j\theta_j \label{projection}
		\end{align}
		where the $\theta_i$ are continuous linear functionals $H^k(B_1)\rightarrow \mathbb{C}$ which vanish on $\img L_{s_0}$. If we further have that $A_{-1}$ is a projection, then we can also show that $\theta_i(w_j)=\delta_{ij}$. This gives motivation to think of this problem in the dual picture. This idea was discussed in \cite{hintz2021quasinormal} where the authors defined distributions obeying certain conditions as \textit{dual resonant states}. The discussion below provides an equivalent definition.\\\\
		We note that $L_s: D^{k+1}(L_s) \rightarrow H^{k}(B_1)$ is a bounded linear operator and we can define the adjoint $L_s^{\dagger}: H^{k}(B_1)\rightarrow D^{k+1}(L_s)$ with respect to the $H^k(B_1)$ inner product. Since both $L_s$ and $L_s^{\dagger}$ are Fredholm of index zero, we can deduce that their kernels have the same dimension and so we seek to define a notion of `co-mode' using the kernel of $L_s^{\dagger}$.\\\\
		Another motivation is the expression of $A_{-1}$ given in \eqref{projection}. We know the $w_i$ are the quasinormal modes, so we seek an understanding of the $\theta_i$. Since it is often easier to work with distributions than objects in the continuous dual of $H^k(B_1)$, we shall outline a few identifications between spaces. By virtue of the Riesz representation theorem, $H^k(B_1)$ is isomorphic to its continuous dual $H^k(B_1)'$, and we can identify the kernel of $L_s^{\dagger}$ with the subspace of $H^k(B_1)'$ that vanishes on the image of $L_s$. Fixing some small $\epsilon>0$, we can also identify $H^k(B_1)'$ with the space
		\begin{align*}
		    X=\left\{ \theta \in H^{-k}(B_{1+\epsilon})\,\middle\vert\, \supp \theta \subset B_1\right\},
		\end{align*}
		where we say $\supp\theta\subset B_1$ if for any $u\in H^k(B_{1+\epsilon})$ with $\supp u\subset B_{1+\epsilon} \setminus B_1$, $\theta(u)=0$. Since $X \subset H^{-k}(B_{1+\epsilon})\subset \mathcal{D}'(B_{1+\epsilon})$, we can seek distributional solutions to the adjoint problem $L_s^{\dagger}\theta =0$.
		\begin{defn}
		We say $\theta\in\mathcal{D}'(B_{1+\epsilon})$ is a \textit{quasinormal co-mode} if it satisfies:
    	\begin{enumerate}
	    \item[(i)] there exists $C>0$ such that $|\theta(u)|<C\norm{u}_{H^k(B_1)}$ for all $u\in C^{\infty}_c(B_{1+\epsilon})$ ($\Rightarrow \supp\theta\subset B_1$)
	    \item[(ii)] $L_s^{\dagger}\theta=0$ in the sense of distributions, i.e. $\theta(L_su)=0$ for any $u\in C^{\infty}_c(B_{1+\epsilon})$.
	    \end{enumerate}
		\end{defn}
		\noindent By noting that $L_s^{-1}L_s=I_{D^{k+1}(L_s)}$, we see that the co-modes defined above are precisely the $\theta_i$ appearing in \eqref{projection}. We can compute some of these explicitly for the problem in de Sitter. We define the following distributions on $B_{1+\epsilon}$: for any test function $f$, we set
		\begin{align}
		    \Theta_{lm}(f):=\int_{S^2}Y_{lm}f d\sigma,\label{comodes}
		\end{align}
		where $S^2$ is the unit sphere and $d\sigma$ is the standard measure on it. By virtue of the divergence theorem, we see this is a continuous linear map on $H^k(B_1)$ for $k\ge 1$ and can be extended continuously to this space. We define the derivative operator
		\begin{align}
		    \tilde{D}u:=\frac{1}{\rho}\left(\sum_{i=1}^3x^i\partial_iu+u\right)=\frac{1}{\rho}\partial_{\rho}(\rho u) \label{twisted}
		\end{align}
		and note that
	    \begin{align*}
	        L_su=-(1-\rho^2)\tilde{D}^2u+2(s+1)\rho\tilde{D}u+s(s+1)u-\frac{1}{\rho^2}\slashed{\Delta}u.
	    \end{align*}
	    Again, $\slashed{\Delta}$ is the Laplace-Beltrami operator on the unit sphere.
	    \begin{prop}\label{puredScomodes}
	        $\tilde{D}^k\Theta_{00}:H^{k+1}(B_1)\rightarrow \mathbb{C}$ is a co-mode for each $k\in\mathbb{N}_0$
	    \end{prop}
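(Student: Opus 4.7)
The plan is to interpret $\tilde{D}^k\Theta_{00}$ distributionally, verify the continuity condition via the trace theorem, and then show it is annihilated by $L_s^\dagger$ at the quasinormal frequency $s=-(k+1)$. A short integration-by-parts computation in spherical coordinates using $dx=\rho^2\,d\rho\,d\sigma$ shows that the formal transpose of $\tilde{D}=\rho^{-1}\partial_\rho(\rho\,\cdot\,)$ is $-\tilde{D}$, since the factor $\rho^2$ cancels the $\rho^{-1}$ and the boundary terms vanish for test functions compactly supported in $B_{1+\epsilon}$. Consequently,
\begin{align*}
(\tilde{D}^k\Theta_{00})(u)=(-1)^k\int_{S^2}Y_{00}\,\tilde{D}^ku\big|_{\rho=1}\,d\sigma=(-1)^k\int_{S^2}Y_{00}\,\partial_\rho^k(\rho u)\big|_{\rho=1}\,d\sigma,
\end{align*}
where I have used the identity $\tilde{D}^k=\rho^{-1}\partial_\rho^k\rho$. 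Condition (i) then follows immediately from the trace theorem $H^{k+1}(B_1)\hookrightarrow L^2(S^2)$ applied to $\partial_\rho^k(\rho u)$, and the support condition is automatic because any smooth $u$ with $\supp u\subset B_{1+\epsilon}\setminus B_1$ has all its derivatives vanishing on $S^2$.

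For condition (ii), I would reduce to the spherically symmetric sector. Since $Y_{00}$ is constant on $S^2$, the angular term $-\rho^{-2}\slashed{\Delta}u$ in $L_s u$ pairs to zero against $Y_{00}$ (integrate by parts on $S^2$, which has no boundary, to move $\slashed{\Delta}$ onto $Y_{00}$), and $\tilde{D}$ commutes with angular integration; hence
\begin{align*}
(\tilde{D}^k\Theta_{00})(L_su)=(-1)^k\bigl(\tilde{D}^k\mathcal{L}_s u_{00}\bigr)\big|_{\rho=1},
\end{align*}
where $u_{00}(\rho):=\int_{S^2}Y_{00}u(\rho,\cdot)\,d\sigma$ and $\mathcal{L}_sv:=-(1-\rho^2)\tilde{D}^2v+2(s+1)\rho\tilde{D}v+s(s+1)v$ is the radial restriction of $L_s$.

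The key observation, and the main technical content of the proof, is the intertwining relation $\tilde{D}\mathcal{L}_s=\mathcal{L}_{s+1}\tilde{D}$. I would verify this by direct computation, using the commutator identity $[\tilde{D},f(\rho)]=f'(\rho)$ (itself immediate from $\tilde{D}=\rho^{-1}\partial_\rho\rho$) to check that the coefficients of $\tilde{D}^3$, $\rho\tilde{D}^2$, and $\tilde{D}$ match on both sides; the three-term arithmetic $2(s+1)+s(s+1)=(s+1)(s+2)$ is what makes the indices shift cleanly. Iterating gives $\tilde{D}^k\mathcal{L}_s=\mathcal{L}_{s+k}\tilde{D}^k$. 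Evaluating $\mathcal{L}_{s+k}$ at the regular singular point $\rho=1$ kills the $-(1-\rho^2)\tilde{D}^2$ term and leaves
\begin{align*}
(\mathcal{L}_{s+k}v)(1)=(s+k+1)\bigl[\,2\,\tilde{D}v(1)+(s+k)\,v(1)\,\bigr],
\end{align*}
so the choice $s=-(k+1)$ makes the prefactor vanish for every $v=\tilde{D}^k u_{00}$, and hence $(\tilde{D}^k\Theta_{00})(L_s u)=0$ for every $u\in C^{\infty}_c(B_{1+\epsilon})$. Once the intertwining identity is in hand, the rest of the argument reduces to the vanishing of a single boundary indicial coefficient, consistent with $s=-(k+1)$ being the $l=0$, $n=k$ quasinormal frequency computed in \eqref{qnm}.
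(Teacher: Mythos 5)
Your proof is correct and follows essentially the same route as the paper: the intertwining relation $\tilde{D}^k\mathcal{L}_s=\mathcal{L}_{s+k}\tilde{D}^k$ you isolate is precisely the ``simple computation'' the paper uses to write $\tilde{D}^kL_su$ with shifted coefficients, and both arguments then discard the angular term via orthogonality with $Y_{00}$, kill the $-(1-\rho^2)\tilde{D}^{k+2}$ term at $\rho=1$, and extract the vanishing factor $s+k+1$ at $s=-(k+1)$. Your treatment of condition (i) via the trace theorem and the identity $\tilde{D}^k=\rho^{-1}\partial_\rho^k\rho$ is a slightly more explicit version of what the paper attributes to the divergence theorem, but it is not a different argument.
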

	    \begin{proof}
	    We see that for $u\in H^{k+2}(B_1)\subset D^{k+2}(L_s)$
	    \begin{align*}
	        \tilde{D}^k\Theta_{00}(L_su)=(-1)^k\Theta_{00}(\tilde{D}^kL_su).
	    \end{align*}
	    A simple computation shows that for $u\in C^{\infty}(\overline{B_1})$
	    \begin{align*}
	        \tilde{D}^kL_su=&-(1-\rho^2)\tilde{D}^{k+2}u+2(s+k+1)\rho\tilde{D}^{k+1}u\\&+(s+k)(s+k+1)\tilde{D}^{k}u-\slashed{\Delta}\tilde{D}^{k}\left(\frac{u}{\rho^2}\right).
	    \end{align*}
	    Using a suitable decomposition of $u$ into spherical harmonics (which exists for smooth functions), the fact that $\slashed{\Delta}Y_{00}=0$ and orthonormality of spherical harmonics, we are left with
	    \begin{align*}
	        \tilde{D}^k\Theta_{00}(L_su)=2(s+k+1)\int_{S^2}\tilde{D}^{k+1}ud\sigma+(s+k)(s+k+1)\int_{S^2}\tilde{D}^{k}ud\sigma.
	    \end{align*}
	    So for the quasinormal frequency $s=-k-1$, we see that $\tilde{D}^k\Theta_{00}(L_su)=0$ for $u$ in a dense subset of $D^{k+2}(L_s)$. Since the operator above is linear and continuous, we see that it vanishes for all $u\in D^{k+2}(L_s)$ and hence $\tilde{D}^k\Theta_{00}$ is a co-mode.
	    \end{proof}
	    Thus we see that for each quasinormal frequency, there is a co-mode associated to it that is concentrated only on the horizon. One can further show that all co-modes occur as linear combinations of $\tilde{D}^k\Theta_{lm}$ by considering the kernel of a suitable matrix, which agrees with results in \cite{hintz2021quasinormal} obtained through different methods. These linear combinations can be found by hand through computing the eigenvectors of the previously mentioned matrices. Since we are content to find a single sequence of zero-damped mode frequencies, it suffices to consider the co-modes associated with the $l=0$ angular momentum sector.\\\\
	    It is worth noting that similar arguments can be applied to the wave equation in de Sitter to see that the co-modes are concentrated on the horizon in this case as well (this also is observed in \cite{hintz2021quasinormal}). The reason this holds for these special masses in the Klein-Gordon equation (and in fact, \textit{only} these special cases) is that the quasinormal spectrum contains $-\kappa\mathbb{N}$ (see \cref{constant}). The concentration of these co-modes means that the associated modes are not excited by data supported away from the horizon and so the standard complex scaling approach would not identify the corresponding frequencies as quasinormal frequencies.
		\subsection{Zero-damped modes}
So far, we have found that the spectrum of quasinormal frequencies for a conformally coupled Klein-Gordon field in de Sitter is $-\kappa \mathbb{N}$ where $\kappa$ is the surface gravity of the cosmological horizon. We can define a notion of `extremality' for the cosmological horizon by considering the extremal limit for the event horizon of a Kerr or Reissner-Nordstr\"om spacetime. As we approach extremality, the surface gravity of the event horizon goes to zero - we simply apply this to the cosmological horizon. In the $\kappa\rightarrow 0$ limit, we see that the interior of the spacetime more closely resembles Minkowski. We observe that the frequencies converge to zero and cluster closer together as they do so. Since there are infinitely many frequencies on the real line, we can informally think of these as forming the branch cut observed when trying to define quasinormal modes in Minkowski. For this paper, quasinormal modes whose frequencies exhibit this behaviour are called zero-damped modes, although other authors have used different terminology.\\\\
This phenomenon has been observed in several spacetimes with horizons approaching extremality. For example, in \cite{Yang_2013, azimmerman}, the authors study the quasinormal modes of the Teukolsky equation for nearly extremal Kerr via the traditional approach. After decomposing the equation using the symmetries of Kerr, they use both a WKB analysis in the eikonal limit (where the angular frequency $l$ is large) and the method of matched asymptotic expansions (the set of solutions were first correctly derived using this method by Hod in \cite{hod1}) to find zero-damped modes. In this spacetime, there is a sequence of frequencies which cluster on the line $\imag s=-m\Omega_H$ where $m$ is the azimuthal mode number of the quasinormal mode and $\Omega_H$ is the horizon frequency instead of $\imag s=0$ seen in de Sitter. The key observations in \cite{Yang_2013, azimmerman} were the existence of these modes and the fact that the spectrum bifurcates: there is a critical value of $m/(l+1/2)$ below which there are both zero-damped modes and what the authors called damped modes. The latter are quasinormal modes whose frequencies converge to some complex number with non-zero real part in the extremal limit. Above this critical value, only the zero-damped modes persist. This property is related to the fact that the zero-damped modes are concentrated near the turning point in the potential associated with the horizon.\\\\
The more general case of nearly extremal Kerr-Newman black holes was discussed in \cite{hod2} for the slowly rotating case and more recently by Zimmerman and Mark for tractable fields in \cite{azimmerman2016} without that assumption. In the latter paper, the authors tackle the simplified model of the Dudley-Finley equation for general nearly extremal Kerr-Newman spacetimes and find approximate expressions for both zero-damped and damped modes for any value of $a$.Existence for general perturbations is supported by the numerical work provided in \cite{santos2015}. Zimmerman and Mark also consider gravito-electromagnetic perturbations of near-extremal Reissner-Nordstr\"om in \cite{azimmerman2016} and demonstrate the existence of zero-damped modes in this case.\\\\
	The case of a charged scalar field on a  Reissner-Nordstr\"om background was discussed in \cite{ hod6, hod5} and later in \cite{pzimmerman} using similar techniques to \cite{hod1, Yang_2013, azimmerman}. Since both situations contain a $U(1)$ symmetry (Kerr is axisymmetric, while the charged scalar field has a gauge symmetry), many results carry over from one of the cases to the other by simply swapping $q$ for $m$ where $q$ is the charge of the scalar field and $m$ is the azimuthal mode number of a Kerr perturbation. As such, the phenomenon of zero-damped modes arises in this situation as well.\\\\
	This phenomenon has also been observed on a  Reissner-Nordstr\"om-de Sitter background for both scalar \cite{destounis1, destounis2} and fermionic \cite{destounis3} fields. These results have been generalised to higher dimensions \cite{destounis4} and with non-minimally coupled scalar fields \cite{destounis5}.\\\\
	Using key common features found in the examples above, we propose the following definition to make the notion of zero-damped modes somewhat more mathematically precise:
	\begin{defn}\label{defnzdm}
	Let $(\mathcal{M}, g)_{\kappa}$ be a family of spacetimes with non-degenerate Killing horizons of surface gravity $0<\kappa\le \kappa_0$. Consider a partial differential equation on this background for which the notion of regularity quasinormal modes as described in the previous subsection is well-defined. We say that the equation exhibits the phenomenon of \textit{zero-damped quasinormal frequencies} if there exists a sequence of functions $\{s_n:(0,\kappa_0]\rightarrow \mathbb{C}\}_{n=1}^\infty$ such that:
	\begin{enumerate}
	\item $s_n(\kappa)$ is a quasinormal frequency for each $n\in\mathbb{N}, \kappa\in (0,\kappa_0]$,
	\item there exists $\alpha\in\mathbb{R}$ such that $s_n(\kappa)\rightarrow i\alpha$ as $\kappa \rightarrow 0$ for each $n\in\mathbb{N}$,
	    \item $\Re s_n(\kappa)\rightarrow -\infty$ as $n\rightarrow \infty$ for each $\kappa\in (0,\kappa_0]$.
	\end{enumerate}
	\end{defn}
	A natural conjecture from the above discussion of examples is that these zero-damped modes are generic to nearly extremal Killing horizons. A start to determining whether they exist for any such set-up is to consider perturbations to a simple example exhibiting this phenomenon, such as de Sitter, and use these results as a black box to tackle more complicated examples.
\section{Perturbations due to a potential}\label{potentialstability}
Let us now consider the perturbation caused by adding a potential $V$ to the equation:
\begin{align}
    -\Box_g\psi + 2\kappa^2\psi +V\psi =0. \label{potential}
\end{align}
In our chosen rescaled coordinates, this becomes:
\begin{align}
    L_0\psi+\frac{2}{\kappa}P\partial_\tau \psi +\frac{1}{\kappa^2}\partial_{\tau}^2\psi+W\psi=0,\label{rescaledequation}
\end{align}
where $W(\textbf{x})=\frac{1}{\kappa^2}V(\textbf{x}/\kappa)$. Through the same reasoning as before, we see that the quasinormal frequencies depend on the invertibility of $L_{\s}+W$. The methods used before can be used again in this case, however this time the regularity of solutions is limited by the (weak) differentiability of $W$ as well as that of the initial data.
\subsection{The inverse square potential}
        Let us consider in detail an inverse square potential. We consider the equation
        \begin{align*}
            -\Box_g\psi+2\kappa^2\psi+\frac{V_0}{r^2}\psi=0,
        \end{align*}
        where $V_0>0$. We can obtain similar results to \cref{semigroup,Fredholm} for this equation, however the singular behaviour of the potential at the origin results in modifications to our approach. We define a modified energy 
        \begin{align*}
            E_1(u):=\sum_{i=1}^3\sum_{j=1}^3\int_{B_1}a_{ij}\partial_i\overline{u}\partial_judx+\gamma \int_{B_1}|u|^2dx+\int_{B_1}\frac{V_0|u|^2}{\rho^2}dx,
        \end{align*}
        which is well-defined for $u\in H^1(B_1)$ by virtue of Hardy's inequality. We can use this and the redshift effect at the horizon to obtain a result analogous to \cref{Fredholm}, however we lose the ability to differentiate the equation at the origin due to the singularity in the potential. The best we can expect is a solution in $H^1(B_1)$. Since the origin is the only obstacle to obtaining higher regularity (and we are only really interested in behaviour near the horizon, see \cite{cmwdgqnm}), we can remedy this by introducing a cut-off $\chi$ and splitting the problem into the two regions. We now consider a coupled system of partial differential equations with solutions in two different spaces ($H^1(\supp\chi)$ for the equation near the origin and $H^k(\supp(1-\chi))$ for the equation near the horizon). The methods for the degenerate elliptic problems described in \cite{cmw} can be used with minor adaptations to establish the appropriate Fredholm properties of this coupled system and obtain the following result:
        
        \comment{Let $\chi$ be a smooth, spherically symmetric function such that
        \begin{align*}
    \chi(\mathbf{x})=\begin{cases}
    0 & |\mathbf{x}|<\frac{1}{3}\\
    1 & |\mathbf{x}|>\frac{2}{3}
    \end{cases}
\end{align*}
    Using the observation that
    \begin{align*}
        (L_{\s}+W)(\chi u)=\chi (L_{\s}+W)u -2\sum_{i,j=1}^3a_{ij}\partial_i\chi\partial_ju+2(s+2)\sum_{i=1}^3x_i\partial_i\chi u
    \end{align*}
    we see that if $(L_{\s}+W)u=f$ and we set $u_1=(1-\chi)u, f_1=(1-\chi)f$ and $u_2=\chi u, f_2=\chi f$, we obtain the following equation:
    \begin{align*}
        (L_{\s}+W)\begin{pmatrix}u_1\\u_2\end{pmatrix}+2\begin{pmatrix}-1 & -1\\1 & 1\end{pmatrix}\left(\sum_{i,j=1}^3a_{ij}\partial_i\chi\partial_j\begin{pmatrix}u_1\\u_2\end{pmatrix}-\sum_{i=0}^3x_i\partial_i\chi \begin{pmatrix}u_1\\u_2\end{pmatrix}\right)=\begin{pmatrix}f_1\\f_2\end{pmatrix}
    \end{align*}
    We now forget about our original problem and seek solutions to the $(u_1,u_2)\in H^1_0(B_{2/3})\times H^k(B_1\setminus B_{1/3})$ to the above where $u_2$ obeys the boundary condition $u_2\equiv 0$ on the inner boundary to all relevant orders (i.e. the sphere of radius $1/3$). The problem for $u_1$ is an elliptic boundary value problem and thus we have a good Fredholm alternative. The one for $u_2$ can be dealt with using the techniques we discussed above since differentiation is no longer an obstacle. We can reconstruct a solution to the original problem by first extending the $u_i$ to the unit ball by setting them to zero outside their domains of definition and then setting $u=u_1+u_2\in H^1(B_1)\cap H^k(B_1\setminus B_{2/3})$. Using these coupled equations and the methods discussed previously, we have the following result:}
        \begin{prop}\label{inversesquareFredholm}
	Let $f\in H^{k-1}(B_1)$ and $\realpart(s)>1/2-k$. If $L_su+V_0 u/\rho^2=f$,
	we have either:
	\begin{enumerate}
		\item[(i)] there exists a unique solution to $L_su+V_0 u/\rho^2=f$ where $u\in H^1(B_1)\cap H^k(B_1\setminus B_{2/3})$
		\item[(ii)] there exists $v\in H^1(B_1)\cap C^{\infty}(\overline{B_1\setminus B_{2/3}})$ and which obeys $L_sv=0$. Moreover this can only occur at isolated values of $s$.
	\end{enumerate}
        \end{prop}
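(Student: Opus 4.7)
The plan is to mirror the proof of \cref{Fredholm} while isolating the inverse-square singularity at the origin from the degenerate-elliptic behaviour at the horizon. The obstacle is that $V_0/\rho^2$ is not locally bounded near $\rho = 0$, so one cannot commute derivatives past the equation to bootstrap higher regularity there. I would first establish a basic existence result in $H^1(B_1)$ using the modified energy $E_1$: by Hardy's inequality, $\int_{B_1} |u|^2/\rho^2\,dx \le 4 \int_{B_1} |\nabla u|^2\,dx$ for $u$ vanishing on $\partial B_1$, so $E_1$ is finite and coercive on $H^1(B_1)$. Combined with the redshift multiplier argument at $\rho = 1$ from \cite{cmw}, this yields an a priori $H^1$ estimate giving invertibility of $L_s + V_0/\rho^2 : H^1(B_1) \to L^2(B_1)$ in some right half-plane.

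To propagate regularity away from the origin and extend meromorphically, I would introduce the cutoff $\chi$ from the commented text, with $\chi \equiv 0$ on $B_{1/3}$ and $\chi \equiv 1$ outside $B_{2/3}$, and split $u = (1 - \chi)u + \chi u =: u_1 + u_2$. The key observation is that the commutator $[L_s, \chi]$ is a first-order differential operator supported in the annulus $B_{2/3} \setminus B_{1/3}$, where $V_0/\rho^2$ is smooth; hence the singular term only enters the equation for $u_1$. This converts the problem into a coupled system for $(u_1, u_2) \in H^1_0(B_{2/3}) \times H^k(B_1 \setminus \overline{B_{1/3}})$. The $u_1$ equation is a uniformly elliptic Dirichlet problem whose singular zeroth-order term is controlled by Hardy, so standard elliptic Fredholm theory yields an alternative in $H^1_0(B_{2/3})$. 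The $u_2$ equation is a degenerate elliptic problem on an annular region with smooth coefficients, to which the redshift-based framework of \cite{cmw} applies verbatim, giving the same higher-regularity alternative as in \cref{Fredholm}. The commutator couplings are first-order operators supported on the overlap annulus and therefore define relatively compact perturbations between the two blocks, so the full system is a holomorphic family of Fredholm operators of index zero on $\{\realpart(s) > 1/2 - k\}$.

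Gluing back, a solution of the coupled system yields $u = u_1 + u_2 \in H^1(B_1) \cap H^k(B_1 \setminus B_{2/3})$. The analytic Fredholm theorem (Theorem C.8 of \cite{zworski}) then promotes the $H^1$-invertibility at a single point with large $\realpart(s)$ to a meromorphic inverse on the whole half-plane, producing the stated dichotomy with non-invertible values occurring only at isolated points. The main technical obstacle is bookkeeping: one must verify that the commutator terms really do define compact couplings between the mixed-regularity spaces in play, and that a solution of the coupled system glues back into a genuine element of $H^1(B_1) \cap H^k(B_1 \setminus B_{2/3})$ rather than a mere pair of functions with matching traces across $\partial B_{1/3}$ and $\partial B_{2/3}$. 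Once this compatibility is verified, the argument reduces to a direct adaptation of the proof of \cref{Fredholm}.
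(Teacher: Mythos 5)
Your proposal follows essentially the same route as the paper: a modified energy controlled by Hardy's inequality together with the redshift estimate at the horizon for the $H^1$ theory, then the cut-off $\chi$ splitting the problem into a coupled system consisting of a uniformly elliptic Dirichlet problem on $B_{2/3}$ and a degenerate elliptic problem on the annulus handled by the framework of \cite{cmw}, with the analytic Fredholm theorem giving the isolated non-invertible values. The compatibility and commutator-mapping issues you flag at the end are precisely the points the paper also leaves as ``minor adaptations,'' so your write-up is a faithful (and slightly more explicit) version of the intended argument.
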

          We can use this to establish estimates to get a result similar to \cref{semigroup} and define quasinormal modes and frequencies. To actually compute these, we may again use spherical symmetry to decompose into spherical harmonics and reduce the problem to an ordinary differential equation. The resulting equation is a second order Fuchsian equation with the same ordinary points:
          \begin{align*}
		    (1-\rho^2)\partial_\rho^2u_{lm}+\left(\frac{2}{\rho}-4\rho-2\s \rho\right)\partial_\rho u_{lm}-\left(\frac{l(l+1)+V_0}{\rho^2}+\s^2+3\s+2\right)u_{lm}=0.
    \end{align*}
    We can use precisely the same transformations as before, however the potential has changed the exponents at $\rho=0$ to
    \begin{align*}
        \sigma_l^{\pm}=-\frac{1}{2}\pm \sqrt{\left(l+\frac{1}{2}\right)^2+V_0}.
    \end{align*}
    The solution with exponent $\sigma_l^{-}$ at the origin gives a singular solution and thus through exactly the same methods as before see that solutions are:
    \begin{align*}
        u_{lm}(\rho)=\rho^{\sigma_l^+}{}_2F_1\left[\frac{\s+\sigma_l^++1}{2},\frac{\s+\sigma_l^++2}{2};\frac{3}{2}+\sigma_l^+; \rho^2\right].
    \end{align*}
    Since we require solutions which are smooth at the horizon, we see that quasinormal frequencies are of the form $\s=-\sigma_l^+-k$ for $k\in\mathbb{N}$ and the corresponding modes are polynomials multiplied by $\rho^{\sigma_l^+}$. We observe that we have zero-damped modes from the relation $s=\kappa\s$.
    \begin{remark*}
    Since the frequencies are generically no longer integers, the corresponding co-modes do not concentrate precisely the horizon as before. However, we conjecture that the co-modes in this case are close to the concentrated ones from before in the following sense: let $\theta$ be a co-mode for the equation with an inverse square potential. Then there exists $\theta_0$, a co-mode for the original equation that is concentrated on the horizon, such that $\theta-\theta_0$ is of order $V_0/(l+1/2)$ in operator norm.
    \end{remark*}
\subsection{The constant potential}\label{constant}
Before we consider a general class of potentials, let us consider the simple case when $W$ is a constant (i.e. we have a non-conformal mass for the Klein-Gordon equation). We can solve the equation using similar methods to the conformally coupled case and see that the solutions to $(L_s+W)u=0$ are:
\begin{align*}
    u(\mathbf{x})=\rho^lY_{lm}(\theta,\phi){}_2F_1\left[\frac{\s+l+1}{2}+\frac{1}{4}(1-\sqrt{1-4W}),\frac{\s+l+1}{2}+\frac{1}{4}(1+\sqrt{1-4W});\frac{3}{2}+l; \rho^2\right],
\end{align*}
where $\s=s/\kappa$. This allows us to read off the quasinormal spectrum as:
\begin{align*}
    s=-\kappa\left(2n+l+\frac{3}{2}\pm \frac{1}{2}\sqrt{1-4W}\right), \quad n\in \mathbb{N}_0,
\end{align*}
where $l$ is the angular appropriate angular momentum number. For $W\le 1/4$, we can see that we again have zero-damped modes clustering on the negative real axis and going to $0$ as $\kappa\rightarrow 0$. For $W>1/4$, the frequencies have non-zero imaginary part, but still exhibit the behaviour required of zero-damped mode frequencies. This is somewhat analogous to the situation in \cite{hod4}, where there exists a critical mass which determines whether the damping rate or the frequency is perturbed.
\subsection{Compactly supported potentials}
The fact that the co-modes are concentrated on the horizon for suitable values of $W$ (namely 0 and -2) implies that the behaviour of the potential near the horizon is all that matters when considering these zero-damped mode frequencies.
 \begin{prop}\label{potentialcomodes}
        Let $W \in C^{k}(\overline{B_1})$ such that
        \begin{align*}
            \partial^{\alpha}W|_{\partial B_1}=0
        \end{align*}
        for each multi-index $\alpha$ such that $|\alpha|\le k$. Take $n < k$. Then $-1, -2, \dots, -(n-1)$ are quasinormal frequencies of $L_{\s}+W: D^{n+1}\rightarrow H^n(B_1)$.
        \end{prop}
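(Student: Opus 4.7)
The strategy is, for each target frequency $\s=-m$ with $1\le m\le n-1$, to exhibit a non-trivial quasinormal co-mode of $L_\s+W$ at $\s=-m$. By the Fredholm framework of \cref{Fredholm}, which transfers to $L_\s+W$ since multiplication by the bounded function $W$ is a zeroth-order perturbation leaving the principal symbol and the horizon estimate unchanged, the existence of such a co-mode forces $\s=-m$ to be a pole of the resolvent $(L_\s+W)^{-1}$ and hence a quasinormal frequency. The natural candidate is $\theta_m:=\tilde{D}^{m-1}\Theta_{00}$, which by \cref{puredScomodes} is already a co-mode for the unperturbed operator at $\s=-m$ and is a bounded functional on $H^{m}(B_1)\supset H^{n+1}(B_1)$.

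The verification reduces to checking $\theta_m\bigl((L_{-m}+W)u\bigr)=0$ for $u$ in a dense subset of $D^{n+1}(L_\s+W)$. Splitting as
\begin{align*}
\theta_m\bigl((L_{-m}+W)u\bigr)=\theta_m(L_{-m}u)+\theta_m(Wu),
\end{align*}
the first summand vanishes by \cref{puredScomodes} (taking $k=m-1$ there). For the second, a short induction on the defining identity $\tilde{D}f=\rho^{-1}\partial_\rho(\rho f)$ yields $\tilde{D}^{j}f=\rho^{-1}\partial_\rho^{j}(\rho f)$, so that
\begin{align*}
\theta_m(Wu)=(-1)^{m-1}\int_{S^2}\partial_\rho^{m-1}(\rho W u)\,d\sigma\bigg|_{\rho=1}.
\end{align*}
Expanding by the Leibniz rule, every term in the integrand contains a factor $\partial_\rho^{i}W$ with $0\le i\le m-1$, evaluated at $\rho=1$. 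Each such radial derivative is a linear combination of Cartesian partial derivatives $\partial^{\alpha}W$ with $|\alpha|\le i\le m-1\le n-2<k$, and the hypothesis forces all of these to vanish on $\partial B_1$. Hence $\theta_m(Wu)=0$.

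Having established the co-mode identity on smooth $u$, continuity of $\theta_m$ on $H^{m}(B_1)$, together with boundedness of multiplication by $W\in C^{k}(\overline{B_1})$ on the relevant Sobolev spaces, extends it to all of $D^{n+1}(L_\s+W)$. Non-triviality of $\theta_m$ as a functional is immediate, for instance by testing against the radial polynomial $\rho^{m-1}$. I expect the main obstacle to be purely bookkeeping, namely verifying that the Fredholm set-up of \cref{Fredholm} (redshift estimates, index-zero identification of cokernel with kernel, meromorphy of the resolvent) survives the $W$-perturbation in the reduced regularity scale imposed by $W\in C^{k}$; once this is noted, the non-trivial co-mode $\theta_m$ produces a quasinormal frequency at $\s=-m$ for every $m\in\{1,\ldots,n-1\}$.
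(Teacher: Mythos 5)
Your proposal is correct and follows essentially the same route as the paper: establish the Fredholm alternative for $L_{\s}+W$ via the framework of \cite{cmw}, then apply the concentrated co-modes $\tilde{D}^{m-1}\Theta_{00}$ of \cref{puredScomodes} to $(L_{\s}+W)u$ and observe that the extra term $\theta_m(Wu)$ vanishes because the boundary traces of $W$ and its derivatives are zero. Your version merely spells out the Leibniz-rule computation and the density argument that the paper leaves implicit.
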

\begin{proof}
First we note that we have an analogue of \cref{Fredholm} for the equation with a $C^k$ potential using the results of \cite{cmw}. So it suffices to apply the co-modes from \cref{puredScomodes} to $(L_s+W)u$ and note that the equations are unchanged when the trace of these derivatives of $W$ is zero.
\end{proof}
In fact, this result also applies with multiplicity - this follows from finding the appropriate co-modes associated to higher total angular momentum spherical harmonics. We see in particular that if $W\in C^{\infty}_{c}(B_1)$ that $-\mathbb{N}$ is contained in the quasinormal spectrum of $L_s+W$. It is important to note at this point that this may not be the whole spectrum - additional frequencies could arise in $\realpart \s<-1/2$ due to the potential. Since $s=\kappa \s$, we have established the existence of zero-damped modes in this case.
    \subsection{Rouch\'e's theorem for analytic families of Fredholm operators}\label{roucheresult}
    We have established the existence of zero-damped modes for some very particular classes of potentials, however we wish to tackle more generic ones. To do so, we need some continuity result on the effect of perturbations of meromorphic families of operators on their poles. Rouch\'e's theorem is a classical result in complex analysis which achieves this type of control for zeroes of holomorphic functions, and Gohberg and Sigal established an analogue of this for meromorphic families of Fredholm operators in \cite{Gohberg_1971}. An alternate proof of their generalisation of the argument principle to holomorphic families of Fredholm operators is presented in \cite{logres} along with a discussion on which classes of Banach algebras such a result holds. We shall use the notation of \cite{zworski} in the statements of the theorems we quote (though we shall use the stronger version of \cref{rouche} that appears in \cite{Gohberg_1971}).
    \begin{defn}
    Suppose $X, Y$ are Banach spaces and $\Omega$ is a connected, open subset of $\mathbb{C}$. We say $A: \Omega \rightarrow \mathcal{L}(X,Y)$ is a meromorphic family of Fredholm operators in $\Omega$ if for any $z_0\in\Omega$, there exist finite rank operators $\{A_i\}_{k=1}^K$ and a neighbourhood of $z_0$ such that
    \begin{align*}
        A_0(z):=A(z)-\sum_{k=1}^K\frac{A_k}{(z-z_0)^k}
    \end{align*}
    is a holomorphic family of Fredholm operators in this neighbourhood.
    \end{defn}
    \begin{thm}\label{c.10}
		Suppose $X$ is a Banach space and $A(s)$ is a meromorphic family of Fredholm operators $X\rightarrow X$ in $\Omega$ where $\Omega$ is a connected, open subset of $\mathbb{C}$. Given $s_0\in\Omega$, there exists some neighbourhood of $s_0$ such that
		\begin{align*}
		    A(s)=A_0(s)+\sum_{k=1}^K\frac{A_k}{(s-s_0)^k},
		\end{align*}
		where $A_0(s)$ is holomorphic. If $A_0(s_0)$ has index zero, then there exist families of operators $U_1, U_2$ which are holomorphic and invertible in a neighbourhood of $s_0$ and operators $\{P_m\}_{m=0}^M$ such that 
		\begin{align*}
		    A(s)=U_1(s)\left(P_0+\sum_{m=1}^M(s-s_0)^{k_m}P_m\right)U_2(s)
		\end{align*}
		in a neighbourhood of $s_0$. The $k_m$ are non-zero integers and the $P_m$ obey
		\begin{align*}
		    P_mP_n=\delta_{mn}P_m,
		\end{align*}
		with $\{P_m\}_{m=1}^M$ being of rank 1 and $I-P_0$ being of finite rank. The inverse $A(s)^{-1}$ exists as a meromorphic family of operators near $s_0$ if and only if $\sum_{m=0}^MP_m=I$, in which case it takes the form:
		\begin{align*}
		    A(s)^{-1}=U_2(s)^{-1}\left(P_0+\sum_{m=1}^M(s-z_0)^{-k_m}P_m\right)U_1(s)^{-1}.
		\end{align*}
		\end{thm}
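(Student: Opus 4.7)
The plan is to reduce the infinite-dimensional meromorphic operator problem to a finite-dimensional matrix problem via a Schur complement (Grushin) construction, then apply the local Smith normal form over the ring of holomorphic germs to obtain the desired factorization.

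First I would use the hypothesis that $A_0(s_0)$ is Fredholm of index zero together with the fact that the principal part of $A(s)$ at $s_0$ is a finite rank operator (a finite sum of finite rank operators). Pick an integer $N$ large enough that $\ker A_0(s_0)$, a complement of $\img A_0(s_0)$, and the ranges of all of $A_1, \dots, A_K$ are contained in $N$-dimensional subspaces. Choose maps $R_-: \mathbb{C}^N \to X$ whose image contains $\ker A_0(s_0)$ and the ranges of the $A_k$, and $R_+: X \to \mathbb{C}^N$ whose kernel is contained in $\img A_0(s_0)$. Then the bordered operator
\begin{align*}
\mathcal{A}(s) = \begin{pmatrix} A(s) & R_- \\ R_+ & 0 \end{pmatrix}: X \oplus \mathbb{C}^N \to X \oplus \mathbb{C}^N
\end{align*}
is invertible at $s = s_0$ (taking the limit on the holomorphic part) and hence in a punctured neighbourhood, with $\mathcal{A}(s)^{-1}$ holomorphic there. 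Let $E(s)$ denote its lower-right block, which is a meromorphic $N \times N$ matrix at $s_0$. The Schur complement identity gives that $A(s)$ is invertible on $X$ if and only if $E(s)$ is invertible as a matrix, and $A(s)^{-1}$ is recovered from $E(s)^{-1}$ and the (holomorphic) blocks of $\mathcal{A}(s)^{-1}$ by the usual formulas.

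Next I would apply the local Smith normal form to $E(s)$. After multiplying through by a sufficient power of $(s-s_0)$ to clear poles, $(s-s_0)^K E(s)$ is holomorphic near $s_0$; over the principal ideal domain of germs of holomorphic functions at $s_0$, there exist invertible (in $\mathrm{GL}_N$ of the ring) holomorphic matrix germs $F_1(s), F_2(s)$ such that
\begin{align*}
F_1(s)\, (s-s_0)^K E(s)\, F_2(s) = \mathrm{diag}\!\left((s-s_0)^{m_1}, \dots, (s-s_0)^{m_N}\right)
\end{align*}
with $m_j \in \mathbb{N}_0$. Undoing the multiplication by $(s-s_0)^K$ yields the same identity for $E(s)$ with shifted exponents $k_m = m_m - K \in \mathbb{Z}$. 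Entries with $k_m = 0$ are absorbed into an invertible piece; entries with $k_m \ne 0$ correspond to rank-one diagonal blocks. Packaging these as the rank-one projections $P_m$ and taking $P_0 = I - \sum_{m \geq 1} P_m$ (extended trivially over $X$ through the bordering) gives the desired $\{P_m\}_{m=0}^{M}$ with $P_m P_n = \delta_{mn} P_m$ and $I - P_0$ of finite rank.

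Finally I would lift everything back to $X$: the invertible holomorphic matrix germs $F_1, F_2$, combined with the holomorphic invertible bordered operator $\mathcal{A}(s)^{\pm 1}$ and the Schur-complement reconstruction, assemble into holomorphic invertible operators $U_1(s), U_2(s)$ on $X$ realizing the stated factorization. The criterion for existence of a meromorphic $A(s)^{-1}$ is then immediate: it requires that no $k_m$ be $+\infty$, which is precisely the condition that $E(s)$ (and hence the Smith form) be a genuine matrix rather than degenerate on some direction, equivalently $\sum_{m=0}^{M} P_m = I$. The main obstacle will be the bookkeeping in the Grushin reduction — ensuring $R_\pm$ can be chosen independent of $s$ while still capturing both the kernel/cokernel data at $s_0$ and all the finite-rank principal parts $A_1, \dots, A_K$ simultaneously. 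A secondary subtlety is verifying that the Smith exponents $k_m$ are invariants of $A$ at $s_0$ (independent of the choice of $R_\pm$); this follows from interpreting them as the invariant factors of an associated torsion module over the local ring $\mathcal{O}_{s_0}$, but for the existence statement as formulated it is enough to produce one such decomposition.
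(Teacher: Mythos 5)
The paper does not prove this theorem itself but simply cites Theorem C.10 of Dyatlov--Zworski and Theorem 3.1 of Gohberg--Sigal, and your route --- a Grushin (bordered/Schur-complement) reduction to a finite-dimensional effective matrix $E(s)$ followed by the local Smith normal form over the ring of holomorphic germs at $s_0$ --- is precisely the argument used in those references, so your proposal is correct and takes essentially the same approach. The one point needing care, which you already flag, is that the bordering should be arranged around the holomorphic part $A_0(s)$ (which is genuinely invertible as a bordered operator at $s_0$), with the finite-rank principal parts $A_1,\dots,A_K$ then pushed into the finite-dimensional block, since $\mathcal{A}(s_0)$ as you wrote it is not literally defined at the pole.
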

		\begin{proof}
		See Theorem C.10 in \cite{zworski} or Theorem 3.1 in \cite{Gohberg_1971}.
		\end{proof}
		\noindent This allows us to define the notion of null multiplicity:
		\begin{defn}
		The \textit{null multiplicity} at $s_0$ of a meromorphic family of operators $A(s)$ is:
		\begin{align*}
		    N_{s_0}(A):=\begin{cases}
		    \sum_{k_m>0}k_m & M=\rank(I-P_0)\\
		    \infty & M<\rank(I-P_0)
		    \end{cases}.
		\end{align*}
		When $N_{s_0}(A)<\infty$, $A(s)^{-1}$ is meromorphic and we can compute its null multiplicity:
		\begin{align*}
		    N_{s_0}(A^{-1})=-\sum_{k_m<0}k_m.
		\end{align*}
		\end{defn}
		\noindent Now we can state a generalisation of the argument principle for these families of operators:
		\begin{thm}\label{argumentprinciple}
		Let $H$ be a Hilbert space and suppose $A(s)$ and $A(s)^{-1}$ are meromorphic families of operators $H\rightarrow H$ in $\Omega$. Let
		\begin{align*}
		    \Pi_{s_0}=\frac{1}{2\pi i}\oint_{\Gamma_{s_0,\delta}}\partial_sA(s)A(s)^{-1}ds,
		\end{align*}
		where $\Gamma_{s_0,\delta}$ is a positively oriented circle of radius $\delta$ containing $s_0$ and no other pole of the integrand. Then $\Pi_{s_0}$ has finite rank and
		\begin{align*}
		    \tr\Pi_{s_0}=N_{s_0}(A)-N_{s_0}(A^{-1}).
		\end{align*}
		\end{thm}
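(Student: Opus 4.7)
The plan is to reduce everything to the local normal form supplied by \cref{c.10}. In a neighbourhood of $s_0$ we may write
\begin{align*}
    A(s) = U_1(s) D(s) U_2(s), \qquad D(s) = P_0 + \sum_{m=1}^M (s-s_0)^{k_m} P_m,
\end{align*}
where $U_1, U_2$ are holomorphic and invertible, the $P_m$ are mutually orthogonal with $P_1, \ldots, P_M$ of rank one, and $\sum_{m=0}^M P_m = I$ (which must hold because $A^{-1}$ is assumed meromorphic). Consequently $D(s)^{-1} = P_0 + \sum_{m=1}^M (s-s_0)^{-k_m} P_m$. Applying the product rule,
\begin{align*}
    \partial_s A \cdot A^{-1} = (\partial_s U_1) U_1^{-1} + U_1 \bigl(\partial_s D \cdot D^{-1}\bigr) U_1^{-1} + U_1 D \bigl(\partial_s U_2 \cdot U_2^{-1}\bigr) D^{-1} U_1^{-1},
\end{align*}
and I would analyse the contour integral of each of the three terms separately.

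The first term is holomorphic near $s_0$ and so integrates to $0$. For the second, the orthogonality relations $P_m P_n = \delta_{mn} P_m$ give $\partial_s D \cdot D^{-1} = (s-s_0)^{-1} \sum_{m=1}^M k_m P_m$, so its contour integral equals
\begin{align*}
    U_1(s_0) \Bigl(\sum_{m=1}^M k_m P_m\Bigr) U_1(s_0)^{-1},
\end{align*}
which is finite rank with trace $\sum_{m=1}^M k_m = N_{s_0}(A) - N_{s_0}(A^{-1})$, directly from the definition of null multiplicity.

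The real work is the third term. Setting $h(s) := \partial_s U_2 \cdot U_2^{-1}$ (holomorphic) with Taylor expansion $h(s) = \sum_{n\ge 0} h_n (s-s_0)^n$, and inserting $I = \sum_m P_m$ on both sides of $h$, I compute (with the convention $k_0 = 0$)
\begin{align*}
    D(s) h(s) D(s)^{-1} = \sum_{a,b,n} (s-s_0)^{k_a - k_b + n} P_a h_n P_b,
\end{align*}
so its residue at $s_0$ is $\sum_{k_a < k_b} P_a h_{k_b - k_a - 1} P_b$. Only finitely many $k_m$ are nonzero, and in every surviving term at least one of $a, b$ is $\ge 1$ (hence the corresponding $P$ is rank one), so after conjugating by $U_1(s_0)$ and $U_1(s_0)^{-1}$ this contribution to $\Pi_{s_0}$ is finite rank. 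Cyclicity of the trace, now legitimate on these finite rank operators, gives $\tr(P_a h_n P_b) = \tr(h_n P_b P_a) = 0$ whenever $a \ne b$, which holds in every term since $k_a < k_b$. Combining the three pieces yields that $\Pi_{s_0}$ has finite rank and $\tr \Pi_{s_0} = N_{s_0}(A) - N_{s_0}(A^{-1})$.

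The expected obstacle is that $\partial_s A \cdot A^{-1}$ is generally not trace class, so cyclicity of the trace cannot be invoked on the whole expression; the three-term split above is exactly what reduces the trace computation to finite rank pieces where cyclicity is unambiguous, and it also makes manifest that $\Pi_{s_0}$ is finite rank despite the original integrand being only bounded.
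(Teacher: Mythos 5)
The paper does not actually prove this statement --- it simply cites Theorem C.11 of Dyatlov--Zworski and Theorem 2.1 of Gohberg--Sigal --- and your argument is essentially the standard proof from those references: factor $A$ through the local normal form of \cref{c.10} and treat the three resulting terms separately. Your handling of the first two terms is correct, including the identification $\tr\bigl(\sum_{m=1}^M k_m P_m\bigr)=\sum_{k_m>0}k_m+\sum_{k_m<0}k_m=N_{s_0}(A)-N_{s_0}(A^{-1})$.

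The one step you state imprecisely is the third term. Writing $G(s)=D(s)h(s)D(s)^{-1}$, the residue of $U_1(s)G(s)U_1(s)^{-1}$ at $s_0$ is \emph{not} $U_1(s_0)\,(\operatorname{Res}_{s_0}G)\,U_1(s_0)^{-1}$ unless $G$ has only a simple pole there; in general it is $\sum_{i+j+l=-1}\frac{1}{i!\,l!}U_1^{(i)}(s_0)\,G_j\,(U_1^{-1})^{(l)}(s_0)$, which mixes \emph{all} the singular Laurent coefficients $G_j$, $j<0$, with derivatives of $U_1$ and $U_1^{-1}$. Your conclusions survive, but need one more observation. For finite rank: every singular coefficient $G_j$ with $j<0$ is a finite sum of terms $P_a h_n P_b$ with $k_a<k_b$, hence $a\neq b$ and at least one of $P_a,P_b$ is rank one, so the full residue is still finite rank. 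For the trace: split $G=G_{\mathrm{sing}}+G_{\mathrm{hol}}$; the term $U_1 G_{\mathrm{hol}} U_1^{-1}$ is holomorphic and contributes nothing, while $U_1 G_{\mathrm{sing}} U_1^{-1}$ is a finite-rank-valued family on which the trace commutes with the contour integral, and pointwise cyclicity gives $\tr\bigl(U_1(s)G_{\mathrm{sing}}(s)U_1(s)^{-1}\bigr)=\tr G_{\mathrm{sing}}(s)=\sum_{j<0}\tr(G_j)(s-s_0)^j$, whose residue is $\tr G_{-1}=\sum_{k_a<k_b}\tr\bigl(P_a h_{k_b-k_a-1}P_b\bigr)=0$ exactly as you computed. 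With that repair the proof is complete and matches the argument in the cited sources.
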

		\begin{proof}
		See Theorem C.11 in \cite{zworski} or Theorem 2.1 in \cite{Gohberg_1971}.
		\end{proof}
		\begin{thm}\label{rouche}
		Suppose $A(s)$ and $B(s)$ for $s\in\Omega$ are meromorphic families of Fredholm operators as in \cref{argumentprinciple}. Suppose further that $U\Subset\Omega$ is a simply connected open subset with $C^1$ boundary $\partial U$ such that $A$ and $B$ are invertible on $\partial U$ and
		\begin{align*}
		    \norm{A(s)^{-1}(A(s)-B(s))}_{H\rightarrow H} <1, \quad s\in \partial U.
		\end{align*}
		Then
		\begin{align*}
		    \frac{1}{2\pi i}\tr \oint_{\partial U}\partial_sA(s)A(s)^{-1}ds=\frac{1}{2\pi i}\tr \oint_{\partial U}\partial_sB(s)B(s)^{-1}ds.
		\end{align*}
		\end{thm}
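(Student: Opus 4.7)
The plan is to run the classical Rouché argument in operator form: linearly interpolate between $A$ and $B$, and then show that the trace integral along $\partial U$ is integer-valued and continuous in the interpolation parameter, hence constant.

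Set $C_t(s) := A(s) + t(B(s) - A(s))$ for $t \in [0,1]$. For $s \in \partial U$, rewrite
\begin{align*}
C_t(s) = A(s)\bigl(I + t A(s)^{-1}(B(s) - A(s))\bigr).
\end{align*}
The hypothesis $\|A(s)^{-1}(A(s)-B(s))\|_{H\to H} < 1$ on $\partial U$, combined with a Neumann series, gives invertibility of $I + t A(s)^{-1}(B(s)-A(s))$ for every $t \in [0,1]$, so $C_t(s)$ is invertible on $\partial U$ with inverse jointly continuous in $(s,t) \in \partial U \times [0,1]$. As a linear combination of meromorphic Fredholm families that is invertible on the nonempty open set of invertibility, $C_t$ is a meromorphic Fredholm family on $\Omega$ with meromorphic inverse (where Fredholmness on the interior is read off the factorisation together with \cref{c.10} applied to $A$). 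Because $\overline{U} \subset \Omega$ is compact, the poles of $C_t$ and of $C_t^{-1}$ inside $U$ form a finite set.

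Define
\begin{align*}
N(t) := \frac{1}{2\pi i}\tr\oint_{\partial U} \partial_s C_t(s)\, C_t(s)^{-1}\, ds.
\end{align*}
By \cref{argumentprinciple} applied at each pole inside $U$, one has $N(t) = \sum_{s_0 \in U}\bigl(N_{s_0}(C_t) - N_{s_0}(C_t^{-1})\bigr) \in \mathbb{Z}$. On the other hand, the integrand $\partial_s C_t(s) C_t(s)^{-1}$ is jointly continuous on $\partial U \times [0,1]$ in operator norm, so the bounded operator $I(t) := \oint_{\partial U} \partial_s C_t C_t^{-1}\, ds$ varies continuously in operator norm. Together with the local structure supplied by \cref{c.10}, which writes $I(t)$ as a finite sum of finite-rank residues whose data vary continuously with $t$, this gives continuity of $\tr I(t) = N(t)$. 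A continuous integer-valued function on $[0,1]$ is constant, so $N(0) = N(1)$, which is the desired identity.

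The main obstacle is precisely the continuity of $\tr I(t)$: operator-norm continuity of $I(t)$ alone is not enough, since the trace is not operator-norm continuous on finite-rank operators (the rank of $I(t)$ may jump as poles of $C_t^{-1}$ collide or split inside $U$ along the homotopy). The cleanest resolution is local: around each pole $s_0$, use the Weierstrass-type factorisation of \cref{c.10} to express the relevant residue in terms of finitely many continuously varying operator-valued coefficients, sum these contributions over the finite set of poles in $U$, and check that the total trace remains continuous even as individual poles merge or separate.
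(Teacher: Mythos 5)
The paper does not actually prove this statement --- it is imported verbatim from Gohberg and Sigal (Theorem 2.2 of \cite{Gohberg_1971}) --- so you are attempting something the paper only cites. Your homotopy strategy is the standard textbook route, but as written it has two genuine gaps. The first is the assertion that $C_t(s)=(1-t)A(s)+tB(s)$ is a meromorphic family of Fredholm operators on $\Omega$. Fredholmness is not preserved under convex combinations of operators (a convex combination of two Fredholm operators can be the zero operator), and the factorisation $C_t=A\,\bigl(I+tA^{-1}(B-A)\bigr)$ through which you propose to ``read off'' Fredholmness is only available where $A$ is invertible --- that is, on $\partial U$ and away from precisely the interior points of $U$ where the poles you need to count live. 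Without Fredholmness and finite meromorphy of $C_t$ and $C_t^{-1}$ inside $U$, \cref{argumentprinciple} cannot be applied to $C_t$ and your $N(t)$ is not even well defined for $0<t<1$. The Gohberg--Sigal argument sidesteps this by working multiplicatively: the additivity of the logarithmic residue under products gives
\begin{align*}
\frac{1}{2\pi i}\tr\oint_{\partial U}\partial_s B\,B^{-1}\,ds=\frac{1}{2\pi i}\tr\oint_{\partial U}\partial_s A\,A^{-1}\,ds+\frac{1}{2\pi i}\tr\oint_{\partial U}\partial_s D\,D^{-1}\,ds,\qquad D:=A^{-1}B,
\end{align*}
where $D=I-A^{-1}(A-B)$ and $D^{-1}=B^{-1}A$ genuinely are meromorphic Fredholm families inside $U$ (products of two such), so the deformation only ever has to be run on a family of the form $I+tK$ with $\norm{K}_{H\rightarrow H}<1$ on $\partial U$.

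The second gap is the one you flag yourself but do not close: continuity of $\tr I(t)$. Proposing to ``use the factorisation of \cref{c.10} \dots and check that the total trace remains continuous even as individual poles merge or separate'' restates the difficulty rather than resolving it, because the data in that factorisation --- the integers $k_m$ and the projections $P_m$ --- do \emph{not} vary continuously in $t$; discontinuity of exactly this data when poles collide or split is the whole problem. The standard repair is a uniform rank bound: if $\operatorname{rank} I(t)\le N$ for all $t$, then operator-norm continuity upgrades to trace continuity via $\abs{\tr I(t)-\tr I(t_0)}\le 2N\norm{I(t)-I(t_0)}_{H\rightarrow H}$, since trace-norm and operator norm are comparable on operators of bounded rank. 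Producing that uniform bound (or the equivalent localisation in \cite{Gohberg_1971}) is the real content of the theorem and is absent from your sketch. Given that the paper itself treats the result as a citation, I would either do the same or write out the multiplicative reduction and the rank bound in full.
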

		\begin{proof}
		See Theorem 2.2 in \cite{Gohberg_1971}.
		\end{proof}
		\subsection{More general potentials}\label{polesdiscussion}
	The first important thing to note is that the results in the previous subsection require a family of operators which maps some Hilbert space to itself, so we need to modify the operators we are considering to apply them to our problem. We set
		\begin{align*}
		    A(s)&=L_sL_0^{-1}=I_{H^k}+2sPL_0^{-1}+s^2L_0^{-1},\\
		    B(s)&=A(s)+\epsilon WL_0^{-1},
		\end{align*}
		which are holomorphic families of Fredholm operators $H^k(B_1)\rightarrow H^k(B_1)$ for $W\in C^k(\overline{B_1})$, since $L_0: D^{k+1}(L_0)\rightarrow H^k(B_1)$ is invertible. We first note that by decomposing into spherical harmonics, we can find the eigenvalues and eigenvectors of $L_s$ in $D^{k+1}(L_s)$: this is equivalent to solving the equation for a different mass. We see that the eigenvectors (i.e. solutions of $L_su=\lambda u$) are
		\begin{align*}
		    u_{n,l,m}(\mathbf{x},s)=\rho^lY_{lm}(\theta,\phi){}_1F_2\left[-n,s+l+\frac{3}{2}+n;\frac{3}{2}+l;\rho^2\right]
		\end{align*}
		for $n\in\mathbb{N}_0$ with corresponding eigenvalues
		\begin{align*}
		    \lambda_{n,l,m}(s)=(s+l+1+2n)(s+l+2+2n).
		\end{align*}
		This means that for the factorisation of $A(s)$ given by \cref{c.10}, all the positive $k_m$ are 1 and since $A(s)$ is holomorphic in the region of interest, there are no negative $k_m$. Hence we have
		\begin{align*}
		    \frac{1}{2\pi i}\tr\oint_{\Gamma_{-n,1/2}}2(P+z)L_z^{-1}dz=n^2
		\end{align*}
		Noting that we can write
	    \begin{align*}
	        L_s^{-1}&=\frac{A_{-1}}{s+n}+A_0(s),
	    \end{align*}
	    we see that
	    \begin{align*}
	        A(s)^{-1}(A(s)-B(s))=\frac{\epsilon L_0A_{-1}WL_0^{-1}}{s+n}+\epsilon L_0A_0(s)WL_0^{-1}.
	    \end{align*}
	    We define the constants
	    \begin{align*}
	        C_{k,n}=\sup_{s\in \overline{D(-n,1/2)}}\left\{\norm{L_0A_0(s)WL_0^{-1}}_{H^k\rightarrow H^k}\right\},
	    \end{align*}
	    where $\overline{D(-n,1/2)}$ is the closed disc of radius $1/2$ around $-n$. We can use \cref{rouche} to obtain the following result.
	    \begin{prop}\label{freqcontrol}
	            Suppose $W\in C^k(\overline{B_1})$, pick $0<\delta<1/2$ and $n\in\mathbb{N}$ such that $n<k$. Let
	            \begin{align*}
	                \tilde{C}_{k,n}=\min \left\{\frac{1}{2\norm{L_0A_{-1}WL_0^{-1}}_{H^k\rightarrow H^k}},\frac{1}{C_{k,n}}\right\}.
	            \end{align*}
	            Then for $0<\epsilon<\tilde{C}_{k,n}\delta$, we have
	            \begin{align*}
	                \frac{1}{2\pi i}\tr\oint_{\Gamma_{-n,\delta}}2(P+z)(L_z+\epsilon W)^{-1}dz=\frac{1}{2\pi i}\tr\oint_{\Gamma_{-n,\delta}}2(P+z)L_z^{-1}dz.
	            \end{align*}
	    \end{prop}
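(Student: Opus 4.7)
The plan is to apply Theorem \ref{rouche} to the pair of holomorphic Fredholm families $A(s) = L_s L_0^{-1}$ and $B(s) = A(s) + \epsilon W L_0^{-1} = (L_s + \epsilon W)L_0^{-1}$, viewed as operators $H^k(B_1) \to H^k(B_1)$, on the contour $\partial U = \Gamma_{-n,\delta}$. Before invoking Rouché, I would check that $A(s)$ is invertible on $\Gamma_{-n,\delta}$: since $\delta < 1/2$, the only quasinormal frequency of the unperturbed problem inside $\overline{D(-n,\delta)}$ is $-n$ itself, so $L_s^{-1}$ (and hence $A(s)^{-1} = L_0 L_s^{-1}$) has no poles on the circle. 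The invertibility of $B(s)$ on $\partial U$ will not be imposed separately; it will follow from the Neumann series once the norm estimate below is in place.

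Next I would compute $A(s)^{-1}(A(s) - B(s)) = -\epsilon L_0 L_s^{-1} W L_0^{-1}$ and insert the Laurent expansion $L_s^{-1} = A_{-1}/(s+n) + A_0(s)$ that is singled out in the discussion above to obtain the decomposition
\begin{align*}
    A(s)^{-1}(A(s)-B(s)) = -\frac{\epsilon\, L_0 A_{-1} W L_0^{-1}}{s+n} - \epsilon\, L_0 A_0(s) W L_0^{-1},
\end{align*}
exactly as foreshadowed in the paragraph preceding the proposition. On the circle $|s+n|=\delta$ the triangle inequality, together with the definition of $C_{k,n}$ as the supremum of the holomorphic piece over $\overline{D(-n,1/2)}$, gives
\begin{align*}
    \norm{A(s)^{-1}(A(s)-B(s))}_{H^k\to H^k} \le \frac{\epsilon \norm{L_0 A_{-1} W L_0^{-1}}_{H^k\to H^k}}{\delta} + \epsilon C_{k,n}.
\end{align*}
Using $\epsilon < \tilde{C}_{k,n}\delta$ and the two bounds built into $\tilde{C}_{k,n}$, the first term is bounded by $\tilde{C}_{k,n}\norm{L_0 A_{-1} W L_0^{-1}}_{H^k\to H^k} \le 1/2$, while the second is bounded by $\tilde{C}_{k,n}\delta C_{k,n} \le \delta < 1/2$. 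Hence the total is strictly less than $1$ on $\partial U$, verifying the hypothesis of Theorem \ref{rouche}.

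With Rouché's theorem in hand, the conclusion reads
\begin{align*}
    \frac{1}{2\pi i}\tr\oint_{\Gamma_{-n,\delta}}\partial_s A(s) A(s)^{-1}\,ds = \frac{1}{2\pi i}\tr\oint_{\Gamma_{-n,\delta}}\partial_s B(s) B(s)^{-1}\,ds.
\end{align*}
To finish I would translate the integrands. Since $\partial_s A(s) = \partial_s L_s \cdot L_0^{-1} = 2(P+s) L_0^{-1}$ and $A(s)^{-1} = L_0 L_s^{-1}$, the left-hand integrand collapses to $2(P+s)L_s^{-1}$. The same computation for $B$ (the perturbation $\epsilon W L_0^{-1}$ is independent of $s$, so $\partial_s B = 2(P+s)L_0^{-1}$ and $B(s)^{-1} = L_0(L_s+\epsilon W)^{-1}$) yields $2(P+s)(L_s+\epsilon W)^{-1}$, which is precisely the right-hand side of the claimed identity.

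The calculation is essentially bookkeeping once Rouché is set up; the only genuinely delicate point is the split of the operator norm bound into a part dominated by the singular Laurent coefficient and a part dominated by the holomorphic remainder, and making sure that the factor $\delta$ in $\tilde{C}_{k,n}\delta$ is tuned so that both halves contribute strictly less than $1/2$. Everything else — in particular the verification that $A(s)$ is invertible on $\partial U$ and that the resulting contour integrals agree with the ones written in the statement — reduces to identities that follow directly from the definitions of $A(s)$ and $B(s)$.
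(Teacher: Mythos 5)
Your proposal is correct and follows essentially the same route as the paper: the same decomposition $A(s)^{-1}(A(s)-B(s))=\pm\bigl(\epsilon L_0A_{-1}WL_0^{-1}/(s+n)+\epsilon L_0A_0(s)WL_0^{-1}\bigr)$, the same splitting of the bound into a singular part controlled by $\epsilon/\delta$ times the residue term and a holomorphic part controlled by $\epsilon C_{k,n}$, and the same application of Theorem \ref{rouche} on $\Gamma_{-n,\delta}$. The only differences are cosmetic: you make explicit the invertibility of $A(s)$ on the contour and the translation of $\partial_sA\,A^{-1}$ into $2(P+s)L_s^{-1}$, both of which the paper leaves to the surrounding discussion.
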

	    \begin{proof}
	            We see that for $\epsilon < \tilde{C}_{k,n}\delta$, we have on a circle of radius $\delta$ around $-n$,
	            \begin{align*}
	                \norm{A(s)^{-1}(A(s)-B(s))}_{H^k\rightarrow H^k}&\le \frac{\epsilon}{\delta}\norm{L_0A_{-1}WL_0^{-1}}_{H^k\rightarrow H^k}+\epsilon C_{k,n}\\
	                &<\tilde{C}_{k,n}\norm{L_0A_{-1}WL_0^{-1}}_{H^k\rightarrow H^k}+\tilde{C}_{k,n}C_{k,n}\delta\\
	                &\le \frac{1}{2}+\delta <1.
	            \end{align*}
	            Thus the conditions in \cref{rouche} are met and we have
	            \begin{align*}
	                \frac{1}{2\pi i}\tr\oint_{\Gamma_{-n,\delta}}2(P+z)(L_z+\epsilon W)^{-1}dz=\frac{1}{2\pi i}\tr\oint_{\Gamma_{-n,\delta}}2(P+z)L_z^{-1}dz=n^2.
	            \end{align*}
	            \end{proof}
	   The upshot of this result is that we can control the distance, $\delta$, which a quasinormal frequency arising from a zero-damped mode can move from its unperturbed value provided the perturbation is of size $O(\delta)$.
	   \begin{thm}
	   Let $W\in C^{\infty}(\overline{B_1};\mathbb{R})$ and fix $n\in\mathbb{N}$ and $0<\delta<1/2$. We assume further that $\norm{W}_{C^{k}}=1$. Then there exists $M_k>0$ such that for $\epsilon<M_k\delta$, there exist $n^2$ quasinormal frequencies (counted with multiplicity) of $L_s+\epsilon W$ inside $D(-n,\delta)$ for $n\in\{1,2,\dots, k\}$.
	   \end{thm}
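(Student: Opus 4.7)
The plan is to apply Proposition \ref{freqcontrol} together with the generalized argument principle of Theorem \ref{argumentprinciple}; the result is essentially a packaging of these two tools. For each fixed $n$, Proposition \ref{freqcontrol} asserts that the trace integral over $\Gamma_{-n,\delta}$ is invariant under the perturbation $\epsilon W$ provided $\epsilon<\tilde C_{k,n}\delta$, and Theorem \ref{argumentprinciple} identifies the trace integral with the sum of null multiplicities of $(L_s+\epsilon W)^{-1}$ inside $D(-n,\delta)$, i.e.\ with the count of quasinormal frequencies there. The unperturbed value of this integral is $n^2$: from the eigenvalue factorisation
\begin{align*}
\lambda_{n',l,m}(s)=(s+l+1+2n')(s+l+2+2n'),
\end{align*}
the pairs $(l,n')$ producing a zero at $s=-n$ contribute a total of $\sum_{l=0}^{n-1}(2l+1)=n^2$ when weighted by the multiplicity $2l+1$ of each spherical harmonic sector.

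First I would fix $k$ and reduce matters to the production of a single constant $M_k$ valid uniformly for every $n\in\{1,\ldots,k\}$. Since $\norm{W}_{C^k}=1$ and $L_0^{-1}\colon H^k(B_1)\rightarrow H^k(B_1)$ is bounded, both $\norm{L_0A_{-1}WL_0^{-1}}_{H^k\rightarrow H^k}$ and $C_{k,n}$ are finite positive constants for each individual $n$, and hence $\tilde C_{k,n}>0$. Because the range of $n$ under consideration is the finite set $\{1,\ldots,k\}$, setting
\begin{align*}
M_k:=\min_{1\le n\le k}\tilde C_{k,n}
\end{align*}
yields a strictly positive number. Then for $\epsilon<M_k\delta$, Proposition \ref{freqcontrol} applies simultaneously for every admissible $n$ and gives
\begin{align*}
\frac{1}{2\pi i}\tr\oint_{\Gamma_{-n,\delta}}2(P+z)(L_z+\epsilon W)^{-1}dz=\frac{1}{2\pi i}\tr\oint_{\Gamma_{-n,\delta}}2(P+z)L_z^{-1}dz=n^2.
\end{align*}

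Finally I would invoke Theorem \ref{argumentprinciple} applied to the holomorphic Fredholm family $A(s)=(L_s+\epsilon W)L_0^{-1}$ on a neighbourhood of $\overline{D(-n,\delta)}$. Since $\partial_s A(s)A(s)^{-1}=2(P+s)(L_s+\epsilon W)^{-1}$ and $A(s)$ itself is holomorphic (so all nontrivial contributions come from $N_{s_0}(A^{-1})$ at the zeros of $A$, which are precisely the quasinormal frequencies of $L_s+\epsilon W$), the displayed equality counts $n^2$ such frequencies with multiplicity inside $D(-n,\delta)$. The main subtlety is extracting a uniform $M_k$, which here is free because we restrict to the finite range $n\le k$; a version with a single constant valid for every $n\in\mathbb{N}$ would require controlling $\tilde C_{k,n}$ as $n\rightarrow\infty$, which is genuinely harder and is not addressed by this argument. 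A secondary point is the simplicity of the poles of the resolvent, which has already been handled in Subsection \ref{polesdiscussion} and ensures that the null multiplicity of $A^{-1}$ agrees with the more elementary notion of a pole count.
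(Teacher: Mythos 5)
Your proposal is correct and follows essentially the same route as the paper, whose proof is simply ``apply \cref{freqcontrol} for each $n\le k$''; you usefully make explicit the uniform constant $M_k=\min_{1\le n\le k}\tilde C_{k,n}$ and the identification of the trace integral with the frequency count via \cref{argumentprinciple}. One small slip: since $A(s)$ is holomorphic the nonzero term in $\tr\Pi_{s_0}=N_{s_0}(A)-N_{s_0}(A^{-1})$ is $N_{s_0}(A)$ (the null multiplicity of $A$ at its characteristic values), not $N_{s_0}(A^{-1})$, though this does not affect the argument.
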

	   \begin{proof}
	   This is simply an application of \cref{freqcontrol} above.
	   \end{proof}
		\subsection{Computing the perturbed spectrum}
		The above result tells us the maximum size a potential can be given how much control we want on the frequencies. It does not give an explicit expression for the perturbed frequencies and how they depend on the potential we add to the equation. In the case of a spherically symmetric potential, we can achieve this in principle: we can obtain a series expansion for the perturbed frequencies.\\\\
		The $SO(3)$ symmetry of the system means that at frequency $-n$, there are $n^2$ modes corresponding to different spherical harmonics. Each of these will be affected by $W$ in a different way in general and thus it is difficult to get precise information on how they are perturbed in this situation. When $W$ is spherically symmetric, multiplication by $W$ commutes with projections onto any angular momentum sector, so we can separate the equation and lift this degeneracy. If $W$ is a non-trivial superposition of angular modes, this doesn't work since there is mixing between angular momentum sectors and the equations are coupled. At this stage, the poles are still simple and the residues are now rank one operators, which enables us to use \cref{argumentprinciple} to get an expression for perturbed frequencies.
		\\\\
		First we define the projections. For $u\in C^{\infty}(\overline{B_1})$, let us define \begin{align*}
		    (\Pi_{lm}u)(r,\theta,\phi)=Y_{lm}(\theta,\phi)\int_{S^2}u(r,\theta',\phi')Y_{lm}(\theta',\phi')\sin^2\theta ' d\theta' d\phi '.
		\end{align*}
		One can show that this is in fact bounded on $C^{\infty}(\overline{B_1})$ with respect to the $H^k(B_1)$ norm for each $k\ge 1$ and so extends to a continuous linear map $H^k(B_1)\rightarrow H^k(B_1)$.\\\\
		We assume that $W\in C^{k}(\overline{B_1})$ is spherically symmetric and apply the above methods on $\Pi_{lm}(L_s+W)$ where $\Pi_{lm}$ is the projection onto the angular momentum sector associated with the spherical harmonic $Y_{lm}(\theta,\phi)$. We shall omit explicitly writing the projection $\Pi_{lm}$ when considering operators like $(\Pi_{lm}L_s)^{-1}$ in the following calculations to reduce clutter. Then for $n< k$, consider the functional:
		\begin{align}
		    s_n[W]:=&\frac{1}{2\pi i}\tr \oint_{\Gamma_{-n,\delta}} z\partial_z( L_z+W)\left(L_z+W\right)^{-1}dz\\=&\frac{1}{2\pi i}\tr \oint_{\Gamma_{-n,\delta}} 2z(P+z)\left(L_z+W\right)^{-1}dz,
		\end{align}
		where $\delta<1$. We note that
		\begin{align*}
		    s_n[0]&=\frac{1}{2\pi i}\tr \oint_{\Gamma_{-n,\delta}} z\partial_z( L_z)L_z^{-1}dz\\
		    &=\frac{1}{2\pi i}\tr \oint_{\Gamma_{-n,\delta}} (z+n)\partial_z( L_z)L_z^{-1}dz-\frac{n}{2\pi i}\tr \oint_{\Gamma_{-n,\delta}}\partial_z( L_z)L_z^{-1}dz
		\end{align*}
		and since $L_z^{-1}$ has a simple pole at $z=-n$, the integrand in the first term is holomorphic and hence the integral evaluates to 0. Since the residue is a one-dimensional projection, it follows that $s_n[0]=-n$. Now, take $0<\epsilon< 1$ sufficiently small so that
		\begin{align*}
		    \frac{1}{2\pi i}\tr \oint_{\Gamma_{-n,\delta}}2(P+z)L_z^{-1}dz=\frac{1}{2\pi i}\tr \oint_{\Gamma_{-n,\delta}}2(P+z)(L_z+\epsilon W)^{-1}dz.
		\end{align*}
		This ensures that we are enclosing just one pole of $(L_s+\epsilon W)^{-1}$ and that it is also simple. We observe that
		\begin{align*}
		     s_n[\epsilon W]+n&=\frac{1}{2\pi i}\tr \oint_{\Gamma_{-n,\delta}} 2z(P+z)\left(\left(L_z+\epsilon W\right)^{-1}-L_z^{-1}\right)dz\\
		     &=\frac{1}{2\pi i}\tr \oint_{\Gamma_{-n,\delta}} 2z(P+z)L_z^{-1}\left(\left(I+\epsilon WL_z^{-1}\right)^{-1}-I\right)dz.
		\end{align*}
		Since $L_z$ is invertible for $z\in\Gamma_{-n,\delta}$, $WL_z^{-1}$ is bounded in the operator norm topology along the contour and hence its operator norm will take a maximum value on $\Gamma_{-n,\delta}$ by continuity. So if we take $\epsilon$ sufficiently small, we have $\norm{\epsilon W L_z^{-1}}<1$ along the contour and we may use a von Neumann series to expand it:
		\begin{align*}
     s_n[\epsilon W]+n=\frac{1}{2\pi i}\tr \oint_{\Gamma_{-n,\delta}} 2z(P+z)L_z^{-1}\sum_{m=1}^{\infty}(-\epsilon)^m(WL_z^{-1})^mdz.
\end{align*}
By uniform convergence in the operator norm topology of the sum, we may exchange the sum and the integral and consider for each $m$ the following:
\begin{align*}
    \frac{1}{2\pi i}\oint_{\Gamma_{-n,\delta}} 2z(P+z)L_z^{-1}(WL_z^{-1})^mdz.
\end{align*}
We write
\begin{align*}
    L_z^{-1}=\frac{A_{-1}}{z+n}+A_0(z)
\end{align*}
and consider its poles to see that the radius of convergence of the Taylor series of $A_0(z)$ about $-n$ is 1, so we can write in this disc
\begin{align*}
    A_0(z)=A_0(-n)+\sum_{m=1}^{\infty}\frac{A_0^{(m)}(-n)}{m!}(z+n)^m,
\end{align*}
where $A_0^{(m)}(z)$ is the $m$th derivative of $A_0$. Let us write $A_0=A_0(-n)$ and, for $m\in\mathbb{N}$,
\begin{align*}
    A_m:=\frac{A_0^{(m)}(-n)}{m!}.
\end{align*}
This gives us the expansion
\begin{align*}
    L_z^{-1}(WL_z^{-1})^m=\sum_{i_0=-1}^{\infty}\dots\sum_{i_m=-1}^{\infty}A_{i_0}(WA_{i_1})\dots (WA_{i_m})(z+n)^{i_0+\dots +i_m}
\end{align*}
and hence (noting that $2z(P+z)=2(z+n)^2+2(P-2n)(z+n)-2n(P-n)$), we see that
\begin{align*}
    2z(P+z)L_z^{-1}(WL_z^{-1})^m=&2\sum_{i_0=-1}^{\infty}\dots\sum_{i_m=-1}^{\infty}A_{i_0}(WA_{i_1})\dots (WA_{i_m})(z+n)^{i_0+\dots +i_m+2}\\
    &+2(P-2n)\sum_{i_0=-1}^{\infty}\dots\sum_{i_m=-1}^{\infty}A_{i_0}(WA_{i_1})\dots (WA_{i_m})(z+n)^{i_0+\dots +i_m+1}\\
    &-2n(P-n)\sum_{i_0=-1}^{\infty}\dots\sum_{i_m=-1}^{\infty}A_{i_0}(WA_{i_1})\dots (WA_{i_m})(z+n)^{i_0+\dots +i_m}.
\end{align*}
Using the residue theorem, we see that
\begin{align*}
    \frac{1}{2\pi i}\oint_{\Gamma_{-n,\delta}} 2z(P+z)L_z^{-1}(WL_z^{-1})^mdz=&2\sum_{i_0+\dots +i_m=-3}A_{i_0}(WA_{i_1})\dots (WA_{i_m})\\
    &+2(P-2n)\sum_{i_0+\dots +i_m=-2}A_{i_0}(WA_{i_1})\dots (WA_{i_m})\\
    &-2n(P-n)\sum_{i_0+\dots +i_m=-1}A_{i_0}(WA_{i_1})\dots (WA_{i_m}).
\end{align*}
Since $A_0(s)$ is holomorphic at $-n$ with radius of convergence 1, $\norm{A_m}_{H^k\rightarrow H^k}<1$. Let $B_k=\max\{1,\norm{A_{-1}}_{H^k\rightarrow H^k}\}$
\begin{lemma}\label{convergence}
For $\epsilon<1/(16B_k\norm{W}_{C^k})$, 
\begin{align*}
    \sum_{m=1}^{\infty}(-\epsilon)^m\oint_{\Gamma_{-n,\delta}} 2z(P+z)L_z^{-1}(WL_z^{-1})^mdz
\end{align*}
converges in the trace norm topology.
\end{lemma}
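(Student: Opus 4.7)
The plan is to establish absolute convergence of the series in the trace norm by bounding each summand separately. By the residue computation immediately preceding the lemma, for each $m \geq 1$ the integral equals a finite sum of terms of the form
$$\mathcal{P} \cdot A_{i_0}(WA_{i_1})\cdots(WA_{i_m}),$$
where $\mathcal{P} \in \{2I,\, 2(P-2n),\, -2n(P-n)\}$ and the multi-index $(i_0,\ldots,i_m) \in \{-1,0,1,\ldots\}^{m+1}$ satisfies $\sum_k i_k \in \{-3,-2,-1\}$. Since this sum is strictly negative and each $i_k \geq -1$, every tuple has at least one index equal to $-1$, so every summand contains at least one factor of the residue $A_{-1}$.

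The crucial observation is that $A_{-1}$ has finite rank: its image is $\ker L_{-n}$, the finite-dimensional space of polynomial quasinormal modes exhibited in \eqref{qnm}. Hence $A_{-1}$ is trace class with $\|A_{-1}\|_{\mathrm{tr}} \leq n^2 B_k$, and moreover $PA_{-1}$ is bounded (even finite rank) because $P$ preserves the polynomial degree. For the Taylor coefficients $A_m$ with $m \geq 0$, the radius-of-convergence bound $\|A_m\|_{H^k\to H^k} \leq 1$ controls their operator norm, and Cauchy estimates applied to the holomorphic family $(P-c)L_z^{-1}$ near $-n$ bound $\|(P-c)A_m\|_{\mathrm{op}}$ uniformly. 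Multiplication by $W$ on $H^k$ is bounded by a universal constant times $\|W\|_{C^k}$.

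Combining these via submultiplicativity $\|T_1 T_2 T_3\|_{\mathrm{tr}} \leq \|T_1\|_{\mathrm{op}} \|T_2\|_{\mathrm{tr}} \|T_3\|_{\mathrm{op}}$, and splitting into the cases $i_0 = -1$ (where $\mathcal{P}A_{i_0}$ is itself trace class) and $i_0 \geq 0$ (where $\mathcal{P}A_{i_0}$ is bounded and a later $A_{-1}$ supplies the trace-class factor), each individual summand has trace norm bounded by $C \cdot B_k^{m+1} \|W\|_{C^k}^m$ for a constant $C$ depending only on $n$ and $k$. A stars-and-bars count, via the substitution $j_k = i_k + 1 \geq 0$, shows that the number of multi-indices with $\sum i_k = -p$ is $\binom{2m+1-p}{m} \leq 4^{m+1}$. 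Summing over the three residue sums yields a bound of order $C'(4 B_k \|W\|_{C^k})^m$ for the $m$th integral, so multiplying by $\epsilon^m$ and invoking $\epsilon < 1/(16 B_k \|W\|_{C^k})$ produces a convergent geometric series with ratio $1/4$, establishing absolute convergence of the original series in the trace norm.

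The most delicate technical point will be controlling $(P-c)A_{i_k}$ in operator norm for $i_k \geq 0$, since $P$ is an unbounded first-order differential operator on $H^k$. The resolution is that $L_z^{-1}$ has range in $D^{k+1}(L_z) \subset H^{k+1}$, so $PL_z^{-1}: H^k \to H^k$ is bounded and holomorphic in $z$ on a neighbourhood of $-n$; Cauchy estimates applied to this operator-valued holomorphic family then control each $(P-c)A_{i_k}$ uniformly. The numerical factor $16$ in the hypothesis precisely accommodates the three residue sums, the $4^{m+1}$ counting of multi-indices, and absorption of the lower-order constants from the prefactors $\mathcal{P}$, with some margin ensuring strict contraction of the geometric ratio.
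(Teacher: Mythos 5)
Your proof is correct and follows essentially the same route as the paper: expand the contour integral via the residue theorem into a sum over multi-indices $(i_0,\dots,i_m)$ each containing at least one factor of the finite-rank residue $A_{-1}$, count these by stars and bars after the substitution $j_p=i_p+1$, and sum the resulting geometric series under the hypothesis $\epsilon<1/(16B_k\norm{W}_{C^k})$. The only (harmless) difference is in how the trace norm is controlled: you use submultiplicativity through the trace-class factor $A_{-1}$ in each summand, whereas the paper bounds the trace norm of the whole integral by its rank times its operator norm, which costs an extra factor of roughly $4^m$ (yielding the paper's $16^m$ rather than your $4^m$) but still converges under the same hypothesis.
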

\begin{proof}
We note that since, $i_0+\dots +i_m<0$, each of the $A_{i_0}(WA_{i_1})\dots (WA_{i_m})$ is a rank 1 operator and hence the integral gives a finite rank operator with rank bounded by the number of $(i_0,i_1,\dots, i_m)$ which obey the conditions given. By setting $j_p=i_p+1$ for $p=1,2,\dots m$, we see that each of the conditions are equivalent to one of the following:
\begin{align*}
    j_0+j_1+\dots + j_m&=m,\\
    j_0+j_1+\dots + j_m&=m-1,\\
    j_0+j_1+\dots + j_m&=m-2,
\end{align*}
for $j_p$ non-negative integers. By standard arguments, the rank of the operator is at most
\begin{align*}
    \binom{2m}{m}+\binom{2m-1}{m}+\binom{2m-2}{m}<3\cdot 2^{2m}.
\end{align*}
We can bound the trace norm $\norm{\cdot}_{1}$ of a finite rank operator on a Hilbert space by its operator norm and its rank, so we have:
\begin{align*}
\left\lVert{\frac{1}{2\pi i}\oint_{\Gamma_{-n,\delta}} 2z(P+z)L_z^{-1}(WL_z^{-1})^mdz}\right\rVert_{1}<3\cdot 4^{m}\left\lVert{\frac{1}{2\pi i}\oint_{\Gamma_{-n,\delta}} 2z(P+z)L_z^{-1}(WL_z^{-1})^mdz}\right\rVert_{H^k\rightarrow H^k}.
\end{align*}
We also have the following bound:
\begin{align*}
    \norm{A_{i_0}(WA_{i_1})\dots (WA_{i_m})}_{H^k\rightarrow H^k}\le B_k^{m+1}\norm{W}_{C^k}^m
\end{align*}
and hence
\begin{align*}
    \left\lVert{\frac{1}{2\pi i}\oint_{\Gamma_{-n,\delta}} 2z(P+z)L_z^{-1}(WL_z^{-1})^mdz}\right\rVert_{1}<18 M 16^mB_k^{m+1}\norm{W}_{C^k}^m,
\end{align*}
where $M$ depends on $n$ and the operator norm of $P$. This gives the result. 
\end{proof}
Thus we have the following result:
\begin{thm}\label{seriesexpansion}
Let $W\in C^{\infty}(\overline{B_1};\mathbb{R})$ be a spherically symmetric potential. Then for $\epsilon$ sufficiently small, there exist quasinormal frequencies $s_n$ of $L_s+\epsilon W$ and $S_m\in \mathbb{C}$ such that
\begin{align*}
    s_n=-n+\sum_{m=1}^{\infty}S_m\epsilon^m.
\end{align*}
Furthermore, we can compute the $S_m$ explicitly:
\begin{align*}
    S_m=\frac{(-1)^m}{2\pi i}\tr \oint_{\Gamma_{-n,\delta}} 2z(P+z)L_z^{-1}(WL_z^{-1})^mdz.
\end{align*}
\end{thm}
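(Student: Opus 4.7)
The theorem is essentially a corollary of the machinery set up in the preceding paragraphs; my plan is simply to organize those pieces into a coherent argument.

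First I would verify that the functional $s_n[\epsilon W]$ defined through the contour integral actually returns the perturbed quasinormal frequency. By \cref{freqcontrol}, for $\epsilon$ small enough (depending on $n$ and $\delta<1/2$) the operator $L_z+\epsilon W$ has exactly one pole of its inverse inside $\Gamma_{-n,\delta}$ with null multiplicity $1$, so $(L_z+\epsilon W)^{-1}$ has a simple pole there. Using spherical symmetry of $W$, I would decompose via the projections $\Pi_{lm}$ and work sector by sector; in each sector the residue is a rank-one projection, which makes $\tr \oint z\,\partial_z(L_z+\epsilon W)(L_z+\epsilon W)^{-1}dz/(2\pi i)$ equal to $z_{\mathrm{pole}}$ by the argument principle (\cref{argumentprinciple}). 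The rewriting $\partial_z L_z=2(P+z)$ then gives the explicit form stated for $s_n[\epsilon W]$.

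Next I would carry out the von Neumann expansion of $(L_z+\epsilon W)^{-1}=L_z^{-1}(I+\epsilon WL_z^{-1})^{-1}$ on the contour $\Gamma_{-n,\delta}$, which is justified because $\|WL_z^{-1}\|_{H^k\to H^k}$ is continuous in $z$ on this compact set and hence bounded, so for $\epsilon$ sufficiently small we have $\|\epsilon WL_z^{-1}\|<1$ uniformly on $\Gamma_{-n,\delta}$. This produces the identity
\begin{align*}
s_n[\epsilon W]+n=\sum_{m=1}^{\infty}(-\epsilon)^m\cdot\frac{1}{2\pi i}\tr\oint_{\Gamma_{-n,\delta}}2z(P+z)L_z^{-1}(WL_z^{-1})^m\,dz,
\end{align*}
once I justify the interchange of sum, integral, and trace. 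The interchange of sum and integral is uniform-convergence in operator norm; the interchange with the trace is supplied by \cref{convergence}, which guarantees convergence of the series in trace norm for $\epsilon<1/(16B_k\|W\|_{C^k})$ and in particular ensures each term is trace-class (being of finite rank). Reading off the coefficient of $\epsilon^m$ immediately gives the claimed formula for $S_m$.

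Finally I would package the result: set $s_n:=s_n[\epsilon W]$, noting this is the unique quasinormal frequency inside $\Gamma_{-n,\delta}$ by the initial step, and rewrite the identity as $s_n=-n+\sum_{m\ge 1}S_m\epsilon^m$. The main obstacle is really the bookkeeping in \cref{convergence} — controlling the combinatorial explosion of the rank bound $3\cdot 2^{2m}$ coming from the Laurent expansion of $L_z^{-1}(WL_z^{-1})^m$ and balancing it against the operator-norm bound $B_k^{m+1}\|W\|_{C^k}^m$ — but since this lemma is already established, what remains for me is only to assemble the pieces and check that $\epsilon$ can be chosen small enough to simultaneously satisfy the hypotheses of \cref{freqcontrol}, the von Neumann convergence, and \cref{convergence}; taking the minimum of the three thresholds suffices.
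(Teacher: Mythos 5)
Your proposal is correct and follows essentially the same route as the paper: project onto a fixed angular momentum sector so the residue at $-n$ is rank one, identify $s_n[\epsilon W]$ with the perturbed frequency via the weighted argument principle and \cref{freqcontrol}, expand $(L_z+\epsilon W)^{-1}$ in a von Neumann series on the contour, and invoke \cref{convergence} to exchange the sum with the trace. Your explicit remark that one must take the minimum of the several smallness thresholds on $\epsilon$ is a point the paper leaves implicit, but it is the same argument.
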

\begin{proof}
We begin by projecting into any given angular momentum sector $(l,m)$ as above and performing the same steps to obtain a von Neumann series within a trace. Since the von Neumann series given above converges absolutely in the trace norm topology for small enough $\epsilon$ by \cref{convergence}, we can use continuity of trace to exchange the summation and trace operations to get
\begin{align*}
     s_n[\epsilon W]+n&=\frac{1}{2\pi i}\sum_{m=1}^{\infty}(-1)^m\epsilon^m\tr \oint_{\Gamma_{-n,\delta}} 2z(P+z)L_z^{-1}(WL_z^{-1})^mdz\\
     &=\sum_{m=1}^{\infty}S_m\epsilon^m,
\end{align*}
where
\begin{align*}
    S_m=\frac{(-1)^m}{2\pi i}\tr \oint_{\Gamma_{-n,\delta}} 2z(P+z)L_z^{-1}(WL_z^{-1})^mdz.
\end{align*}
\end{proof}
The first order change to the frequency, $\delta s_n$, can be expressed as:
\begin{align*}
    \delta s_n=2\tr \left[(P-2n)A_{-1}WA_{-1}-n(P-n)(A_0WA_{-1}+A_{-1}WA_0)\right].
\end{align*}
We can write $A_{-1}=u_n\theta_n$ where $u_n$ is the quasinormal mode corresponding to this angular momentum sector and frequency normalized so that $\norm{u_n}_{H^k}=1$. Note that due to \cref{argumentprinciple}, the operator $-2n(P-n)A_{-1}$ is a projection, so the normalization of $\theta_n$ is fixed i.e. $-2n\theta_n(Pu_n)+2n^2\theta_n(u_n)=1$. This gives us
\begin{align*}
    \delta s_n =2\theta_n(Wu_n)\theta_n((P-2n)u_n)-2n\theta_n(PA_0Wu_n+WA_0Pu_n)+2n^2\theta_n(A_0Wu_n+WA_0u_n).
\end{align*}
In general, this is quite difficult to compute explicitly since it involves finding the holomorphic part of the inverse operator in a neighbourhood of the pole. This calculation is manageable for the simple case of $\delta s_1$ for a constant potential $W$, however it is lengthy and not crucial to the main arguments of this paper. As such, we shall omit it and simply state that the result is
\begin{align*}
    \delta s_1=-1.
\end{align*}
This agrees with our exact calculation in \cref{constant}: for $|W|<1/4$, the corresponding frequency is:
\begin{align*}
    s&=-\frac{3}{2}+ \frac{1}{2}\sqrt{1-4 W}\\
	        &=-\frac{3}{2}+\frac{1}{2}\left(1-\frac{4 W}{2}+O(W^2)\right)\\
	        &=-1-W+O(W^2).
\end{align*}
\subsection{The extremal limit}
Let us now take $V\in C^{\infty}(\mathbb{R}^3;\mathbb{R})$ which obeys the conditions
\begin{align*}
    |\mathbf{x}|^{|\alpha|+2}\partial^{\alpha}V(\mathbf{x})\rightarrow 0\quad \text{as}\quad |\mathbf{x}|\rightarrow \infty
\end{align*}
for all multi-indices $\alpha$ and consider the equation
\begin{align*}
    -\Box_g\psi + 2\kappa^2\psi +V\psi=0.
\end{align*}
We can perform the same manipulations to obtain \eqref{rescaledequation}. Let $\chi\in C^{\infty}(\overline{B_1})$ be such that $\chi \equiv 0$ on $B_{1/3}$ and $\chi \equiv 1$ on $B_1\setminus B_{2/3}$ and note that our problem is now considering the invertibility of
\begin{align*}
    L_{s/\kappa}+(1-\chi)W+\chi W.
\end{align*}
Since $1-\chi$ vanishes to all orders on the boundary, we see that $-\kappa \mathbb{N}$ is contained in the quasinormal spectrum of $L_{s/\kappa}+(1-\chi)W$ from previous results. Again, it is important to note that there may be additional frequencies that result due to the addition of this potential, which will be important when we try to apply \cref{rouche} to this operator while treating $\chi W$ as the perturbation. The following lemma establishes that the perturbation due to $\chi W$ gets smaller as $\kappa \rightarrow 0$ because of the decay of the potential.
\begin{lemma}\label{extrlmt}
Let $\chi W$ be defined as above. Then given $\epsilon>0$, there exists a constant $R$ depending on $\epsilon$ and $W$ and a constant $C$ depending only on $k$ and $\chi$ such that for all $0<\kappa<1/(3R)$,
\begin{align*}
    \norm{\chi W}_{C^k}<C\epsilon.
\end{align*}
\end{lemma}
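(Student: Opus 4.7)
The plan is to reduce everything to a pointwise bound on $\partial^\alpha(\chi W)$ for every multi-index $\alpha$ with $|\alpha|\le k$, using the Leibniz rule to split the derivative between $\chi$ and $W$, and then exploiting the precise form of the rescaling $W(\mathbf{x})=\kappa^{-2}V(\mathbf{x}/\kappa)$. Differentiating this relation gives
\begin{align*}
	\partial^{\gamma}W(\mathbf{x})=\kappa^{-2-|\gamma|}(\partial^{\gamma}V)(\mathbf{x}/\kappa),
\end{align*}
so each derivative that falls on $W$ introduces an extra factor $\kappa^{-1}$. The entire content of the lemma is that this singular $\kappa$-dependence is exactly absorbed by the assumed decay of $V$ at infinity, because $\chi$ is supported away from the origin and so the rescaled argument $\mathbf{x}/\kappa$ is forced to be large.

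The first step is to fix $\epsilon>0$ and use the hypothesis $|\mathbf{y}|^{|\gamma|+2}\partial^{\gamma}V(\mathbf{y})\to 0$ as $|\mathbf{y}|\to\infty$ to choose a single radius $R=R(\epsilon,W,k)$ such that for all $|\mathbf{y}|>R$ and all multi-indices $|\gamma|\le k$,
\begin{align*}
	|\partial^{\gamma}V(\mathbf{y})|<\frac{\epsilon}{|\mathbf{y}|^{|\gamma|+2}}.
\end{align*}
Since there are only finitely many $\gamma$ of interest, this can be arranged simultaneously. Next, I would note that $\supp\chi\subset B_1\setminus B_{1/3}$, so any $\mathbf{x}$ at which $\chi W$ is nonzero satisfies $|\mathbf{x}|\ge 1/3$. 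Therefore, whenever $\kappa<1/(3R)$, the rescaled point satisfies $|\mathbf{x}/\kappa|\ge 1/(3\kappa)>R$, so the above bound applies with $\mathbf{y}=\mathbf{x}/\kappa$.

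Substituting and simplifying gives
\begin{align*}
	|\partial^{\gamma}W(\mathbf{x})|<\kappa^{-2-|\gamma|}\cdot\frac{\epsilon}{(|\mathbf{x}|/\kappa)^{|\gamma|+2}}=\frac{\epsilon}{|\mathbf{x}|^{|\gamma|+2}}\le\epsilon\cdot 3^{|\gamma|+2},
\end{align*}
which is the key cancellation: every power of $\kappa^{-1}$ has been converted into a harmless constant depending only on the geometry of $\supp\chi$. Applying the Leibniz rule,
\begin{align*}
	\partial^{\alpha}(\chi W)(\mathbf{x})=\sum_{\beta\le\alpha}\binom{\alpha}{\beta}\partial^{\beta}\chi(\mathbf{x})\,\partial^{\alpha-\beta}W(\mathbf{x}),
\end{align*}
and estimating each term using the bound above yields $|\partial^{\alpha}(\chi W)(\mathbf{x})|\le C(k,\chi)\epsilon$ uniformly on $B_1$, from which the $C^{k}$ bound follows by taking a supremum. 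The constant depends only on $k$, on the $C^{k}$-norm of $\chi$, and on a uniform geometric factor, so it is independent of $\kappa$ as required.

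There is no essential obstacle here; the lemma is really a bookkeeping statement that the scaling $V\mapsto\kappa^{-2}V(\cdot/\kappa)$ is a contraction on $C^k$ away from the origin precisely because of the assumed $|\mathbf{x}|^{-|\alpha|-2}$ decay on derivatives of $V$. The only thing to be slightly careful about is choosing $R$ uniformly in the finitely many multi-indices $|\gamma|\le k$, and ensuring $R$ is chosen before $\kappa$ is constrained, so that the final threshold $\kappa<1/(3R)$ depends only on $\epsilon$ and $W$.
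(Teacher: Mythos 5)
Your proof is correct and follows essentially the same route as the paper's: choose $R$ uniformly over the finitely many multi-indices using the decay hypothesis, use $\lvert\mathbf{x}\rvert\ge 1/3$ on $\supp\chi$ so that $\lvert\mathbf{x}/\kappa\rvert>R$ when $\kappa<1/(3R)$, cancel the $\kappa^{-\lvert\gamma\rvert-2}$ factor against $(\lvert\mathbf{x}\rvert/\kappa)^{\lvert\gamma\rvert+2}$, and finish with the Leibniz rule. No substantive differences.
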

\begin{proof}
We note that the condition on $V$ means that given $\epsilon>0$, there exists $R>1/\epsilon$ such that for $|\mathbf{x}|>R$,
\begin{align*}
    |\mathbf{x}|^{|\alpha|+2}|\partial^{\alpha} V(\mathbf{x})|<\epsilon
\end{align*}
for each multi-index $\alpha$ such that $|\alpha|\le k$. Thus if $0< \kappa< 1/(3R)$, we have for $|\mathbf{x}|>1/3$,
\begin{align*}
    |\partial^{\alpha}W(\mathbf{x})|&=\frac{1}{\kappa^{|\alpha|+2}}|(\partial^{\alpha}V)(\mathbf{x}/\kappa)|\\
    &=\frac{1}{|\mathbf{x}|^{|\alpha|+2}}\left\lvert\frac{|\mathbf{x}|^{|\alpha|+2}}{\kappa^{|\alpha|+2}}(\partial^{\alpha}V)(\mathbf{x}/\kappa)\right\lvert\\
    &<3^{|\alpha|+2}\epsilon\\
    &<3^{k+2}\epsilon.
\end{align*}
This gives us
\begin{align*}
    |\partial^{\alpha}(\chi W)(\mathbf{x})|&\le \sum_{\beta\le\alpha}\binom{\alpha}{\beta}|\partial^{\beta}W(\mathbf{x})||\partial^{\alpha-\beta}\chi(\mathbf{x})|\\
    & < \norm{\chi}_{C^{k}}\sum_{\beta\le\alpha}\binom{\alpha}{\beta}3^{|\beta|+2}\epsilon\\
    &\le 9\epsilon\norm{\chi}_{C^{k}}3^k\sum_{\beta\le\alpha}\binom{\alpha}{\beta}\\
    & =9\cdot 6^k \epsilon \norm{\chi}_{C^{k}},
\end{align*}
which gives the estimate.
\end{proof}
With this in mind, we define
\begin{align*}
    \widehat{\chi W}:=\frac{\chi W}{\norm{\chi W}_{C^k}}
\end{align*}
and note the operator norm of this as a map $H^k(B_1)\rightarrow H^{k}(B_1)$ is independent of $\kappa$. We write
\begin{align*}
    A^W(s)&=(L_{s}+(1-\chi)W)L_0^{-1},\\
    B^W(s)&=A^{W}(s)+\chi W L_0^{-1},
\end{align*}
and note that as before we can decompose
\begin{align*}
    (L_{s}+(1-\chi)W)^{-1}=\frac{A^{W}_{-1}}{s+n}+A_0^W(s),
\end{align*}
where $A_0^W(s)$ is bounded in an open neighbourhood of $-n$. The fact that this is a simple pole follows from noting that $L_s'L_s^{-1}$ has simple poles and that the kernel of $L_s'$ is trivial at each of these. Thus there exists $0<r<1/2$ such that $A_0(s)$ is bounded on the disc of radius $r$ around $-n$. Note that we have suppressed the dependence of $r$ on $k$ and $n$ - this will in general depend on how many extra frequencies are introduced by the potential. Let us define
\begin{align*}
	        C_{k,n}^W=\sup_{s\in \overline{D(-n,r)}}\left\{\norm{L_0A_0^W(s)\widehat{\chi W}L_0^{-1}}_{H^k\rightarrow H^k}\right\}
	    \end{align*}
and
\begin{align*}
    \tilde{C}_{k,n}^W=\min \left\{\frac{1}{2\norm{L_0A^W_{-1}\widehat{\chi W}L_0^{-1}}_{H^k\rightarrow H^k}},\frac{1}{C^W_{k,n}}\right\}.
\end{align*}
We can obtain analogous results to before.
\begin{prop}\label{mainresultprop}
For each $n\in \mathbb{N}$ and $0<\delta<r(n,k)$, there exists $R>0$ such that for all $0<\kappa<1/(3R)$, there exists at least one quasinormal frequency $s$ inside $D(-n\kappa, \delta \kappa)$ i.e.
\begin{align*}
    |s+n\kappa |<\delta\kappa.
\end{align*}
\end{prop}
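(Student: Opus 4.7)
The plan is to apply Rouch\'e's theorem (\cref{rouche}) to the pencils $A^W(\s)$ and $B^W(\s)$ on the disc $D(-n,\delta)$ in the rescaled frequency $\s = s/\kappa$. By construction $B^W(\s) = (L_\s + W)L_0^{-1}$, so poles of $B^W(\s)^{-1}$ inside $D(-n,\delta)$ correspond (after undoing the rescaling $s = \kappa\s$) to quasinormal frequencies of the full problem inside $D(-n\kappa,\delta\kappa)$. The family $A^W(\s)$ already has at least one such pole at $\s = -n$: indeed, $(1-\chi)W$ and all its derivatives vanish on $\partial B_1$, so \cref{potentialcomodes} (applied to the potential $(1-\chi)W$ in place of $W$) yields $-n$ as a quasinormal frequency of $L_\s + (1-\chi)W$. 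Thus it suffices to verify the hypothesis of \cref{rouche} on $\partial D(-n,\delta)$ and read off the conclusion.

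For the key estimate I would write
\begin{align*}
A^W(\s)^{-1}\bigl(A^W(\s) - B^W(\s)\bigr)
= -L_0\bigl(L_\s + (1-\chi)W\bigr)^{-1}\chi W\, L_0^{-1},
\end{align*}
insert the Laurent expansion $(L_\s + (1-\chi)W)^{-1} = A^W_{-1}/(\s + n) + A_0^W(\s)$, and normalise $\chi W = \norm{\chi W}_{C^k}\,\widehat{\chi W}$. Then on the circle $|\s+n|=\delta$,
\begin{align*}
\bigl\lVert A^W(\s)^{-1}\bigl(A^W(\s) - B^W(\s)\bigr)\bigr\rVert_{H^k\to H^k}
\le \norm{\chi W}_{C^k}\Bigl[\tfrac{1}{\delta}\bigl\lVert L_0 A^W_{-1}\widehat{\chi W}L_0^{-1}\bigr\rVert_{H^k\to H^k} + C^W_{k,n}\Bigr].
\end{align*}
Provided one can arrange $\norm{\chi W}_{C^k} < \tilde C^W_{k,n}\,\delta$, the same two-term bookkeeping used in the proof of \cref{freqcontrol} yields $\tfrac12 + \delta < 1$ and \cref{rouche} applies, giving the required quasinormal frequency in $D(-n,\delta)$.

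To produce such a $\kappa$, I would invoke \cref{extrlmt}: given $\epsilon = \tilde C^W_{k,n}\delta/C$ (with $C$ the $k,\chi$-dependent constant of that lemma), there exists $R>0$, depending on $\epsilon$ and on $V$, such that $\norm{\chi W}_{C^k} < C\epsilon = \tilde C^W_{k,n}\delta$ for every $0 < \kappa < 1/(3R)$, closing the argument. The main obstacle is that $A^W_{-1}$, $A_0^W$, the admissible radius $r(n,k)$, and hence $\tilde C^W_{k,n}$, all depend on $(1-\chi)W$, which in turn depends on $\kappa$ through $V(\cdot/\kappa)/\kappa^2$; there is no reason $(1-\chi)W$ should be small in the extremal limit because the decay hypothesis on $V$ only controls behaviour at infinity. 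One therefore needs a uniform-in-$\kappa$ lower bound on $\tilde C^W_{k,n}$ and a uniform positive lower bound on $r(n,k)$, preventing spurious frequencies of $L_\s + (1-\chi)W$ from migrating into $D(-n,\delta)$ as $\kappa\to 0$. A natural route is to exploit the fact that the co-mode $\tilde D^{n-1}\Theta_{00}$ of \cref{puredScomodes} is supported on $\partial B_1$ and so is annihilated by multiplication by $(1-\chi)W$; this both pins the pole at $\s=-n$ to be simple and gives a $\kappa$-independent formula for the leading part of $A^W_{-1}$, from which the required uniform bounds on $\tilde C^W_{k,n}$ should follow.
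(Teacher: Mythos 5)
Your argument is essentially the paper's own proof: the same splitting $L_{\s}+(1-\chi)W+\chi W$ with the pole at $\s=-n$ supplied by \cref{potentialcomodes}, the same pencils $A^W$, $B^W$, the same two-term bound on $A^W(\s)^{-1}(A^W(\s)-B^W(\s))$ over $\partial D(-n,\delta)$ via the Laurent expansion of $(L_{\s}+(1-\chi)W)^{-1}$, and the same appeal to \cref{extrlmt} to arrange $\norm{\chi W}_{C^k}<\tilde{C}^W_{k,n}\delta$ for $0<\kappa<1/(3R)$. The uniformity issue you raise at the end --- that $A^W_{-1}$, $A^W_0$, the radius $r(n,k)$ and hence $\tilde{C}^W_{k,n}$ all depend on $\kappa$ through $(1-\chi)W$ (which is large near the origin in the extremal limit), so that \cref{extrlmt} alone does not obviously close the argument --- is a genuine subtlety that the paper's proof passes over in silence, and your proposed remedy via the horizon-supported co-modes of \cref{puredScomodes} is a reasonable route to the missing uniform lower bound.
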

\begin{proof}
This is analogous to the proofs of the results from before. We see again that on the circle of radius $\delta$ around $-n$ that
\begin{align*}
    \norm{A^W(\s)^{-1}(A^W(\s)-B^W(\s))}_{H^k\rightarrow H^k}&\le \frac{\norm{\chi W}_{C^k}}{\delta}\norm{L_0A^W_{-1}\widehat{\chi W}L_0^{-1}}_{H^k\rightarrow H^k}+\norm{\chi W}_{C^k} C^W_{k,n},
\end{align*}
where again $\s=s/\kappa$. We note that by \cref{extrlmt}, there exists $R$ such that for all $0<\kappa<1/(3R)$ $\norm{\chi W}_{C^k}<\tilde{C}^W_{k,n}\delta$. Thus we have
\begin{align*}
    \norm{A^W(\s)^{-1}(A^W(\s)-B^W(\s))}_{H^k\rightarrow H^k}&< \tilde{C}^W_{k,n}\norm{L_0A^W_{-1}\widehat{\chi W}L_0^{-1}}_{H^k\rightarrow H^k}+\tilde{C}^W_{k,n} C^W_{k,n}\delta\\
    &\le \frac{1}{2}+\delta<1,
\end{align*}
since $\delta<r\le 1/2$. The conditions in \cref{rouche} are met and so the number of quasinormal frequencies (counted with multiplicity) are unchanged. So there exists at least one $s$ such that
\begin{align*}
    \left\lvert \frac{s}{\kappa}+n\right\lvert&<\delta\\
    \Rightarrow |s+n\kappa|&<\delta\kappa <\frac{\kappa}{2}.
\end{align*}
\end{proof}
\begin{thm}\label{mainresultthm}
Let $V\in C^{\infty}(\mathbb{R}^3;\mathbb{R})$ be such that
\begin{align*}
    |\mathbf{x}|^{|\alpha|+2}\partial^\alpha V(\mathbf{x})\rightarrow 0 \quad \text{as} \quad |\mathbf{x}|\rightarrow \infty
\end{align*}
for each multi-index $\alpha$. Let $g$ be the metric on de Sitter as outlined in Section 2. Then the equation
\begin{align*}
    -\Box_{g}\psi+2\kappa^2\psi+V\psi=0
\end{align*}
exhibits the phenomenon of zero-damped quasinormal frequencies.
\end{thm}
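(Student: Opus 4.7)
The plan is to derive \cref{mainresultthm} as a direct corollary of \cref{mainresultprop}, which already carries all the quantitative content; what remains is to organise the quasinormal frequencies it supplies into data matching \cref{defnzdm}.

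Concretely, I would first fix some $\delta \in (0, 1/2)$ with $\delta < r(n, k)$ for each relevant $n$, and then invoke \cref{mainresultprop} for each $n \in \mathbb{N}$. This yields a threshold $\kappa_n > 0$ and a function $s_n : (0, \kappa_n] \to \mathbb{C}$ taking values in the quasinormal spectrum of the perturbed operator, with the quantitative estimate $|s_n(\kappa) + n\kappa| < \delta \kappa$. After replacing each $\kappa_n$ by $\min_{m \leq n} \kappa_m$, we may assume that the thresholds are non-increasing in $n$.

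Next, I would set $\kappa_0 := \kappa_1$ and extend each $s_n$ from $(0, \kappa_n]$ to the common interval $(0, \kappa_0]$ by assigning, on the leftover piece, any available quasinormal frequency of the equation (the Fredholm machinery of \cref{Fredholm} guarantees at least one exists for each admissible $\kappa$). Verification of \cref{defnzdm} is then routine. Property 1 holds by construction. For property 2 one takes $\alpha = 0$: the triangle inequality $|s_n(\kappa)| \leq (n + \delta)\kappa$ on $(0, \kappa_n]$ immediately forces $s_n(\kappa) \to 0$ as $\kappa \to 0$ for each fixed $n$. For property 3, fix $\kappa \in (0, \kappa_0]$ and look at those indices $n$ with $\kappa \leq \kappa_n$: along them, the bound $\Re s_n(\kappa) \leq -(n - \delta)\kappa$ drives the real parts to $-\infty$.

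I would not expect any serious analytical obstacle, as all the genuine work is already packaged into \cref{extrlmt} (which controls the smallness of $\chi W$ in the extremal limit) and the Rouch\'e-type estimate (\cref{rouche}) powering \cref{mainresultprop}. The only mildly awkward point is reconciling the single common interval $(0, \kappa_0]$ demanded by \cref{defnzdm} with the $n$-dependent thresholds $\kappa_n$ supplied by the proposition, which the arbitrary extension above dispatches; pinning down in detail which extension is most natural is a matter of taste rather than mathematics.
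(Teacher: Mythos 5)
Your proposal takes essentially the same route as the paper: Theorem \ref{mainresultthm} is proved there in two lines as a direct application of \cref{mainresultprop}, exactly as you do, and your extra bookkeeping (uniformising the thresholds $\kappa_n$ and checking the three clauses of \cref{defnzdm}) only makes explicit what the paper leaves implicit.

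One remark on the point you flag as ``a matter of taste'': your patch for property 3 is not quite airtight. For a fixed $\kappa\in(0,\kappa_0]$, the set of indices $n$ with $\kappa\le\kappa_n$ may be finite (the thresholds from \cref{mainresultprop} genuinely depend on $n$, not least because reaching the frequency $-n$ requires working in $H^k$ with $k>n-1/2$, so the constants $\tilde C^W_{k,n}$ and radii $r(n,k)$ change with $n$), and on the leftover indices your arbitrary extension gives no control whatsoever on $\Re s_n(\kappa)$, so the limit $\Re s_n(\kappa)\to-\infty$ is not established for that $\kappa$. To be fair, the paper's own proof is equally silent on this uniformity issue, so this is an inherited looseness rather than a defect specific to your argument; but since you explicitly raised the reconciliation problem, you should either note that it is only resolved for $\kappa$ below every threshold simultaneously, or supply a genuinely uniform-in-$n$ version of \cref{mainresultprop}.
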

\begin{proof}
This is simply an application of \cref{mainresultprop}, noting that for each $\kappa$ sufficiently small, there is a subset of the quasinormal spectrum $\{s_{n}\}_{n=1}^{\infty}$ such that
\begin{align*}
    |s_n+n\kappa|<\frac{\kappa}{2}.
\end{align*}
\end{proof}
This is sufficient to establish the existence of these frequencies, however we don't have much control on the size of the error - we just have $O(\kappa)$ circles where we may find them (see \cref{fig:qnf}).
\begin{figure}
    \centering
    \begin{subfigure}[b]{0.5\textwidth}
    \centering
    \begin{tikzpicture}
\node (A) at ( 0,3){$\Im s$};
\node (B) at (1,0){$\Re s$};
\node (C) at (0,-3){};
\draw[<-] (A) -- (C);
\draw[->] (-6.5,0) -- (B);
\draw (0,0) -- ({-2.5*sqrt(3)},2.5);
\draw (0,0) -- ({-2.5*sqrt(3)},-2.5);
\filldraw[color=red!60, fill=red!30, thick,fill opacity=0.5](-4,0) circle (2);
\filldraw[color=blue!60, fill=blue!30, thick,fill opacity=0.5](-2,0) circle (1);
\filldraw[color=black!60!green!60, fill=black!30!green!30, thick,fill opacity=0.5](-1,0) circle (0.5);
\filldraw[color=black!60!green](-1,0) circle (0.05);
\filldraw[color=blue](-2,0) circle (0.05);
\filldraw[color=red](-4,0) circle (0.05);
\end{tikzpicture}
\caption{$s=-\kappa$}
    \end{subfigure}%
    \begin{subfigure}[b]{0.5\textwidth}
    \centering
    \begin{tikzpicture}
\node (A) at ( 0,3){$\Im s$};
\node (B) at (1,0){$\Re s$};
\node (C) at (0,-3){};
\draw[<-] (A) -- (C);
\draw[->] (-6.5,0) -- (B);
\draw (0,0) -- (-6.5,{6.5/sqrt(15)});
\draw (0,0) -- (-6.5,{-6.5/sqrt(15)});
\filldraw[color=red!60, fill=red!30, thick,fill opacity=0.5](-4,0) circle (1);
\filldraw[color=blue!60, fill=blue!30, thick,fill opacity=0.5](-2,0) circle (0.5);
\filldraw[color=black!60!green!60, fill=black!30!green!30, thick,fill opacity=0.5](-1,0) circle (0.25);
\filldraw[color=black!60!green](-1,0) circle (0.05);
\filldraw[color=blue](-2,0) circle (0.05);
\filldraw[color=red](-4,0) circle (0.05);
\end{tikzpicture}
\caption{$s=-2\kappa$}
    \end{subfigure}
    \caption{The shaded regions correspond to the possible location of quasinormal frequencies for differing values of $\kappa$. We see that as $\kappa$ gets smaller, the frequency approaches 0 in a sector with angle $\arcsin(1/2n)$.}
    \label{fig:qnf}
\end{figure}
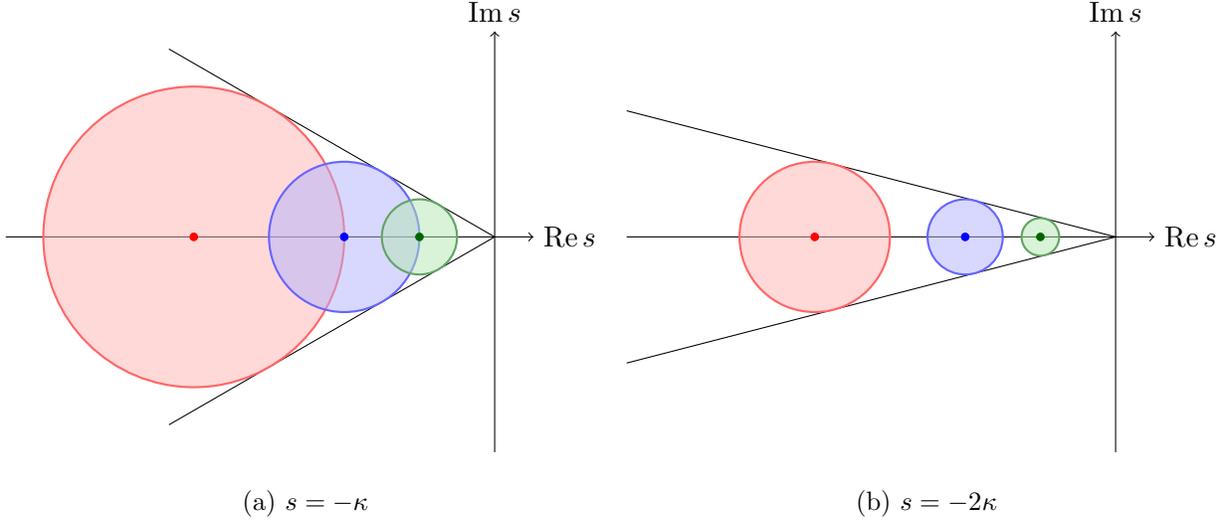
If we assume better decay of $V$, we can get a better idea of how far away this subset of the spectrum can stray from $-\kappa\mathbb{N}$. An example of such a condition would be
\begin{align*}
    |\mathbf{x}|^{|\alpha|+3}|\partial^\alpha V(\mathbf{x})|<M_1
\end{align*}
for some constant $M_1>0$ and for each multi-index $\alpha$. This would give us existence of quasinormal frequencies $s_n(\kappa)$ such that for $\kappa$ small enough,
\begin{align*}
    |s_n(\kappa)+\kappa n|<M_2\kappa^2
\end{align*}
for some $M_2>0$ independent of $\kappa$ and thus $s_n(\kappa)=-n\kappa + O(\kappa^2)$.
\section{Perturbations to the metric}\label{blackhole}
The arguments above are robust in the sense that they can be applied to more general spherically symmetric black hole spacetimes with a cosmological horizon. We again consider the conformally coupled Klein-Gordon equation but now with the metric
\begin{align*}
    g=-f_\Lambda(r)dt^2+\frac{dr^2}{f_{\Lambda}(r)}+r^2g_{S^2}
\end{align*}
where 
\begin{align*}
    f_{\Lambda}(r)=1+w_{\Lambda}(r)+\frac{\Lambda}{3}\alpha_\Lambda r-\frac{\Lambda}{3}r^2
\end{align*}
for some $\alpha_\Lambda\in\mathbb{R}$ bounded for $\Lambda\in[0,\Lambda_0]$ such that $\alpha_\Lambda\rightarrow \alpha_0$ as $\Lambda\rightarrow 0$ and with $w_{\Lambda}(r)\in C^{\infty}((0,\infty);\mathbb{R})$ for each $\Lambda\in [0,\Lambda_0]$ obeying the following additional conditions:
\begin{enumerate}
    \item for each $k\in\mathbb{N}_0, \Lambda\in [0,\Lambda_0]$, $r^k\partial_r^k w_{\Lambda}(r)\rightarrow 0$ as $r\rightarrow \infty$,
    \item $1+w_{0}(r)$ has at least one root and finitely many others. We further assume the largest root, $r_e(0)$, is simple and $w_0(r)<0$ for $r>r_e(0)$,
    \item there exists $0<t<r_e(0)$ such that for each $k\in\mathbb{N}_0$, there exists $\beta_k>0$ so the following holds
    \begin{align*}
        \sup_{r\ge t}\left|r^k\partial_r^kw_{\Lambda}(r)-r^k\partial_r^kw_0(r)\right|<\frac{\Lambda}{3}\beta_k.
    \end{align*}
   
\end{enumerate}
These conditions are sufficient by standard arguments (for sufficiently small $\Lambda$) to establish the following properties:
\begin{enumerate}
    \item $f_\Lambda(r)$ has a simple root at $r=r_c(\Lambda)$ (cosmological horizon) such that $r_c(\Lambda)\rightarrow \infty$ and $\Lambda r_c^2/3\rightarrow 1$ as $\Lambda\rightarrow 0$,
    \item $f_\Lambda(r)$ has a simple root at $r=r_e(\Lambda)$ (event horizon) such that $r_e(\Lambda)\rightarrow r_e(0)$ as $\Lambda\rightarrow 0$,
    \item $f_\Lambda(r)$ has no roots in the interval $(r_e(\Lambda),r_c(\Lambda))$ for $\Lambda$ sufficiently small,
    \item $1-f_\Lambda(r)>0$ for $r_e(\Lambda)<r<r_c(\Lambda)$ provided $\Lambda$ is sufficiently small and furthermore, there exists $\epsilon>0$ such that $1-f_\Lambda(r)>\epsilon$ for any $r_c(\Lambda)/3<r<r_c(\Lambda)$.
\end{enumerate}
Important examples of $w_{\Lambda}$ which obey these conditions are $w_{\Lambda}(r)=-2m/r$ (the Schwarzschild-de Sitter solution) and $w_\Lambda(r)=-2m/r+q^2/r^2$ (Reissner-Nordstr\"om-de Sitter). For an example of $w_{\Lambda}(r)$ which depends on $\Lambda$, see the next subsection. From now on, we shall suppress the dependence on $\Lambda$ of some quantities. The equation we shall consider is the conformal Klein-Gordon equation for this spacetime:
\begin{align*}
    \Box_g \psi -\frac{R}{6}\psi=0,
\end{align*}
where $R$ is the Ricci scalar of the spacetime. In this case, we see that
\begin{align*}
    R&=4\Lambda-\frac{2\Lambda\alpha_{\Lambda}}{r}-w_{\Lambda}''(r)-\frac{4w_{\Lambda}'(r)}{r}-\frac{2w_{\Lambda}(r)}{r^2}\\
    &=4\Lambda +\frac{\Lambda}{3}V^w(r).
\end{align*}
We use a coordinate transformation analogous to the pure de Sitter case to get a hyperboloidal slicing:
\begin{align*}
    \tau &= t - \int_{r_0}^r\frac{\sqrt{1-f_{\Lambda}(\xi)}}{f_{\Lambda}(\xi)}d\xi, \\
    \rho &= \frac{r}{r_c}.
\end{align*}
where $r_0$ is fixed so that it lies in $(r_e(\Lambda),r_c(\Lambda))$ for all $\Lambda\in[0,\Lambda_0]$. Note that the fourth condition given above is necessary to make this transformation well-defined. As before (since the horizons we consider have non-zero surface gravity) the results in \cite{cmw} allow us to define the quasinormal frequencies and similar reasoning to \cref{review} means it suffices to consider following equation on $\overline{B_1 \setminus B_{\rho_e}}$:
\begin{align}
\begin{split}
    -\frac{f_{\Lambda}(r_c\rho)}{r_c^2}\partial_\rho^2u&-\left(\frac{f_{\Lambda}'(r_c\rho)}{r_c}+\frac{2}{r_c^2\rho}f_{\Lambda}(r_c\rho)-\frac{2s}{r_c}\sqrt{1-f_{\Lambda}(r_c\rho)}\right)\partial_\rho u\\&+\left(s^2+\frac{2\Lambda}{3}+\frac{2s}{r_c\rho}\sqrt{1-f_{\Lambda}(r_c\rho)}-\frac{s}{2}\frac{f_{\Lambda}'(r_c\rho)}{\sqrt{1-f_{\Lambda}(r_c\rho)}}+\frac{\Lambda}{3}V^w(r_c\rho)\right)u-\frac{\slashed{\Delta}u}{r_c^2\rho^2}=F,
\end{split}
\end{align}
where $\rho_e=r_e/r_c$ and $F$ is a function constructed from the initial data. It will be useful to introduce the notation $U_r:=B_1\setminus B_r$ from this point onwards. Dividing through by $3/\Lambda$ and setting $\s=s\sqrt{3/\Lambda}$ we get the operator
\begin{align}
\begin{split}
    L_{\s}^{w}u:=&-\frac{3}{\Lambda r_c^2}f_{\Lambda}(r_c\rho)\partial_\rho^2u-\left(\frac{3}{\Lambda r_c^2 }\left(r_cf_{\Lambda}'(r_c\rho)+\frac{2f_{\Lambda}(r_c\rho)}{\rho}\right)-2\s\sqrt{\frac{3}{\Lambda r_c^2}\left(1-f_{\Lambda}(r_c\rho)\right)}\right)\partial_\rho u\\&+\left(\s^2+2+\frac{2\s}{\rho}\sqrt{\frac{3}{\Lambda r_c^2}\left(1-f_{\Lambda}(r_c\rho)\right)}-\frac{\s}{2}\sqrt{\frac{3}{\Lambda r_c^2}}\frac{r_cf_{\Lambda}'(r_c\rho)}{\sqrt{1-f_{\Lambda}(r_c\rho)}}+V^w(r_c\rho)\right)u\\&-\frac{3}{\Lambda r_c^2}\frac{\slashed{\Delta}u}{\rho^2}.
\end{split}
\end{align}
In order to use the results outlined in \cref{roucheresult}, we need to compare the above operator with $L_s$. These operators are, however, defined on different domains. To get around this, we define a smooth, spherically symmetric cut-off $\chi\in C^{\infty}(\overline{B_1})$ such that
\begin{align*}
    \chi(\mathbf{x})=\begin{cases}
    0 & |\mathbf{x}|<\frac{1}{3}\\
    1 & |\mathbf{x}|>\frac{2}{3}
    \end{cases}
\end{align*}
and note that for $\Lambda$ sufficiently small, this vanishes to all orders on the event horizon $\rho=\rho_e$. We now define the family of operators
\begin{align*}
    Q_s:= L_s^w\circ \chi - L_s\circ E_\chi,
\end{align*}
where $E_\chi$ multiplies a function by $\chi$ and extends it to $B_1$ by setting it to zero outside its domain of definition. Note also that we have omitted explicitly writing the map which restricts functions on $B_1$ to functions on $U_{\rho_e}$. This gives us the following decomposition of $L^w_s$:
\begin{align*}
    L^w_s=L^w_s\circ (1-\chi)+L_s\circ E_\chi +Q_s.
\end{align*}
The unbounded operator $L_s^w\circ (1-\chi)+L_s\circ E_\chi$ is a degenerate elliptic operator with the degeneracy occurring precisely at the event and cosmological horizons. Furthermore the surface gravity at these horizons is non-zero, so from the results of \cite{cmw}, this defines a family of Fredholm operators from its domain to $H^k(U_{\rho_e})$. We also have the following estimates:
\begin{lemma}
There exists $\Lambda_0>0$ such that for all $\Lambda<\Lambda_0$, there exists a constant $C>0$ depending only on $k$ such that
\begin{align*}
    \norm{L_0^w(1-\chi)u+L_0\circ E_\chi u)}_{H^k}&\le C\left(\norm{L_0^w u}_{H^k}+\norm{u}_{H^{k}}\right),\\
    \norm{L_0^w u}_{H^k}&\le C\left(\norm{L_0^w(1-\chi)u+L_0\circ E_\chi u)}_{H^k}+\norm{u}_{H^{k}}\right).
\end{align*}
\end{lemma}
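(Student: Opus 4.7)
The plan is to reduce both inequalities to a single absorption estimate on the error operator. First I would establish the algebraic identity
\begin{align*}
L_0^w u - \bigl[L_0^w(1-\chi)u + L_0 E_\chi u\bigr] = (L_0^w - L_0)(\chi u) =: Q_0 u,
\end{align*}
using that $E_\chi u$ agrees with $\chi u$ on $U_{1/3}$ and vanishes on $B_{1/3}$. Both inequalities in the lemma then follow from the triangle inequality and an absorption argument once one proves
\begin{align*}
\norm{Q_0 u}_{H^k(U_{\rho_e})} \le \tfrac{1}{2}\norm{L_0^w u}_{H^k(U_{\rho_e})} + C \norm{u}_{H^k(U_{\rho_e})}
\end{align*}
for $\Lambda<\Lambda_0$ sufficiently small, with $C$ depending only on $k$.

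The next step is to establish smallness of the coefficients of $L_0^w - L_0$ on $\supp(\chi) \subset U_{1/3}$, with smallness $\varepsilon_\Lambda \to 0$ as $\Lambda \to 0$. On $\supp(\chi)$ we have $r = r_c\rho \ge r_c(\Lambda)/3$ with $r_c(\Lambda) \to \infty$; combining the decay condition 1 on $w_0$, the stability condition 3 (which gives $\norm{w_\Lambda - w_0}_{C^k(\{r\ge t\})} \le \Lambda\beta_k/3$), and the limits $\Lambda r_c^2/3 \to 1$ and $\Lambda\alpha_\Lambda r_c/3 \to 0$, a term-by-term comparison of the explicit expressions of $L_0^w$ and $L_0$ in spherical polar coordinates shows that every coefficient of $L_0^w - L_0$ tends to zero in $C^k(\supp\chi)$.

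The core step, and the main obstacle, is converting this coefficient smallness into the operator bound above. A naive estimate would require $H^{k+2}$ regularity of $\chi u$, which is unavailable near the cosmological horizon $\rho = 1$ where $L_0^w$ degenerates. The crucial structural observation is that the principal parts of $L_0^w$ and $L_0$ share the same degeneracy at $\rho = 1$: both $\partial_\rho^2$ coefficients (respectively $-\tfrac{3}{\Lambda r_c^2} f_\Lambda(r_c\rho)$ and $-(1-\rho^2)$) have simple zeros there, and both $\slashed\Delta$ coefficients are constant multiples of $1/\rho^2$. This permits a decomposition $(L_0^w - L_0)(\chi u) = \mu(\rho)\, L_0^w(\chi u) + R(\chi u)$, where $\mu \in C^k(\overline{\supp\chi})$ with $\norm{\mu}_{C^k} = O(\varepsilon_\Lambda)$ and $R$ is a genuinely lower-order differential operator (at most first-order) with $O(\varepsilon_\Lambda)$ coefficients. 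Using the commutator identity $L_0^w(\chi u) = \chi L_0^w u + [L_0^w, \chi]u$, where $[L_0^w, \chi]$ is supported in the annulus $\supp(\nabla\chi) \subset U_{1/3} \setminus U_{2/3}$ on which $L_0^w$ is uniformly elliptic (bounded away from both horizons), interior elliptic estimates combined with the framework of \cite{cmw} will bound both $\norm{[L_0^w,\chi]u}_{H^k}$ and $\norm{R(\chi u)}_{H^k}$ by $C(\norm{L_0^w u}_{H^k} + \norm{u}_{H^k})$. The hard part will be verifying the matching-degeneracies factorisation explicitly — checking that $\mu$ is bounded and $O(\varepsilon_\Lambda)$ uniformly up to $\rho = 1$ and tracking the $\Lambda$-dependence of the residual $R$ — after which taking $\Lambda_0$ small enough that the accumulated smallness falls below $1/2$ yields both inequalities.
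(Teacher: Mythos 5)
Your overall reduction is sound and matches the paper's strategy: both inequalities follow once $(L_0^w-L_0)(\chi u)$ is absorbed, and the coefficient smallness on $\supp\chi$ is established exactly as you describe. The gap is in the core step. The decomposition $(L_0^w-L_0)(\chi u)=\mu(\rho)\,L_0^w(\chi u)+R(\chi u)$ with $R$ at most first order cannot exist: the second-order part of $L_0^w-L_0$ has two independent coefficients, namely $\bigl((1-\rho^2)-\tfrac{3f_\Lambda(r_c\rho)}{\Lambda r_c^2}\bigr)$ on $\partial_\rho^2$ and $\bigl(1-\tfrac{3}{\Lambda r_c^2}\bigr)\rho^{-2}$ on $\slashed{\Delta}$, and a single scalar $\mu$ can match only one of them (matching the radial coefficient forces $\mu=\tfrac{\Lambda r_c^2}{3}\tfrac{1-\rho^2}{f_\Lambda(r_c\rho)}-1$, matching the angular one forces the constant $\mu=\tfrac{\Lambda r_c^2}{3}-1$, and these differ). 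Whichever choice you make, $R$ retains a genuinely second-order term with a small but nonzero coefficient, for instance $\bigl(\tfrac{3}{\Lambda r_c^2}-1\bigr)\rho^{-2}\slashed{\Delta}(\chi u)$. Estimating such a term in $H^k$ requires $k+2$ derivatives of $\chi u$, which is precisely what is unavailable: the framework of \cite{cmw} only yields $\norm{u}_{H^{k+1}}\le C\norm{L_0^wu}_{H^k}$, and interior elliptic estimates do not help because the obstruction sits at the degenerate horizon $\rho=1$, which lies inside $\supp\chi$ but where ellipticity of $L_0^w$ fails --- not in the annulus $\supp\nabla\chi$ where your commutator $[L_0^w,\chi]u$ lives (that term is indeed harmless).

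The missing ingredient is exactly what occupies the bulk of the paper's proof: a direct integration-by-parts estimate, uniform in $\Lambda$, of the form $\norm{(\Delta-\partial_\rho^2)(\chi u)}_{H^k}\le C\norm{u}_{D^{k+1}(L_s^w)}$, obtained by pairing $L_0^w(\chi u)+\tfrac{3}{\Lambda r_c^2}(\Delta-\partial_\rho^2)(\chi u)$ against $(\Delta-\partial_\rho^2)(\chi u)$ in $H^k$ and exploiting that $f_\Lambda$ vanishes at the horizons and that $\partial_\rho f_\Lambda<0$ at the outer one to give the top-order terms a favourable sign. This controls the angular second-order part (together with $\tfrac{2}{\rho}\partial_\rho$) of $\chi u$ all the way up to the horizon; only with that in hand can the radial second-order difference be handled by your matching-zeros observation, via the smooth bounded ratio $\tfrac{\Lambda r_c^2}{3}\tfrac{1-\rho^2}{f_\Lambda(r_c\rho)}$ and its $C^k$ convergence to $1$. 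Without this estimate, or an equivalent substitute supplying uniform control of the non-degenerate second-order derivatives of $\chi u$, your absorption argument cannot close.
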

\begin{proof}
Throughout this section the constant $C$ may change value from line to line to reduce clutter. The key point is that it depends only on $k$. We begin by proving the intermediary estimate 
\begin{align*}
    \norm{(\Delta-\partial_\rho^2)u}_{H^k}\le C\norm{u}_{D^{k+1}(L^w_s)}.
\end{align*}
We begin by observing that
\begin{align*}
    \norm{L^w_0(\chi u)}^2_{H^k}-\left(\frac{3}{\Lambda r_c^2}\right)^2&\norm{\left(\Delta-\partial_\rho^2\right)(\chi u)}_{H^k}^2\\
    &\ge -\frac{6}{\Lambda r_c^2}\left(L^w_0(\chi u)+\frac{3}{\Lambda r_c^2}\left(\Delta-\partial_\rho^2\right)(\chi u),\left(\Delta-\partial_\rho^2\right)(\chi u)\right)_{H^k},
\end{align*}
where $\Delta$ is the Laplacian on $U_{\rho_e}$. We aim to show that the right-hand side can be controlled by a suitable multiple of $\norm{u}_{D^{k+1}(L_s^w)}$. A useful observation from the theory developed in \cite{cmw} is the fact that there exists a constant $C$ depending only on $k$ such that $\norm{u}_{H^{k+1}}\le C\norm{L_0^wu}_{H^k}$, so it suffices to show that the terms we are interested in can be dominated by a multiple of $\norm{u}_{H^{k+1}}$. We switch to polar coordinates so $\Delta-\partial_\rho^2=(2/\rho)\partial_\rho +(1/\rho^2)\slashed{\Delta}$ and consider each of these separately. Since the operators we are interested in commute with angular derivatives, it suffices to prove this for radial ones only.
\begin{align*}
    \partial_\rho^k\left(\frac{2}{\rho}\partial_\rho(\chi u)\right)=\frac{2}{\rho}\partial_\rho^{k+1}(\chi u)+\sum_{j=1}^{k}p_j(\rho)\partial_\rho^j(\chi u)
\end{align*}
for some functions $p_j$ which are bounded on $\supp \chi$. We combine this with
\begin{align*}
    \partial_\rho^{k}\left(L^w_0(\chi u)+\frac{3}{\Lambda r_c^2}\left(\Delta-\partial_\rho^2\right)(\chi u)\right)=-\frac{3f_{\Lambda}}{\Lambda r_c^2}\partial_\rho^{k+2}(\chi u)&+\frac{3}{\Lambda r_c^2}\left(\frac{2}{\rho}(1-f_{\Lambda})-(k+1)\partial_\rho f_{\Lambda}\right)\partial_\rho^{k+1}(\chi u)\\
    &+\sum_{j=1}^kq_j(\rho)\partial_\rho^j(\chi u),
\end{align*}
where the $q_j$ are some functions constructed from $3f_{\Lambda}(r_c\rho)/(\Lambda r_c^2)$.  We observe that
\begin{align*}
    \frac{3}{\Lambda r_c^2}f_{\Lambda}(r_c\rho)-(1-\rho^2)=\frac{3}{\Lambda r_c^2}-1+\frac{3}{\Lambda r_c^2}w_\Lambda(r_c\rho)+\frac{\alpha_\Lambda\rho}{r_c}.
\end{align*}
Restricting to the support of $\chi$, we see that the right hand side and its derivatives converge uniformly to 0 as $\Lambda\rightarrow 0$ by observing that
\begin{align*}
    \left|\partial^k_\rho(w_{\Lambda}(r_c\rho))\right|&\le\frac{\left|r_c^k\rho^k(w_\Lambda^{(k)}(r_c\rho)-w_0^{(k)}(r_c\rho))\right|}{\rho^k}+\frac{\left| r_c^k\rho^kw_0^{(k)}(r_c\rho)\right|}{\rho^k}\\
    &<3^k\left(\frac{\Lambda}{3}\beta_k+|r_c^k\rho^kw_0^{(k)}(r_c\rho)|\right)
    \end{align*}
and that the quantity in the brackets above goes to 0 as $\Lambda\rightarrow 0$. Hence $3f_{\Lambda}(r_c\rho)/(\Lambda r_c^2)\rightarrow 1-\rho^2$ in $C^{l}(\supp \chi)$ as $\Lambda\rightarrow 0$ for each $l\in\mathbb{N}_0$. Thus, there exists $\Lambda_0>0$ such that for all $\Lambda<\Lambda_0$, we can control these functions by some constant. Taking the $L^2$-inner product of the expression above, we see that the only pairing that cannot clearly be controlled by a suitable multiple of $\norm{u}_{H^{k+1}}$ is
\begin{align*}
    -\frac{3}{\Lambda r_c^2}\int_{U_{\rho_e}}\frac{2f_{\Lambda}}{\rho} \partial_\rho^{k+1}(\chi \Bar{u})\partial_\rho^{k+2}(\chi u)dx&= -\frac{3}{\Lambda r_c^2}\int_{U_{\rho_e}}\frac{f_{\Lambda}}{\rho}\partial_\rho \left(\left|\partial_\rho^{k+1}(\chi u)\right|^2\right)dx\\
    &=\frac{3}{\Lambda r_c^2}\int_{U_{\rho_e}}\partial_\rho\left(\frac{f_{\Lambda}}{\rho}\right)\left|\partial_\rho^{k+1}(\chi u)\right|^2dx,
\end{align*}
where the boundary term vanishes due to $f_{\Lambda}$. Thus this term can also be made positive by adding a multiple of $\norm{u}_{H^{k+1}}$ where the constant is independent of $\Lambda$ for $\Lambda$ small enough. To deal with the other term, we write $\chi u=\rho^2 \cdot \chi u/\rho^2=\rho^2 U$ and consider
\begin{align*}
    \partial_\rho^{k}\left(L^w_0(\chi u)+\frac{3}{\Lambda r_c^2}\left(\Delta-\partial_\rho^2\right)(\chi u)\right)=\partial_\rho^{k}\left(L^w_0(\rho^2U)+\frac{3}{\Lambda r_c^2}\left(\Delta-\partial_\rho^2\right)(\rho^2U)\right).
\end{align*}
We only need to consider the two highest order derivative terms since after multiplying by $\slashed{\Delta}\partial_\rho^kU$ and integrating by parts on the sphere, we are left with terms of the form $t(\rho)\partial_\rho^l\slashed{\nabla}U\cdot \partial_\rho^k\slashed{\nabla}U$ for some function $t$ and $0\le l\le k+2$. These can be controlled by $\norm{u}_{H^{k+1}}$ for $l\le k$ so we only have to consider the terms with $l=k+1$ and $l=k+2$. These are
\begin{align}\label{highestorderterms}
    -\frac{3}{\Lambda r_c^2}\left(\rho^2f_{\Lambda}\partial_\rho^{k+2}U+2(k+2)\rho f_{\Lambda}\partial_\rho^{k+1}U+k\rho^2\partial_\rho f_{\Lambda}\partial_\rho^{k+1}U-\partial_\rho(\rho^2(1-f_{\Lambda}))\partial_\rho^{k+1}U\right).
\end{align}
We start with the highest order term:
\begin{align*}
    \left(\frac{3}{\Lambda r_c^2}\right)^2\int_{U_{\rho_e}}\rho^2f_{\Lambda}\partial_\rho^{k+2}\Bar{U}\slashed{\Delta}\partial_\rho^kUdx=&-\left(\frac{3}{\Lambda r_c^2}\right)^2\int_{U_{\rho_e}}\rho^2f_{\Lambda}\partial_\rho^{k+2}\slashed{\nabla}\Bar{U}\cdot\partial_\rho^k\slashed{\nabla}Udx\\
    =&\left(\frac{3}{\Lambda r_c^2}\right)^2\int_{U_{\rho_e}}\rho^2f_{\Lambda}|\partial_\rho^{k+1}\slashed{\nabla}U|^2dx\\
    &+\left(\frac{3}{\Lambda r_c^2}\right)^2\int_{U_{\rho_e}}(\rho^2\partial_\rho f_{\Lambda}+4\rho f_{\Lambda})\partial_\rho^{k+1}\slashed{\nabla}\Bar{U}\cdot\partial_\rho^k\slashed{\nabla}Udx\\
    \ge & \frac{1}{2}\left(\frac{3}{\Lambda r_c^2}\right)^2\int_{U_{\rho_e}}(\rho^2\partial_\rho f_{\Lambda}+4\rho f_{\Lambda})\partial_\rho|\partial_\rho^k\slashed{\nabla}U|^2dx.
\end{align*}
Since $f_{\Lambda}$ vanishes on the boundary, we can integrate by parts to deduce that the term involving $f_{\Lambda}$ can be dealt with using $\norm{u}_{H^{k+1}}$. Combining this with \eqref{highestorderterms}, we are left with
\begin{align*}
    \frac{1}{2}\left(\frac{3}{\Lambda r_c^2}\right)^2&\int_{U_{\rho_e}}(2\rho-k\rho^2\partial_\rho f_{\Lambda})\partial_\rho|\partial_\rho^k\slashed{\nabla}U|^2dx\\
    &\ge -\frac{1}{2}\left(\frac{3}{\Lambda r_c^2}\right)^2\int_{U_{\rho_e}}(6-4k\rho\partial_\rho f_{\Lambda}-k\rho^2\partial_\rho^2 f_{\Lambda})|\partial_\rho^k\slashed{\nabla}U|^2dx.
\end{align*}
The inequality follows from the fact that the cut-off allows us to ignore the inner boundary and that $\partial_\rho f_{\Lambda}<0$ at the outer boundary, so $2-k\partial_\rho f_{\Lambda}>0$. Thus this is also dominated by a suitable multiple of $\norm{u}_{H^{k+1}}$ and we have the following result: there exists $C,\Lambda_0>0$ depending only on $k$ such that for all $\Lambda<\Lambda_0$,
\begin{align*}
    \norm{\left(\Delta-\partial_\rho^2\right)(\chi u)}_{H^{k}}\le C\left(\norm{L^w_0(\chi u)}_{H^k}+\norm{u}_{H^{k+1}}\right)\le C\norm{u}_{D^{k+1}(L_s^w)}.
\end{align*}
We now use this to prove the lemma. From the product rule and the usual estimates,
\begin{align*}
    \norm{L_0^w(1-\chi)u+L_0\circ E_\chi u)}_{H^k}\le C\left(\norm{L_0^w u}_{H^k}+\norm{u}_{H^{k+1}}\right)+\norm{L_0\circ E_\chi u}_{H^k}.
\end{align*}
Since $f_{\Lambda}$ vanishes only at the horizons and both these zeroes are simple, $(1-\rho^2)/f_{\Lambda}(r_c\rho)$ is smooth and bounded on $\supp \chi$ and thus has a finite $C^{k}$ norm. Thus there exists a constant $C$ such that for all $\Lambda$ small enough,
\begin{align*}
    \norm{L_0\circ E_\chi u}_{H^k}&\le \norm{\frac{\Lambda r_c^2}{3}\frac{1-\rho^2}{f_{\Lambda}(r_c\rho)}\left(L_0^w(\chi u)+\frac{3}{\Lambda r_c^2}(\Delta-\partial_\rho^2)(\chi u)\right)}+C\norm{u}_{D^{k+1}(L^w_s)}\\
    &\le C\left(\norm{L_0^wu}_{H^k}+\norm{u}_{H^{k+1}}\right)\\
    &\le C\norm{u}_{D^{k+1}(L_s^w)}.
\end{align*}
Next we observe that
\begin{align*}
    \norm{L_0^wu}_{H^k}&=\norm{L_0^w(1-\chi)u+L_0^w(\chi u)}_{H^k}\\
    &\le \norm{L_0^w(1-\chi)u+L_0\circ E_\chi u}_{H^k} +\norm{L_0^w(\chi u)-L_0\circ E_\chi u}_{H^k}.
\end{align*}
Focussing on the second term, we have the following inequality
\begin{align*}
    \norm{L_0^w(\chi u)-L_0\circ E_\chi u}_{H^k}\le& \norm{\left((1-\rho^2)-3f_{\Lambda}/(\Lambda r_c^2)\right)\partial_\rho^2(\chi u)+\left(1-3/(\Lambda r_c^2)\right)(\Delta -\partial_\rho^2)(\chi u)}_{H^k}\\
    &+\norm{\left(\frac{3}{\Lambda r_c^2\rho^2}\partial_\rho(\rho^2(1-f_{\Lambda}(r_c\rho))-\frac{1}{\rho^2}\partial_\rho(\rho^2\cdot\rho^2)\right)\partial_\rho(\chi u)}_{H^k}+\norm{V^w\chi u}_{H^k}.
\end{align*}
We can deal with the term involving $V^w$ since
\begin{align*}
    V^w(r_c\rho)=-\frac{6\alpha_\Lambda}{r_c\rho}-\frac{3}{\Lambda r_c^2}\left(\frac{r_c^2\rho^2 w_{\Lambda}''(r_c\rho)}{\rho^2}+\frac{4r_c\rho w_{\Lambda}'(r_c\rho)}{\rho^2}+\frac{2w_{\Lambda}(r_c\rho)}{\rho^2}\right)
\end{align*}
and in $\supp\chi$ that $V^w(r_c\rho)$ and finitely many of its derivatives can be made arbitrarily small by similar reasoning to before. Thus we have both $V^w(r_c\rho)\rightarrow 0$ and $3f_{\Lambda}(r_c\rho)/(\Lambda r_c^2)\rightarrow 1-\rho^2$ in $C^{k+1}(\supp \chi)$ as $\Lambda\rightarrow 0$, so there exists $\Lambda_0>0$ such that for all $\Lambda<\Lambda_0$ we have all the results above and furthermore
\begin{align*}
    \norm{L_0^w(\chi u)-L_0\circ E_\chi u}_{H^k}\le \frac{1}{2}\norm{L_0^wu}.
\end{align*}
Therefore
\begin{align*}
    \frac{1}{2}\norm{L_0^w}\le \norm{L_0^w(1-\chi)u+L_0\circ E_\chi u}_{H^k}+C\norm{u}_{H^{k+1}}
\end{align*}
which yields the result.
\end{proof}
The lemma above establishes that $L_s^w\circ (1-\chi)+L_s\circ E_\chi$ is a holomorphic family of Fredholm operators $D^{k+1}(L_s^w)\rightarrow H^k(U_{\rho_e})$. Furthermore, for $\rho\in [1/3,1]$, the cut-off allows the distributions defined in \cref{comodesection} to satisfy the appropriate conditions to be co-modes of this family of operators at the usual frequencies i.e. $-\mathbb{N}$ is contained in the quasinormal spectrum. We shall treat $Q_s$ as a perturbation to this operator and apply the results of \cref{roucheresult}. Using similar notation, we set
\begin{align*}
    A^w(s)&=(L_s^w\circ (1-\chi)+L_s\circ E_\chi)(L_0^w)^{-1},\\
    B^w(s)&=A^w(s)+ Q_s(L_0^w)^{-1},
\end{align*}
noting that $L_0^w$ is invertible since its kernel is trivial and it is Fredholm of index 0. By \cref{rouche}, the number of frequencies contained within a contour (counted with multiplicity) is the same when
\begin{align*}
    \norm{L_0^w\left(L_s^w\circ (1-\chi) +L_0\circ E_\chi \right)^{-1}Q_s(L_0^w)^{-1}}_{H^k\rightarrow H^k}<1
\end{align*}
holds for all $s\in\Gamma_{-n,\delta}$. We have
\begin{align*}
    &\norm{L_0^w\left(L_s^w\circ (1-\chi) +L_0\circ E_\chi \right)^{-1}Q_s(L_0^w)^{-1}}_{H^k\rightarrow H^k}\\
    &\le \norm{L_0^w\left(L_s^w\circ (1-\chi) +L_0\circ E_\chi \right)^{-1}}_{H^k\rightarrow H^k}\norm{Q_s}_{D^{k+1}\rightarrow H^k}\norm{(L_0^w)^{-1}}_{H^k\rightarrow D^{k+1}}\\
    &\le \norm{L_0^w\left(L_s^w\circ (1-\chi) +L_0\circ E_\chi \right)^{-1}}_{H^k\rightarrow H^k}\norm{Q_s}_{D^{k+1}\rightarrow H^k}.
\end{align*}
To finish the argument, we just need to show that $Q_s$ can be made arbitrarily small as $\Lambda\rightarrow 0$.
\begin{lemma}\label{extremallimitblackhole}
Fix $s\in\mathbb{C}$. Given $\epsilon>0$, there exists $\Lambda_0>0$ and depending only on $s$ and $k$ such that for all $0<\Lambda<\Lambda_0$,
\begin{align*}
    \norm{Q_s}_{D^{k+1}\rightarrow H^k}<\epsilon.
\end{align*}
\end{lemma}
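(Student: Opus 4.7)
The plan is to express $Q_s$ as a second-order differential operator $R_\Lambda(s)$ applied to $\chi u$, show that its coefficients tend to zero in $C^{k+1}(\supp\chi)$ as $\Lambda\to 0$, and then bound the result using estimates of the same flavour as those developed in the preceding lemma. For $\Lambda$ small enough that $\rho_e<1/3$, the function $\chi u$ is supported in $\{\rho\ge 1/3\}\subset U_{\rho_e}$, and on this set $E_\chi u$ and $\chi u$ agree pointwise; hence $Q_s u=(L_s^w-L_s)(\chi u)=:R_\Lambda(s)(\chi u)$, where $R_\Lambda(s)$ is a second-order operator whose coefficients are the pointwise differences of the corresponding coefficients of $L_s^w$ and $L_s$.

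First I would verify that each coefficient of $R_\Lambda(s)$ tends to zero in $C^{k+1}(\supp\chi)$. For the $\partial_\rho^2$ coefficient $a_\Lambda(\rho):=-3f_\Lambda(r_c\rho)/(\Lambda r_c^2)+(1-\rho^2)$, this is precisely what was established in the course of the preceding lemma. The coefficient of $\slashed{\Delta}$ differs by $(-3/(\Lambda r_c^2)+1)/\rho^2$, which tends to zero because $\Lambda r_c^2/3\to 1$. The remaining $\partial_\rho$ and zeroth-order coefficients are built from $V^w(r_c\rho)$, $r_cf_\Lambda'(r_c\rho)/(\Lambda r_c^2)$, and $\sqrt{3(1-f_\Lambda(r_c\rho))/(\Lambda r_c^2)}$: the first goes to zero in $C^{k+1}(\supp\chi)$ by the decay assumptions on $w_\Lambda$ and the boundedness of $\alpha_\Lambda$; the second is $\partial_\rho$ of the already-controlled quantity $3f_\Lambda(r_c\rho)/(\Lambda r_c^2)$; and the third converges to $\rho$ in $C^{k+1}(\supp\chi)$ because, by the fourth property listed just before this section, the expression under the square root is bounded below by a positive constant on $\supp\chi$, so the square root may be differentiated freely.

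The delicate step is promoting this $C^{k+1}$ smallness of coefficients into a $D^{k+1}(L_s^w)\to H^k$ smallness of $Q_s$. A naive estimate would require $H^{k+2}$ control on $\chi u$, which is not contained in the $D^{k+1}(L_s^w)$ norm because of the degeneration of $L_s^w$ at $\rho=1$. My plan is to absorb this degeneration by rewriting each top-order term of $R_\Lambda(s)$ as a small multiplier times a quantity already controlled by $\|u\|_{D^{k+1}(L_s^w)}$. Since $a_\Lambda$ vanishes at $\rho=1$ to at least the order of $f_\Lambda(r_c\rho)$ (which follows from the $C^{k+1}$ convergence $3f_\Lambda(r_c\rho)/(\Lambda r_c^2)\to 1-\rho^2$ up to the boundary $\rho=1$), the ratio $\tilde a_\Lambda:=a_\Lambda\,(\Lambda r_c^2/3)/f_\Lambda(r_c\rho)$ extends smoothly to $\rho=1$ and still tends to zero in $C^{k+1}(\supp\chi)$. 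Writing
\begin{equation*}
a_\Lambda\,\partial_\rho^2(\chi u)=\tilde a_\Lambda\cdot\frac{3f_\Lambda(r_c\rho)}{\Lambda r_c^2}\partial_\rho^2(\chi u)=-\tilde a_\Lambda\,L_s^w(\chi u)+(\text{lower order in }\chi u)
\end{equation*}
converts the radial top-order term into a small multiple of $L_s^w(\chi u)$, whose $H^k$ norm is dominated by $\|u\|_{D^{k+1}(L_s^w)}$ up to harmless commutator terms with $\chi$ (supported in $\{1/3\le\rho\le 2/3\}$, where $L_s^w$ is uniformly elliptic). The angular top-order term in $R_\Lambda(s)$ is handled analogously by pairing its coefficient with $(\Delta-\partial_\rho^2)(\chi u)$, which is controlled via the intermediate estimate established in the preceding lemma.

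The main obstacle is precisely this derivative-loss issue; the rewriting above hinges on the structural fact that the leading-order differences between $L_s^w$ and $L_s$ vanish at $\rho=1$ to the same order as $f_\Lambda(r_c\rho)$, so that the degeneration can be absorbed. Once this is in place, putting all contributions together yields $\|Q_s u\|_{H^k}\le C\bigl(\sup\|\text{coefficients of }R_\Lambda(s)\|_{C^{k+1}(\supp\chi)}\bigr)\|u\|_{D^{k+1}(L_s^w)}$, and choosing $\Lambda_0$ small enough (depending on $s$ and $k$) to make the supremum less than $\epsilon/C$ gives the claimed bound.
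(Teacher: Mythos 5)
Your proposal is correct and follows essentially the same route as the paper's proof: the same term-by-term decomposition of $Q_s$, the same use of the intermediate estimate on $(\Delta-\partial_\rho^2)(\chi u)$ for the angular top-order term, and the same key device of factoring the radial top-order coefficient as a $C^k$-small multiplier (your $\tilde a_\Lambda$ is the paper's $-P_\Lambda$) times $\tfrac{3f_\Lambda}{\Lambda r_c^2}\partial_\rho^2(\chi u)$, which is then absorbed into $L_0^w(\chi u)$ plus controlled lower-order terms. The only difference is that the paper verifies $P_\Lambda\to 0$ in $C^k([1/3,1])$ by an explicit Hadamard-type factorisation of $f_\Lambda$ about $r=r_c$, whereas you assert it as a consequence of the $C^{k+1}$ convergence of $3f_\Lambda/(\Lambda r_c^2)$ to $1-\rho^2$ together with the simple zero at $\rho=1$ --- a valid shortcut, since both functions vanish exactly at $\rho=1$.
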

\begin{proof}
We first fix $k\in\mathbb{N}$ and $0<\epsilon<1$. We also assume $\Lambda$ is sufficiently small so that $1/2<3/(\Lambda r_c^2)<3/2$ and to give the results derived above ($Q_s$ is a holomorphic family of Fredholm operators and $\rho_e<1/3$). For $u\in D^{k+1}(L_s^w)$, we can write
\begin{align*}
    Q_s u=&\left((1-\rho^2)-\frac{3f_{\Lambda}(r_c\rho)}{\Lambda r_c^2}\right)\partial_\rho^2(\chi u)+\left(1-\frac{3}{\Lambda r_c^2}\right)\left(\Delta -\partial_\rho^2\right) (\chi u)\\
    &-\frac{3}{\Lambda r_c^2}\left(r_cw_{\Lambda}'(r_c\rho)+\frac{2}{\rho}w_{\Lambda}(r_c\rho)+\frac{3\alpha_{\Lambda}}{r_c}\right)\partial_\rho (\chi u)+2s\left(\sqrt{\frac{3}{\Lambda r_c^2}(1-f_{\Lambda}(r_c\rho))}-\rho\right)\partial_\rho(\chi u)\\&+s\left(\frac{2}{\rho}\sqrt{\frac{3}{\Lambda r_c^2}\left(1-f_{\Lambda}(r_c\rho)\right)}-\frac{1}{2}\sqrt{\frac{3}{\Lambda r_c^2}}\frac{r_cf_{\Lambda}'(r_c\rho)}{\sqrt{1-f_{\Lambda}(r_c\rho)}}-3\right)\chi u+V^w(r_c\rho)\chi u.
\end{align*}
We deal with each term separately. We have already observed that $V^w(r_c\rho)\rightarrow 0$ in $C^l(\supp\chi)$ as $\Lambda\rightarrow 0$ for each $l\in\mathbb{N}$, so the $V^w\chi u$ term can be made small using that. Since $\Lambda$ is small enough, we have
\begin{align*}
    \frac{3}{\Lambda r_c^2}&\norm{(r_cw_{\Lambda}'(r_c\rho)+2w_{\Lambda}(r_c\rho)/\rho+3\alpha_{\Lambda}/ r_c)\partial_\rho (\chi u)}_{H^k}\\&\le \frac{3}{2}\left(\norm{r_cw_{\Lambda}'(r_c\rho)+2w_{\Lambda}(r_c\rho)/\rho}_{C^k}+\frac{3\alpha_{\Lambda}}{r_c}\right)\norm{u}_{H^{k+1}},
\end{align*}
where $H^l=H^l(U_{\rho_e})$ and $C^l=C^l(\overline{U_{1/3}})$ for each $l\in\mathbb{N}$. From the properties of $w_{\Lambda}$, we see that there exists $\Lambda_0>0$ such that for $\Lambda<\Lambda_0$,
\begin{align*}
    \frac{3}{\Lambda r_c^2}\norm{(r_cw_{\Lambda}'(r_c\rho)+2w_{\Lambda}(r_c\rho)/\rho+3\alpha_{\Lambda}/r_c)\partial_\rho (\chi u)}_{H^k}<\epsilon\norm{u}_{D^{k+1}},
\end{align*}
where $D^{k+1}=D^{k+1}(L_s^w)$. By taking $\Lambda_0$ smaller if necessary, we can also ensure that
\begin{align*}
    \frac{3}{\Lambda r_c^2} (1-f_{\Lambda}(r_c\rho))=\rho^2-\frac{3}{\Lambda r_c^2}w(r_c\rho)+\frac{\alpha_{\Lambda}}{r_c}\rho\in \left[\frac{1}{16},\frac{151}{144}\right]:=I,
\end{align*}
since $3f_{\Lambda}(r_c\rho)/(\Lambda r_c^2)\rightarrow 1-\rho^2$ in $C^l$ for each $l\in\mathbb{N}$. We note that $g(x)=\sqrt{x}$ belongs to $C^{\infty}(I)$ and thus $g, \partial_x g, \dots \partial_x^k g$ are Lipschitz on $I$. Hence we have
\begin{align*}
    \sqrt{\frac{3}{\Lambda r_c^2}(1-f_{\Lambda}(r_c\rho))}\rightarrow \rho \  \text{in}\  C^k\  \text{as}\  \Lambda\rightarrow 0.
\end{align*}
Thus (taking $\Lambda_0$ smaller if necessary) we have
\begin{align*}
    \norm{2s\left(\sqrt{\frac{3}{\Lambda r_c^2}(1-f_{\Lambda}(r_c\rho))}-\rho\right)\partial_\rho(\chi u)}_{H^k}&<\epsilon \norm{u}_{D^{k+1}},\\
    \norm{s\left(\frac{2}{\rho}\sqrt{\frac{3}{\Lambda r_c^2}\left(1-f_{\Lambda}(r_c\rho)\right)}-\frac{1}{2}\sqrt{\frac{3}{\Lambda r_c^2}}\frac{r_cf_{\Lambda}'(r_c\rho)}{\sqrt{1-f_{\Lambda}(r_c\rho)}}-3\right)\chi u}_{H^k}&<\epsilon \norm{u}_{D^{k+1}}.
\end{align*}
We shall now focus on the second order terms. We have already proved that
\begin{align*}
    \norm{(\Delta-\partial_\rho^2)(\chi u)}_{H^k}&\le C\norm{u}_{D^{k+1}},
\end{align*}
where $C$ depends only on $k$ from the discussion in the previous lemma. Thus for $\Lambda<\Lambda_0$
\begin{align*}
    \norm{\left(1-\frac{3}{\Lambda r_c^2}\right)\left(\Delta -\partial_\rho^2\right) (\chi u)}_{H^k}<\epsilon \norm{u}_{D^{k+1}}.
\end{align*}
The radial second order term is
\begin{align*}
    \left((1-\rho^2)-\frac{3f_{\Lambda}(r_c\rho)}{\Lambda r_c^2}\right)\partial_\rho^2(\chi u)=\left(\frac{\Lambda r_c^2}{3}\frac{1-\rho^2}{f_{\Lambda}}-1\right)\frac{3f_{\Lambda}(r_c\rho)}{\Lambda r_c^2}\partial_\rho^2(\chi u).
\end{align*}
For $\rho\in[1/3,1]$, we see that both $f_{\Lambda}(r_c\rho)$ and $1-\rho^2$ vanish only at $\rho=1$ with simple zeroes there, so the fraction in the above expression extends to a smooth function on this interval. We also know that since it obeys that condition, it will converge uniformly to 1 in that interval if and only if its reciprocal does. We write
\begin{align*}
    f_{\Lambda}(r)&=1+w_{\Lambda}(r)+\frac{\Lambda}{3}\alpha_{\Lambda} r-\frac{\Lambda}{3}r^2\\
    &=w_{\Lambda}(r)-w_{\Lambda}(r_c)+\frac{\Lambda}{3}\alpha_{\Lambda}(r-r_c)-\frac{\Lambda}{3}(r^2-r_c^2)\\
    &=\int_{r_c}^r w_{\Lambda}'(x)dx+\frac{\Lambda}{3}\alpha_{\Lambda}(r-r_c)-\frac{\Lambda}{3}(r^2-r_c^2),
\end{align*}
using the fact that $f_{\Lambda}$ vanishes at $r=r_c$. Writing $r=r_c\rho$ and focussing on $\rho\in[1/3,1]$, we see that we have
\begin{align*}
    f_\Lambda(r_c\rho)=(\rho-1)\left(\int_0^1 r_cw_{\Lambda}'(r_c\rho t+r_c(1-t))dt+\frac{\Lambda}{3}\alpha_{\Lambda} r_c-\frac{\Lambda}{3}r_c^2(1+\rho)\right).
\end{align*}
From this we see that
\begin{align*}
    \frac{3}{\Lambda r_c^2}\frac{f_{\Lambda}(r_c\rho)}{1-\rho^2}=1-\frac{1}{1+\rho}\left(\frac{\alpha_{\Lambda}}{r_c}+\int_0^1 r_cw_{\Lambda}'(r_c\rho t+r_c(1-t))dt\right)
\end{align*}
and hence that
\begin{align*}
    \norm{\frac{3}{\Lambda r_c^2}\frac{f_{\Lambda}(r_c\rho)}{1-\rho^2}-1}_{C^k}\le C\left(\frac{\alpha_{\Lambda}}{r_c}+\norm{\int_0^1 r_cw_{\Lambda}'(r_c\rho t+r_c(1-t))dt}_{C^k}\right),
\end{align*}
where $C$ is a constant depending on only on $k$. The $\alpha_{\Lambda}/r_c$ term can be made arbitrarily small by taking $\Lambda$ sufficiently small, so we focus on the other term
\begin{align*}
    \partial_\rho^j\left(\int_0^1 r_cw_{\Lambda}'(r_c\rho t+r_c(1-t))dt\right)=\int_0^1 r_c^{j+1}t^jw_{\Lambda}^{(j+1)}(r_c\rho t+r_c(1-t))dt.
\end{align*}
We know that
\begin{align*}
    |w_{\Lambda}^{(j+1)}(r)|&= |w_{\Lambda}^{(j+1)}(r)-w_0^{(j+1)}(r)+w_0^{(j+1)}(r)|\\
    &\le \frac{\Lambda}{3}\frac{\beta_{j+1}}{r^{j+1}}+\frac{\epsilon}{r^{j+1}}
\end{align*}
for any $\epsilon>0$ provided $r=r_c\rho$ is sufficiently large. Hence we have
\begin{align*}
    \left|\int_0^1 r_c^{j+1}t^jw_{\Lambda}^{(j+1)}(r_c\rho t+r_c(1-t))dt\right|&\le \left(\frac{\Lambda}{3}\beta_{j+1}+\epsilon\right)r_c^{j+1} \left|\int_0^1 \frac{t^j}{(r_c(\rho-1)t+r_c)^{j+1}}dt\right|\\
    &\le \left(\frac{\Lambda}{3}\beta_{j+1}+\epsilon\right)\frac{1}{(j+1)\rho^{j+1}}.
\end{align*}
The right hand side can be made arbitrarily small by taking $\Lambda$ sufficiently small and hence
\begin{align*}
    &\frac{3}{\Lambda r_c^2}\frac{f_{\Lambda}(r_c\rho)}{1-\rho^2}\rightarrow 1 \quad \text{in}\quad C^k([1/3,1])\quad \text{as}\quad \Lambda\rightarrow 0\\
    \Rightarrow &\frac{\Lambda r_c^2}{3}\frac{1-\rho^2}{f_{\Lambda}(r_c\rho)}\rightarrow 1 \quad \text{in}\quad C^k([1/3,1])\quad \text{as}\quad \Lambda\rightarrow 0.
\end{align*}
To reduce clutter, we shall set
\begin{align*}
    P_{\Lambda}(\rho)=1-\frac{\Lambda r_c^2}{3}\frac{1-\rho^2}{f_{\Lambda}(r_c\rho)}
\end{align*}
and observe that
\begin{align*}
    &\left((1-\rho^2)-\frac{3f_{\Lambda}(r_c\rho)}{\Lambda r_c^2}\right)\partial_\rho^2(\chi u)\\&=P_{\Lambda}(\rho)\left(L_0^w(\chi u)+\frac{3}{\Lambda r_c^2}(\Delta-\partial_\rho^2)(\chi u)-\frac{3}{\Lambda r_c^2\rho^2}\partial_\rho(\rho^2(1-f_{\Lambda}))\partial_{\rho}(\chi u)-(V^w+2)\chi u\right),
\end{align*}
which yields the estimate
\begin{align*}
    \norm{\left((1-\rho^2)-\frac{3f}{\Lambda r_c^2}\right)\partial_\rho^2(\chi u)}_{H^k}\le C\norm{P_{\Lambda}}_{C^k} \norm{u}_{D^{k+1}}.
\end{align*}
Note that similarly to before, this $C$ depends only on $k$ provided we already assume $\Lambda$ is small enough. Since $P_{\Lambda}\rightarrow 0$ in $C^k$, there exists $\Lambda_0>0$ depending only on $k$ such that for $\Lambda< \Lambda_0$,
\begin{align*}
    \norm{\left((1-\rho^2)-\frac{3f}{\Lambda r_c^2}\right)\partial_\rho^2(\chi u)}_{H^k}&< \epsilon\norm{u}_{D^{k+1}}.
\end{align*}
Combining all these estimates, we have
\begin{align*}
    \norm{Q_su}_{D^{k+1}\rightarrow H^k}<\epsilon.
\end{align*}
\end{proof}
We know that $(L^w_s\circ (1-\chi)+L_s\circ E_\chi)^{-1}$ has a pole of finite order at $-n$ for $n=1,2,\dots k$ which is enough to establish the result, however we can do better and show that these poles are in fact simple.
\begin{prop}
Fix $k\in\mathbb{N}$. Then for each $n\in \{1,2,\dots k\}$, there exists $\Lambda_0>0$, $A^w_{-1}: H^k\rightarrow D^{k+1}$ a finite rank operator and $A^w_0(s)$ is a holomorphic family of Fredholm operators $H^k\rightarrow D^{k+1}$ such that for $\Lambda<\Lambda_0$,
\begin{align*}
    (L^w_s\circ (1-\chi)+L_s\circ E_\chi)^{-1}=\frac{A^w_{-1}}{s+n}+A^w_0(s)
\end{align*}
in a suitable neighbourhood of $-n$.
\end{prop}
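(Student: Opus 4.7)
The plan is to establish simplicity of the pole via the Gohberg-Sigal factorization combined with the fact that the de Sitter co-modes at $s = -n$ are concentrated on the cosmological horizon. First I would apply \cref{c.10} to the holomorphic Fredholm family $\tilde{L}^w_s := L^w_s \circ (1-\chi) + L_s \circ E_\chi$ near $s = -n$, producing the local factorization $\tilde{L}^w_s = U_1(s)\bigl(P_0 + \sum_{m=1}^M (s+n)^{k_m} P_m\bigr) U_2(s)$ with integers $k_m \geq 1$. From this factorization one reads off $\dim \ker \tilde{L}^w_{-n} = M$ (the rank of $I - P_0$) and null multiplicity $N_{-n}(\tilde{L}^w) = \sum_m k_m$. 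The pole of $(\tilde{L}^w_s)^{-1}$ at $-n$ is simple precisely when every $k_m = 1$, equivalently when $M = \sum_m k_m$. Since $\sum_m k_m \geq M$ always, it suffices to prove $M \geq n^2$ and $\sum_m k_m \leq n^2$.

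For the first bound I would identify the de Sitter co-modes as co-modes of $\tilde{L}^w_{-n}$. The $n^2$ de Sitter co-modes at $s = -n$ --- namely the $\tilde{D}^j \Theta_{lm}$ from \cref{puredScomodes} together with their higher-angular-momentum analogues --- are all supported on the cosmological horizon $\partial B_1$, which is a region where $\chi \equiv 1$. For any $u \in D^{k+1}$, the function $L^w_{-n}((1-\chi) u)$ has support in $\overline{B_{2/3}}$, disjoint from the support of these co-modes, while $L_{-n}(E_\chi u) = L_{-n}(\chi u)$ pairs trivially with them by the de Sitter co-mode property. Hence each of the $n^2$ de Sitter co-modes lifts to a co-mode of $\tilde{L}^w_{-n}$, and Fredholm index zero delivers $M = \dim \ker \tilde{L}^w_{-n} \geq n^2$.

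For the second bound I would apply \cref{rouche} on a small circle around $-n$, comparing $\tilde{L}^w_s$ to the de Sitter operator $L_s$. Once the two operators are placed on a common Hilbert space --- which can be done for $\Lambda$ small enough that $\rho_e < 1/3$, via extension/restriction through the cut-off $\chi$, analogously to the decomposition used in the definition of $Q_s$ --- the operator-norm smallness already established in \cref{extremallimitblackhole} ensures that the Rouché hypothesis holds on $\partial D(-n, \delta)$. The conclusion is that the total null multiplicity of $\tilde{L}^w_s$ inside $D(-n, \delta)$ equals that of $L_s$ at $-n$, namely $n^2$, so in particular $\sum_m k_m \leq n^2$ at $-n$ alone. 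Combining the two bounds forces $M = \sum_m k_m = n^2$, whence $k_m = 1$ for every $m$ and the pole is simple.

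The principal obstacle is the Rouché comparison in the final step, since $\tilde{L}^w_s$ acts on $H^k(U_{\rho_e})$ while $L_s$ acts on $H^k(B_1)$. Making the comparison rigorous requires extending $\tilde{L}^w_s$ to $B_1$ (for example by letting it act as $L_s$ on $B_{\rho_e}$) in a way that preserves the Fredholm index and keeps the perturbation small uniformly in $s$ on the contour. An alternative route that avoids an explicit identification of the two domains would be a perturbative argument in $\Lambda$, invoking upper semi-continuity of $\dim \ker$ and of the null multiplicity along holomorphic Fredholm families to transfer simplicity from the $\Lambda \to 0$ limit (where $\tilde{L}^w_s$ degenerates to $L_s$ on $B_1$) to all sufficiently small $\Lambda$.
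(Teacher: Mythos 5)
Your overall counting strategy (pole simple $\iff$ all $k_m=1$ $\iff$ $M=\sum k_m$, then squeeze with $M\ge n^2$ and $\sum k_m\le n^2$) is logically sound, and your lower bound is essentially correct: the de Sitter co-modes at $s=-n$ are concentrated on $\partial B_1$, so they annihilate both $L^w_{-n}\circ(1-\chi)$ (whose image is supported in $\overline{B_{2/3}}$) and $L_{-n}\circ E_\chi$, and index zero gives $\dim\ker\ge n^2$. The gap is in the upper bound. You invoke \cref{extremallimitblackhole} to make the Rouch\'e hypothesis hold between $\tilde L^w_s$ and the pure de Sitter operator $L_s$, but that lemma controls $Q_s=L^w_s\circ\chi-L_s\circ E_\chi$, i.e.\ the difference $L^w_s-\tilde L^w_s$ \emph{on the region where $\chi$ is supported}; it says nothing about $\tilde L^w_s-L_s$. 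Those two operators differ on $\supp(1-\chi)\subset B_{2/3}$, which contains the event horizon $\rho=\rho_e$, and there the difference is not small in any operator norm as $\Lambda\to 0$: the convergence $3f_\Lambda(r_c\rho)/(\Lambda r_c^2)\to 1-\rho^2$ is established only in $C^l(\supp\chi)$, i.e.\ on $[1/3,1]$, and indeed fails near $\rho_e$ where $f_\Lambda$ vanishes but $1-\rho^2$ does not. So the Rouch\'e comparison with $L_s$ is not available, and the fallback you sketch (semicontinuity of the null multiplicity in $\Lambda$) runs into the same singular degeneration of the domain and coefficients at $\Lambda=0$. Without the upper bound the squeeze collapses and simplicity is not established.

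The paper closes exactly this hole by a different, purely algebraic route that never compares $\tilde L^w_s$ with $L_s$ in norm. It first identifies $\ker A(-n)$ (where $A(s)=L^w_s\circ(1-\chi)+L_s\circ E_\chi$) with the de Sitter kernel by restriction to $\{2/3<\rho\le 1\}$, notes that the co-modes $\theta_i$ are the horizon-concentrated de Sitter ones, and uses the already-known simplicity of the poles of $L_s^{-1}$ to normalise $\theta_i(A'(-n)u_j)=\theta_i(L'_{-n}v_j)=\delta_{ij}$. Then, assuming a pole of order $N>1$, expanding $A(s)^{-1}A(s)=I$ and $A(s)A(s)^{-1}=I$ in Laurent series forces $A^w_{-N}A(-n)=A(-n)A^w_{-N}=0$, hence $A^w_{-N}=\sum_{i,j}c_{ij}u_i\theta_j$, and the next coefficient identity $A^w_{-N}A'(-n)+A^w_{-N+1}A(-n)=0$ applied to $u_m\in\ker A(-n)$ gives $\sum_i c_{im}u_i=0$, so $A^w_{-N}=0$, a contradiction. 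If you want to salvage your counting approach, you would need to replace the Rouch\'e step with an argument of this kind (or independently prove $\sum k_m\le n^2$ from the structure of the residue), since the perturbative smallness you rely on simply is not there.
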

\begin{proof}
First let us note that since the quasinormal modes of $A(s):=L^w_s\circ (1-\chi)+L_s\circ E_\chi$ are smooth, for a quasinormal mode $u$ in the region $2/3<\rho\le 1$
\begin{align*}
A(s)u=0\Rightarrow L_su=0.
\end{align*}
Thus for $2/3<\rho\le 1$, the solutions match the modes we calculated for $L_s$ previously. It also means that the dimension of the kernel of $A(-n)$ is $n^2$ and that the $n^2$ corresponding co-modes $\{\theta_i\}_{i=0}^{n^2}$ are the same distributions from earlier. If we set $\{u_i\}_{i=1}^{n^2}$ to be a basis for the kernel of $A(-n)$ and $\{v_i\}_{i=1}^{n^2}$ a corresponding basis for the kernel of $L_{-n}$ i.e. such that $u_i|_{U_{2/3}}=v_i$, we note that
\begin{align*}
    \theta_i(A'(-n)u_j)=\theta_i(A'(-n)v_j)=\theta_i(L'_{-n}v_j)
\end{align*}
since the $\theta_i$ are concentrated on the horizon where $A(s)=L_s$. We have already seen that $L_s^{-1}$ has simple poles and in a neighbourhood of $-n$ is of the form
\begin{align*}
    L_s^{-1}=\frac{\Pi_1}{s+n}+B(s)
\end{align*}
where $\Pi_1$ is a linear combination of terms of the form $v_i\theta_j$ and $B(s)$ is holomorphic in a neighbourhood of $-n$. For $v\in \ker L_{-n}$ and $s\neq -n$, we see
\begin{align*}
    v&=L_s^{-1}L_sv\\
    &=\frac{\Pi_1L_{-n}v}{s+n}+\Pi_1L'_{-n}v+(s+n)\Pi_1v+(s+n)B(s)(L_{-n}'+s+n)v\\
    &=\Pi_1L'_{-n}v+(s+n)\left(\Pi_1v+B(s)(L_{-n}'+s+n)\right)v
\end{align*}
Taking the limit as $s\rightarrow -n$ gives that for all $v\in\ker L_{-n}$,
\begin{align*}
    v=\Pi_1L'_{-n}v
\end{align*}
Hence it is possible to arrange the $\theta_i$ and $v_i$ such that $\theta_i(L_{-n}'v_j)=\delta_{ij}$.
We relabel the bases of modes $\{u_i\}_{i=1}^{n^2}$ and co-modes $\{\theta_i\}_{i=1}^{n^2}$ of $A(s)$ such that $\theta_i(A'(-n)u_j)=\delta_{ij}$. Now let us suppose the pole is of order $N>1$. In this case, we have
\begin{align*}
    A(s)^{-1}=\sum_{j=1}^N\frac{A^w_{-j}}{(s+n)^j}+A^w_0(s).
\end{align*}
where the $A^w_{-j}$ are of finite rank, $A^w_{-N}\neq 0$ and $A^w_0(s)$ is holomorphic in a neighbourhood of $-n$. Recalling that $A(s)u=(s+n)^2u+(s+n)A'(-n)u+A(-n)u$, we see that
\begin{align*}
    I_{D^{k+1}}&=A(s)^{-1}A(s)\\
    &=\sum_{j=1}^N\frac{A^w_{-j}}{(s+n)^{j-2}}+\sum_{j=1}^N\frac{A^w_{-j}A'(-n)}{(s+n)^{j-1}}+\sum_{j=1}^N\frac{A^w_{-j}A(-n)}{(s+n)^{j}}+A_0^w(s)A(s),\\
    I_{H^k}&=A(s)^{-1}A(s)\\
    &=\sum_{j=1}^N\frac{A^w_{-j}}{(s+n)^{j-2}}+\sum_{j=1}^N\frac{A'(-n)A^w_{-j}}{(s+n)^{j-1}}+\sum_{j=1}^N\frac{A(-n)A^w_{-j}}{(s+n)^{j}}+A(s)A_0^w(s).
\end{align*}
Since the above holds for all $s\neq -n$ in some neighbourhood of $-n$, it follows from equating the terms corresponding to $(s+n)^{-N}$ that the following equations hold:
\begin{align*}
    A^w_{-N}A(-n)=A(-n)A^w_{-N}=0
\end{align*}
This implies that there exist constants $c_{ij}$ such that
\begin{align*}
    A^w_{-N}=\sum_{i,j=1}^{n^2}c_{ij}u_i\theta_j.
\end{align*}
We also have
\begin{align*}
    A^w_{-N}A'(-n)+A^w_{-N+1}A(-n)=0
\end{align*}
so for any $u\in\ker A(-n)$ (in particular $u_m$ for $m=1,2,\dots n^2$),
\begin{align*}
    A^w_{-N}A'(-n)u_m=\sum_{i,j=0}^{n^2}c_{ij}u_i\theta_j(A'(-n)u_m)=\sum_{i=1}^{n^2}c_{im}u_i=0.
\end{align*}
From the linear independence of the $u_i$, the above implies that $c_{ij}=0$ for each $i,j=1,2,\dots n^2$ i.e. $A^w_{-N}=0$, a contradiction. Hence the pole is simple.
\end{proof}
With this in mind, we can proceed similarly to the previous section. For each $n\in\mathbb{N}$, there exists $0<r<1/2$ such that $A^w_0(s)$ is bounded on the disc of radius $r$ around $-n$ and we define the following constants
\begin{align*}
	        C_{k,n}^w=\sup_{s\in \overline{D(-n,r)}}\left\{\norm{L^w_0A_0^w(s)}_{H^k\rightarrow H^k}\right\}
	    \end{align*}
and
\begin{align*}
    \tilde{C}_{k,n}^w=\min \left\{\frac{1}{2\norm{L^w_0A_{-1}}_{H^k\rightarrow H^k}},\frac{1}{C^w_{k,n}}\right\}.
\end{align*}
\begin{prop}\label{mainresultpropblackhole}
For each $n\in \mathbb{N}$ and $0<\delta<r(n,k)$, there exists $\Lambda_0>0$ such that for all $\Lambda<\Lambda_0$, there exists at least one quasinormal frequency $s$ inside $D(-n\sqrt{\Lambda/3}, \delta \sqrt{\Lambda/3})$ i.e.
\begin{align*}
    \left|s+n\sqrt{\frac{\Lambda}{3}} \right|<\delta\sqrt{\frac{\Lambda}{3}}.
\end{align*}
\end{prop}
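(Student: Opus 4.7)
The plan is to mimic the proof of \cref{mainresultprop} almost line-for-line, applying Rouch\'e's theorem (\cref{rouche}) to the operator families $A^w(\s)$ and $B^w(\s)=L^w_{\s}(L_0^w)^{-1}$ on the circle $\Gamma_{-n,\delta}$, working in the rescaled variable $\s=s\sqrt{3/\Lambda}$. By the preceding proposition, $(L^w_{\s}\circ(1-\chi)+L_{\s}\circ E_\chi)^{-1}$ has a simple pole at $\s=-n$ coming from co-modes of the form $\tilde{D}^{j}\Theta_{lm}$ (which are preserved by the cut-off since $1-\chi$ vanishes to all orders on the cosmological horizon), so $A^w(\s)$ has at least one quasinormal frequency inside $D(-n,\delta)$. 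If Rouch\'e's hypothesis is verified on $\Gamma_{-n,\delta}$, then $B^w(\s)$ inherits this count, and rescaling $s=\s\sqrt{\Lambda/3}$ produces the desired quasinormal frequency of $L^w_s$ in $D(-n\sqrt{\Lambda/3},\delta\sqrt{\Lambda/3})$.

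First I would bound the Rouch\'e quantity. Writing the resolvent as $A^w_{-1}/(\s+n)+A^w_0(\s)$ and using $A^w(\s)-B^w(\s)=-Q_{\s}(L_0^w)^{-1}$, for $\s\in\Gamma_{-n,\delta}$ we get
\begin{align*}
    \norm{A^w(\s)^{-1}(A^w(\s)-B^w(\s))}_{H^k\to H^k}
    &\le \norm{L_0^w(L^w_{\s}\circ(1-\chi)+L_{\s}\circ E_\chi)^{-1}}_{H^k\to H^k}\norm{Q_{\s}}_{D^{k+1}\to H^k}\\
    &\le \left(\frac{\norm{L_0^wA^w_{-1}}_{H^k\to H^k}}{\delta}+C^w_{k,n}\right)\norm{Q_{\s}}_{D^{k+1}\to H^k}.
\end{align*}
Consequently, if we can guarantee $\norm{Q_{\s}}_{D^{k+1}\to H^k}<\tilde{C}^w_{k,n}\delta/2$ uniformly for $\s\in\Gamma_{-n,\delta}$, the right-hand side is majorised by $1/2+\tilde{C}^w_{k,n}C^w_{k,n}\delta/2\le 1/2+\delta/2<1$, and \cref{rouche} applies.

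To upgrade \cref{extremallimitblackhole} from the pointwise-in-$\s$ statement to uniformity on the compact contour $\Gamma_{-n,\delta}$, I would decompose $Q_{\s}=Q^{(0)}+\s\,Q^{(1)}$ by reading off the $\s$-dependence term by term from the definition of $L^w_{\s}$: all second-order and zeroth-order-in-$\s$ contributions go into $Q^{(0)}$ and the coefficient of $\partial_\rho$ and of $\chi u$ involving $\sqrt{3(1-f_\Lambda)/(\Lambda r_c^2)}$ (and the $f_\Lambda'/\sqrt{1-f_\Lambda}$ term) go into $\s\,Q^{(1)}$. The proof of \cref{extremallimitblackhole} already shows that the $C^k(\supp\chi)$ norms of the coefficients of both $Q^{(0)}$ and $Q^{(1)}$ can be made arbitrarily small by shrinking $\Lambda$; the $\s$-dependence is then controlled by $|\s|\le n+\delta$ on the contour. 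Hence there exists $\Lambda_0$ depending only on $n,k,\delta$ (and the fixed data $w_\Lambda,\alpha_\Lambda$) such that $\norm{Q_{\s}}_{D^{k+1}\to H^k}<\tilde{C}^w_{k,n}\delta/2$ for every $\s\in\Gamma_{-n,\delta}$ and every $\Lambda<\Lambda_0$.

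The only genuinely non-routine step is this uniformity upgrade of \cref{extremallimitblackhole}; everything else is a mechanical translation of the arguments of \cref{freqcontrol} and \cref{mainresultprop} to the black hole setting, using the simple-pole decomposition that the previous proposition provides. Once the uniform bound on $\norm{Q_{\s}}$ is in hand, \cref{rouche} yields equality of the trace integrals around $\Gamma_{-n,\delta}$, hence a quasinormal frequency $\s_0$ of $L^w_{\s}$ with $|\s_0+n|<\delta$, and rescaling completes the proof.
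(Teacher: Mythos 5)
Your proposal is correct and follows essentially the same route as the paper: the same Rouch\'e bound $\norm{A^w(\s)^{-1}(A^w(\s)-B^w(\s))}\le(\norm{L_0^wA^w_{-1}}/\delta+C^w_{k,n})\norm{Q_{\s}}$ on $\Gamma_{-n,\delta}$, smallness of $Q_{\s}$ from \cref{extremallimitblackhole}, and the rescaling $s=\s\sqrt{\Lambda/3}$. Your explicit decomposition $Q_{\s}=Q^{(0)}+\s\,Q^{(1)}$ to upgrade the fixed-$\s$ statement of \cref{extremallimitblackhole} to a bound uniform over the compact contour is a point the paper's proof passes over silently, and it is a correct and worthwhile refinement rather than a different method.
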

\begin{proof}
The proof is exactly analogous to the previous results. We see again that on the circle of radius $\delta$ around $-n$ that
\begin{align*}
    \norm{A^w(\s)^{-1}(A^w(\s)-B^w(\s))}_{H^k\rightarrow H^k}&\le \frac{\norm{Q_{\s}}_{D^{k+1}\rightarrow H^k}}{\delta}\norm{L_0A^w_{-1}}_{H^k\rightarrow H^k}+C^w_{k,n}\norm{Q_{\s}}_{D^{k+1}\rightarrow H^k},
\end{align*}
where again $\s=s\sqrt{3/\Lambda}$. We note that by \cref{extremallimitblackhole}, there exists $\Lambda_0>0$ such that for all $\Lambda<\Lambda_0$, $\norm{Q_{\s}}_{D^{k+1}\rightarrow H^k} <\tilde{C}^w_{k,n}\delta$. Thus we have
\begin{align*}
    \norm{A^w(\s)^{-1}(A^w(\s)-B^w(\s))}_{H^k\rightarrow H^k}&< \tilde{C}^w_{k,n}\norm{L_0A^w_{-1}}_{H^k\rightarrow H^k}+\tilde{C}^w_{k,n} C^w_{k,n}\delta\\
    &\le \frac{1}{2}+\delta<1,
\end{align*}
since $\delta<r\le 1/2$. The conditions in \cref{rouche} are met and so the number of quasinormal frequencies (counted with multiplicity) are unchanged. So there exists at least one $s$ such that
\begin{align*}
    \left\lvert s\sqrt{\frac{3}{\Lambda}}+n\right\lvert&<\delta\\
    \Rightarrow \left|s+n\sqrt{\frac{\Lambda}{3}}\right|&<\delta\sqrt{\frac{\Lambda}{3}} <\frac{1}{2}\sqrt{\frac{\Lambda}{3}}.
\end{align*}
\end{proof}
\begin{thm}\label{mainresultthmblackhole}
Let $w_{\Lambda}\in C^{\infty}(0,\infty)$ be a family of functions for $\Lambda\in [0,\Lambda_0]$ for some $\Lambda_0>0$ such that $1+w_0$ has finitely many roots, the largest of which, $r_e$, is simple and such that $w_0<0$ for $r>r_e$. Suppose further that for each $k\in\mathbb{N}_0$ and $\Lambda\in[0,\Lambda_0]$,
\begin{align*}
    (r^k\partial_r^kw_{\Lambda})(r)\rightarrow 0 \ \text{as}\ r\rightarrow \infty
\end{align*}
and let us suppose that there exists $t<r_e$  and $\beta_k$ such that
\begin{align*}
    \sup_{r\ge t}\left|r^k\partial_r^k(w_\Lambda-w_0)(r)\right|<\frac{\Lambda}{3}\beta_k
\end{align*}
for all $k\in\mathbb{N}_0$. Let $\alpha: [0,\Lambda_0]\rightarrow \mathbb{R}$ be a continuous function and write $\alpha(\Lambda)=\alpha_\Lambda$. Let $g$ be the metric defined using the above functions in the same way as at the start of this section and $R$ be its Ricci scalar. Then the equation
\begin{align*}
    -\Box_{g}\psi+\frac{R}{6}\psi=0
\end{align*}
exhibits the phenomenon of zero-damped quasinormal frequencies.
\end{thm}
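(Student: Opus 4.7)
The plan is to derive this theorem as an immediate corollary of \cref{mainresultpropblackhole}, mirroring how \cref{mainresultthm} followed from \cref{mainresultprop}. The hypotheses on $w_\Lambda$ and $\alpha_\Lambda$ are precisely those assumed throughout \cref{blackhole}, so all the machinery of that section (the decomposition $L_s^w = L_s^w\circ(1-\chi) + L_s\circ E_\chi + Q_s$, the simple-pole structure of the unperturbed inverse, and the smallness $\|Q_s\|\to 0$) is directly available.

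First, I would identify the natural small parameter playing the role of surface gravity. Since the cosmological horizon radius satisfies $r_c(\Lambda)\sim\sqrt{3/\Lambda}$, a short computation using the decay of $w_\Lambda$ shows that the surface gravity $\kappa_c(\Lambda)=|f_\Lambda'(r_c)|/2$ obeys $\kappa_c(\Lambda) = \sqrt{\Lambda/3} + o(\sqrt{\Lambda})$ as $\Lambda\to 0$, so working with $\sqrt{\Lambda/3}$ in place of $\kappa_c$ changes nothing at the level of \cref{defnzdm}.

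Next, I would construct the sequence $\{s_n\}$ explicitly. For each $n\in\mathbb{N}$, choose a regularity index $k=k(n)>n$ and a fixed $\delta=\delta_n\in(0,r(n,k))$, say $\delta_n=1/2$. \Cref{mainresultpropblackhole} then supplies a threshold $\Lambda_0^{(n)}>0$ and, for every $\Lambda\in(0,\Lambda_0^{(n)})$, at least one quasinormal frequency $s_n(\Lambda)$ with
\begin{align*}
    \left|s_n(\Lambda) + n\sqrt{\Lambda/3}\,\right| < \tfrac{1}{2}\sqrt{\Lambda/3}.
\end{align*}
Selecting some $\kappa_0>0$ and extending $s_n$ arbitrarily to $(\sqrt{\Lambda_0^{(n)}/3},\kappa_0]$ by picking any quasinormal frequency there, one obtains functions $s_n:(0,\kappa_0]\to\mathbb{C}$. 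Verifying \cref{defnzdm} is then mechanical: condition (1) holds by construction; condition (2) follows from $|s_n(\Lambda)|\le(n+\tfrac{1}{2})\sqrt{\Lambda/3}\to 0$, giving $\alpha=0$; and condition (3) follows from $\Re s_n(\Lambda)\le -(n-\tfrac{1}{2})\sqrt{\Lambda/3}\to -\infty$ as $n\to\infty$ for each fixed $\Lambda$.

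The only genuinely delicate point, and where I would expect the most care rather than difficulty, is the quantifier ordering: \cref{mainresultpropblackhole} supplies an $n$-dependent threshold $\Lambda_0^{(n)}$, whereas \cref{defnzdm} asks for a single common interval $(0,\kappa_0]$ on which every $s_n$ is simultaneously a quasinormal frequency produced by the construction. This is exactly the bookkeeping issue implicit in the proof of \cref{mainresultthm}, and it is resolved either by allowing the mild reinterpretation that each $s_n$ is defined on its own interval shrinking with $n$, or by the arbitrary extension mentioned above, neither of which affects conditions (2) or (3), since both are asymptotic statements. Beyond this, no new analysis is required: the Fredholm framework, the degenerate-elliptic estimates, the simplicity of the poles of $(L_s^w\circ(1-\chi)+L_s\circ E_\chi)^{-1}$, and the extremal-limit estimate $\|Q_s\|_{D^{k+1}\to H^k}\to 0$ are all established in the preceding lemmas and propositions, which makes the theorem a clean packaging exercise.
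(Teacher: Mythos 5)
Your proposal is correct and follows essentially the same route as the paper, which likewise deduces the theorem directly from \cref{mainresultpropblackhole}; you simply make explicit the bookkeeping (the identification of $\sqrt{\Lambda/3}$ with the surface gravity, the $n$-dependence of $\Lambda_0$, and the verification of the three conditions of \cref{defnzdm}) that the paper leaves implicit. The only trivial slip is that $\delta_n=1/2$ is not strictly in $(0,r(n,k))$ since $r(n,k)<1/2$, but any admissible $\delta$ yields the same bound $\lvert s_n+n\sqrt{\Lambda/3}\rvert<\tfrac{1}{2}\sqrt{\Lambda/3}$, so nothing is affected.
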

\begin{proof}
This is simply an application of \cref{mainresultpropblackhole}, noting that for each $\Lambda$ sufficiently small, there is a subset of the quasinormal spectrum $\{s_{n}\}_{n=1}^{\infty}$ such that
\begin{align*}
    \left|s_n+n\sqrt{\frac{\Lambda}{3}}\right|<\frac{1}{2}\sqrt{\frac{\Lambda}{3}}.
\end{align*}
\end{proof}
\subsection{Nearly extremal Reissner-Nordstr\"om-de Sitter}
We expect to be able to prove that zero-damped modes exist for scalar fields minimally coupled to Reissner-Nordstr\"om-de Sitter from the results of \cite{destounis1} (these are called near-extremal modes in this paper), so we shall use this example to present an application of the results obtained above. We consider the Reissner-Nordstr\"om-de Sitter metric
\begin{align*}
    g_{1}=-f_1(r)dt^2+\frac{dr^2}{f_1(r)}+r^2g_{S^2},
\end{align*}
where
\begin{align*}
    f_1(r)=1-\frac{2m}{r}+\frac{q^2}{r^2}-\frac{\Lambda}{3}r^2=-\frac{\Lambda}{3}\frac{(r-r_0)(r-r_-)(r-r_+)(r_c-r)}{r^2}.
\end{align*}
We can write $f_1$ in the second form provided $m, q$ and $\Lambda$ obey suitable conditions so that $f_1$ has four real roots $r_0<0<r_-<r_+<r_c$, so we assume this is the case. This metric is defined on the manifold $\mathcal{M}_1=[0,\infty)\times [r_+,r_c]\times S^2$. By Vieta's formulae, we have
\begin{align}
    r_0+r_-+r_++r_c&=0,\label{zerosum}\\
    -\frac{\Lambda}{3}(r_0r_c+r_+r_-+(r_0+r_c)(r_++r_-))&=1,\\
    -\frac{\Lambda}{3}(r_0r_c(r_++r_-)+r_+r_-(r_c+r_0))&=2m,\\
    -\frac{\Lambda}{3}r_0r_-r_+r_c&=q^2,
\end{align}
Note that these can be inverted, which allows us to write the roots as continuous functions of $\Lambda, m$ and $q$. Thus we can parametrise the spacetime using three of the roots (the fourth is determined by \eqref{zerosum}) and consider the extremal limit using this picture. We define the quantities
\begin{align*}
    \bar{r}&=\frac{r_++r_-}{2}=-\frac{r_0+r_c}{2},\\
    h&=\frac{r_+-r_-}{2},
\end{align*}
and consider the limit where we keep $\bar{r}$ and $r_c$ fixed and take $h\rightarrow 0$. Consider the coordinate transformation inspired by the one used in \cite{Bizon_2013}
\begin{align*}
    \rho=\frac{\gamma r}{r-\bar{r}}, \quad r=\frac{\bar{r}\rho}{\rho-\gamma}
\end{align*}
for some $\gamma (h)>0$ to be determined later. Under this transformation, we see that
\begin{align*}
    r^2=\frac{\bar{r}^2\rho^2}{(\rho-\gamma)^2},
\end{align*}
which motivates the introduction of a conformal factor
\begin{align*}
    \Omega=\frac{\bar{r}}{\rho-\gamma}
\end{align*}
such that
\begin{align*}
    f_1(\rho)&=\frac{\Lambda}{3}\frac{\left((\bar{r}-r_0)\rho+\gamma r_0\right)\left(h\rho+\gamma r_-\right)\left(\gamma r_+-h\rho\right)\left((r_c-\bar{r})\rho-\gamma r_c\right)}{\bar{r}^2\rho^2(\rho-\bar{r})^2}\\
    &=\Omega^2 f_2(\rho).
\end{align*}
A quick calculation yields further that
\begin{align*}
    \frac{dr^2}{f_1(r)}=\frac{\Omega^2d\rho^2}{f_2(\rho)}
\end{align*}
and hence
\begin{align*}
    g_1=\Omega^2\left(-f_2dt^2+\frac{dr^2}{f_2}+\rho^2g_{S^2}\right)=\Omega^2g_2.
\end{align*}
The manifold is mapped to $\mathcal{M}_2=[0,\infty)\times [\gamma r_c/(r_c-\bar{r}),\gamma r_+/h]\times S^2$. Note that the event and cosmological horizons have swapped (see \cref{swappinghorizons}).
\begin{figure}
    \centering
    \begin{subfigure}[b]{0.5\textwidth}
    \centering
    \begin{tikzpicture}
\node (A) at ( -3.5,0){};
\node (B) at (3.5,0){};
\node (C) at (0,3.5){};
\draw (A.center)--  (B.center);
\draw [red](A.center)--node[midway, above, sloped] {$r=r_e$}(C.center);
\draw [blue](B.center)--node[midway,above, sloped] {$r=r_c$} (C.center);
\end{tikzpicture}
\caption{The spacetime $(\mathcal{M}_1,g_1)$}
    \end{subfigure}%
    \begin{subfigure}[b]{0.5\textwidth}
    \centering
    \begin{tikzpicture}
\node (A) at ( -3.5,0){};
\node (B) at (3.5,0){};
\node (C) at (0,3.5){};
\draw (A.center)--  (B.center);
\draw [blue](A.center)--node[midway, above, sloped] {$\rho=\frac{\gamma r_c}{r_c-\bar{r}}$}(C.center);
\draw [red](B.center)--node[midway,above, sloped] {$\rho=\frac{\gamma r_+}{h}$} (C.center);
\end{tikzpicture}
\caption{The spacetime $(\mathcal{M}_2,g_2)$}
    \end{subfigure}
    \caption{The original Reissner-Nordstr\"om-de Sitter spacetime is conformal to another spacetime where the horizons are swapped.}
    \label{swappinghorizons}
\end{figure}
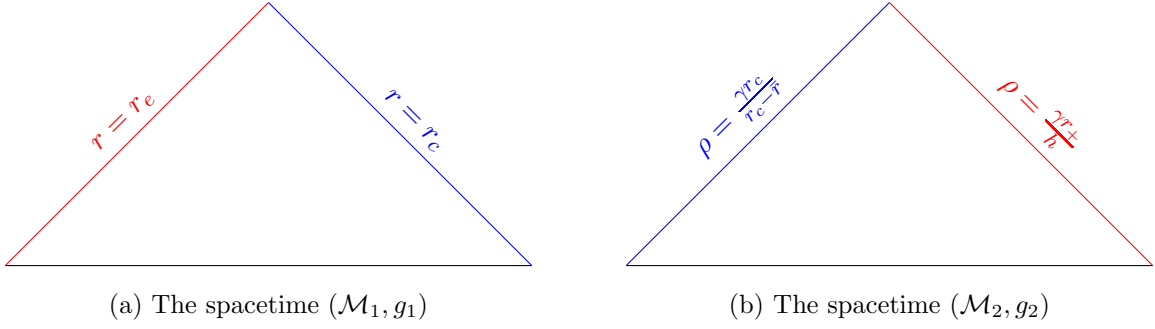
As before, we study the conformal Klein-Gordon equation on $(\mathcal{M}_1,g_1)$:
\begin{align*}
    \left(\Box_{g_1}-\frac{R_1}{6}\right)\psi=0.
\end{align*}
We use a standard result to see that for a smooth function $\psi$, we have
\begin{align*}
    \left(\Box_{g_1}-\frac{R_1}{6}\right)\psi=\Omega^{-3} \left(\Box_{g_2}-\frac{R_2}{6}\right)(\Omega\psi)
\end{align*}
and hence it suffices to consider the quasinormal spectrum of the equation
\begin{align*}
    \left(\Box_{g_2}-\frac{R_2}{6}\right)\varphi=0
\end{align*}
where $\varphi=\Omega\psi$.
\begin{thm}\label{thmrnds}
Consider the conformal Klein-Gordon equation on $(\mathcal{M}_1,g_1)$ as defined above with $m,q, \Lambda$ chosen appropriately. This equation exhibits the phenomenon of zero-damped quasinormal frequencies in the limit as the event horizon becomes extremal.
\end{thm}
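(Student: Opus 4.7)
The plan is to apply \cref{mainresultthmblackhole} to the conformally rescaled spacetime $(\mathcal{M}_2, g_2)$ introduced in the setup, then transfer the resulting zero-damped frequencies back to $(\mathcal{M}_1, g_1)$ using the conformal invariance identity
$$\left(\Box_{g_1} - \frac{R_1}{6}\right)\psi = \Omega^{-3}\left(\Box_{g_2} - \frac{R_2}{6}\right)(\Omega\psi)$$
already noted. Because both the radial coordinate change and the conformal factor $\Omega = \bar r/(\rho-\gamma)$ are time-independent, quasinormal frequencies correspond identically between the two pictures, and the regularity framework from \cref{review} transfers cleanly because $\Omega$ is smooth and non-vanishing on the physical region $\rho_{c}' \le \rho \le \rho_+$. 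Thus it suffices to exhibit zero-damped modes for the conformal Klein-Gordon equation on $(\mathcal{M}_2, g_2)$ as $h := (r_+ - r_-)/2 \to 0$ with $\bar r$ and $r_c$ held fixed.

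To fit $(\mathcal{M}_2, g_2)$ into the hypotheses of \cref{mainresultthmblackhole}, the first task is to identify an effective cosmological constant $\Lambda'(h)$ that tends to zero with $h$. A direct expansion of $f_2(\rho) = f_1(r(\rho))/\Omega^2$ in powers of $\rho$ shows that its leading behaviour is governed by the $\rho^2$ coefficient $f_1(\bar r)/\bar r^2$. Using $r_\pm = \bar r \pm h$ together with the factored form of $f_1$ gives
$$f_1(\bar r) = -\frac{\Lambda h^2(\bar r - r_0)(r_c - \bar r)}{3\bar r^2},$$
reflecting the fact that $\bar r$ becomes a double root of $f_1$ at extremality. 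Setting $\Lambda'(h) := \Lambda h^2(\bar r - r_0)(r_c - \bar r)/\bar r^4 = O(h^2)$, the image of the event horizon $\rho_+ = \gamma r_+/h \to \infty$ plays the role of the divergent cosmological horizon from \cref{mainresultthmblackhole}, while the image $\rho_{c}' = \gamma r_c/(r_c-\bar r)$ of the cosmological horizon of $g_1$ remains bounded and supplies the non-degenerate event horizon required of $g_2$.

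The parameter $\gamma = \gamma(h)$ is then fixed by demanding that the constant term in the large-$\rho$ expansion of $f_2$ equal $1$, so that $w_{\Lambda'}(\infty) = 0$; this is a single algebraic condition, smoothly solvable in $h$ throughout the subextremal regime (at extremality the relevant denominator reduces to $\bar r^2(1-2\Lambda\bar r^2) > 0$). With this choice $f_2$ takes exactly the form $1 + w_{\Lambda'}(\rho) + \Lambda'\alpha_{\Lambda'}\rho/3 - \Lambda'\rho^2/3$, where $w_{\Lambda'}(\rho)$ consists of the $1/\rho$ and $1/\rho^2$ terms in the Laurent expansion of $f_2$, with coefficients polynomial in $(\bar r, m, q, \gamma)$. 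The decay condition $\rho^k\partial_\rho^k w_{\Lambda'}(\rho) \to 0$ is automatic and uniform in $h$; the coefficient $\alpha_{\Lambda'}$ has a continuous limit because the extremality constraints $m = q^2/\bar r + \Lambda\bar r^3/3$ and $q^2 = \bar r^2(1-\Lambda\bar r^2)$ force the coefficient of $\rho$ in $f_2$ to vanish at $h = 0$ to the same order as $\Lambda'$; and the structural requirement that $1 + w_0$ have a simple largest root with $w_0 < 0$ beyond it follows by continuity from the corresponding root structure of $f_2$ for $h > 0$, the largest finite root being $\rho_{c}'$.

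With the hypotheses verified, \cref{mainresultthmblackhole} produces quasinormal frequencies $s_n(h)$ of the conformal Klein-Gordon equation on $(\mathcal{M}_2, g_2)$ satisfying $|s_n(h) + n\sqrt{\Lambda'(h)/3}| < \sqrt{\Lambda'(h)/3}/2$, so they cluster at $0$ (since $\sqrt{\Lambda'/3} = O(h) \to 0$) while $\realpart s_n(h) \to -\infty$ along $n$. Transferring via the conformal invariance identity gives the corresponding zero-damped family for $(\mathcal{M}_1, g_1)$. The principal technical obstacle I foresee is the uniform convergence hypothesis $\sup_{\rho \ge t}|\rho^k\partial_\rho^k(w_{\Lambda'} - w_0)| < \Lambda'\beta_k/3$; given the explicit rational form of $w_{\Lambda'}$ this reduces to showing that a finite collection of polynomial expressions in $(\bar r, m, q, \gamma)$ depends Lipschitz-continuously on $h^2$ near $h = 0$, which is routine but requires care to make the constants $\beta_k$ uniform in all $k$ simultaneously.
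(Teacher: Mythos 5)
Your proposal is correct and follows essentially the same route as the paper: swap the horizons via the conformal rescaling $g_1=\Omega^2 g_2$, identify the effective cosmological constant $\lambda=\Lambda h^2(r_c-\bar r)(\bar r-r_0)/\bar r^4$, normalise the constant term of $f_2$ by the choice of $\gamma$, verify the hypotheses on $w_\lambda$, and invoke \cref{mainresultthmblackhole}. The only notable divergence is a sub-step: where you deduce the root structure of $1+w_0$ by continuity from the finite roots of $f_2$, the paper computes the discriminant explicitly (finding $\bar r^2(\bar r+r_c)^2>0$), which settles existence and simplicity of the roots without a limiting argument.
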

\begin{proof}
This is simply an application of \cref{mainresultthmblackhole} to this particular example. Bearing in mind the discussion above, it suffices to check that $f_2$ satisfies the conditions required to apply the theorem.
\begin{align*}
    f_2(\rho)=-\frac{\Lambda}{3\bar{r}^4}\frac{h^2(r_c-\bar{r})(\bar{r}-r_0)}{\rho^2}\left(\rho+\frac{\gamma r_0}{\bar{r}-r_0}\right)\left(\rho-\frac{\gamma r_c}{r_c-\bar{r}}\right)\left(\rho+\frac{\gamma r_-}{h}\right)\left(\rho-\frac{\gamma r_+}{h}\right).
\end{align*}
We define a new ``cosmological constant" $\lambda$ by
\begin{align*}
    \lambda=\frac{\Lambda}{\bar{r}^4}(r_c-\bar{r})(\bar{r}-r_0)h^2
\end{align*}
and we find that $f_2$ takes the form:
\begin{align*}
    f_2(\rho)=-\frac{\lambda}{3}\rho^2+F_1\rho+F_0+\frac{F_{-1}}{\rho}+\frac{F_{-2}}{\rho^2}.
\end{align*}
We again use Vieta's formulae and the relations between the roots, $\bar{r}$ and $h$ to see
\begin{align*}
    F_1&=\frac{4\lambda}{3}\gamma\left(r_c^2+2r_c\bar{r}-2\bar{r}^2\right),\\
    F_0&=\gamma^2\left(\frac{\Lambda}{3} \frac{(r_c-\bar{r})(r_c+3\bar{r})}{\bar{r}^2}+\frac{\lambda}{3}\frac{7\bar{r}^2-6\bar{r}r_c-6r_c^2}{(r_c-\bar{r})(r_c+3\bar{r})}\right)=\gamma^2(a_0+\lambda b_0),\\
    F_{-1}&=2\gamma^3\left(\frac{\Lambda}{3}\frac{(\bar{r}^2-2\bar{r}r_c-r_c^2)}{\bar{r}^2}-\frac{\lambda}{3}\frac{\bar{r}^2-4\bar{r}r_c-2r_c^2}{(r_c-\bar{r})(r_c+3\bar{r})}\right)=2\gamma^3(a_1+\lambda b_1),\\
    F_{-2}&=\gamma^4\left(\frac{\Lambda}{3}\frac{r_c(r_c+2\bar{r})}{\bar{r}^2}-\frac{\lambda}{3}\frac{r_c(r_c+2\bar{r})}{(r_c-\bar{r})(\bar{r}-r_0)}\right)=\gamma^4(a_2+\lambda b_2).
\end{align*}
At this stage we are in a position to make a selection for $\gamma$
\begin{align*}
    \gamma=\frac{1}{\sqrt{a_0+\lambda b_0}}
\end{align*}
for $\lambda\in [0,\lambda_0]$ where $\lambda_0<-b_0/a_0$ is taken sufficiently small. We define
\begin{align*}
    \alpha_{\lambda}=\frac{4(r_c^2+2r_c\bar{r}-2\bar{r}^2)}{\sqrt{a_0+\lambda b_0}}
\end{align*}
and note that it is a continuous function of $\lambda$ and converges to a well-defined, finite limit as $\lambda\rightarrow 0$. Thus we have
\begin{align*}
    f_2(\rho)&=1+\frac{\lambda}{3}\alpha_{\lambda}\rho-\frac{\lambda}{3}\rho^2+\frac{2a_1+2\lambda b_1}{(a_0+\lambda b_0)^{3/2}}\frac{1}{\rho}+\frac{a_2+\lambda b_2}{(a_0+\lambda b_0)^2}\frac{1}{\rho^2}\\
    &=1+\frac{\lambda}{3}\alpha_{\lambda}\rho-\frac{\lambda}{3}\rho^2+w_{\lambda}(\rho).
\end{align*}
Now we check that $w_{\lambda}$ satisfies the conditions defined above. Some of these properties are clear: since $w_\lambda$ is a polynomial in $1/\rho$, naturally $w_{\Lambda}\in C^{\infty}(0,\infty)$ and there exists $\lambda_0>0$ such that for each $k\in\mathbb{N}_0$ and $\lambda\in [0,\lambda_0]$,
\begin{align*}
    \rho^k(\partial_\rho^kw_{\lambda})(\rho)\rightarrow 0 \quad \text{as} \quad \rho\rightarrow \infty.
\end{align*}
Furthermore it is clear that $1+w_0$ has finitely many zeroes in $(0,\infty)$ (0, 1 or 2), but we need to check that it has at least one. Since $r_c>\Bar{r}$, we see that $a_1<0$ and $a_2>0$ and hence any real roots of
\begin{align*}
    1+w_0(\rho)&=1+\frac{2a_1/a_0^{3/2}}{\rho}+\frac{a_2/a_0^2}{\rho^2}\\
    &=\frac{\rho^2+2a_1/a_0^{3/2}\rho+a_2/a_0^2}{\rho^2}\\
    &=\frac{\rho^2-2\mu\rho+\nu}{\rho^2}
\end{align*}
will be positive. So it suffices to consider the discriminant of the quadratic,
\begin{align*}
    \text{Disc}=\frac{4}{a_0^3}(a_1^2-a_0a_2).
\end{align*}
Since we are only interested in the sign of this quantity, we can multiply out positive factors (such as $\Lambda/(9\bar{r}^4)$) and consider
\begin{align*}
    \text{Disc}'&=(\bar{r}^2-2\bar{r}r_c-r_c^2)^2-r_c(r_c+2\bar{r})(r_c-\bar{r})(r_c+3\bar{r})\\
    &=\bar{r}^4+\bar{r}^2r_c^2+2\bar{r}^3r_c\\
    &=\bar{r}^2(\bar{r}+r_c)^2>0.
\end{align*}
Hence $1+w_0$ has two positive roots in $(0,\infty)$ and furthermore these are simple. The largest root is $\rho_e=\mu+\sqrt{\mu^2-\nu}$. Plugging this in to our expression for $w_0$, we see
\begin{align*}
    w_0(\rho)&=-\frac{1}{\rho^2}(2\mu\rho-\nu)\\
    &=-\frac{1}{\rho^2}\left(2\mu^2+2\mu\sqrt{\mu^2-\nu}-\nu+2\mu(\rho-\rho_e)\right)\\
    &=-\frac{1}{\rho^2}\left(\rho_e^2+2\mu(\rho-\rho_e)\right),
\end{align*}
which is negative for $\rho>\rho_e$. Now we check the final condition:
\begin{align*}
   \left|\rho^k\partial_\rho^k (w_\lambda-w_0)(\rho)\right|\le\frac{k!\left|F_{-1}(\lambda)-F_{-1}(0)\right|}{\rho^{k+1}}+\frac{(k+1)!\left|F_{-2}(\lambda)-F_{-2}(0)\right|}{\rho^{k+2}}.
\end{align*}
In a sufficiently small neighbourhood of $\lambda=0$, the $F_i$ are continuously differentiable with bounded derivatives and hence Lipschitz. Thus, there exists a constant $C>0$ such that
\begin{align*}
    |F_j(\lambda)-F_j(0)|\le \frac{\lambda}{3}C
\end{align*}
for $j=-1,-2$. After fixing some $0<t<\rho_e$ defined above, we see that
\begin{align*}
   \sup_{\rho\ge t}\left|\rho^k\partial_\rho^k (w_\lambda-w_0)(\rho)\right|&\le\frac{k!\left|F_{-1}(\lambda)-F_{-1}(0)\right|}{t^{k+1}}+\frac{(k+1)!\left|F_{-2}(\lambda)-F_{-2}(0)\right|}{t^{k+2}}\\
   &\le \frac{\lambda}{3}\beta_k
\end{align*}
for some $\beta_k>0$.
\end{proof}
\section{Acknowledgements}
The author would like to thank Claude Warnick for motivating the contents of this paper, for countless useful suggestions throughout its preparation and for reading through several drafts. He would also like to thank Maciej Dunajski and Jorge Santos for their helpful comments, and Owain Salter Fitz-Gibbon and Fred Alford for many productive discussions.\\\\
This work was funded by STFC DTP (2267798)
\printbibliography
\end{document}